\def\nobreakbefore{\relax
  \ifvmode\else
    \ifhmode
      \ifdim\lastskip > 0pt\relax
        \unskip\nobreakspace
      \fi
    \fi
  \fi
}
\let\oldcite\cite
\renewcommand\cite{\nobreakbefore\oldcite}
\newcommand*{\graybox}[2]{\raisebox{0.5em}{\colorbox{gray}{\parbox{#1}{\centering\color{white}#2}}}}
\def\dcmnumberstyle{}
\def\Vardef#1{%
    \expandafter\newcommand\csname #1\endcsname[1]{%
        \def\first{##1}%
        \def\second{*}%
        \def\third{}%
        \ensuremath{\mathsf{\MakeLowercase #1}\ifx\first\second\else\ifx\first\third\else[{##1}]\fi\fi}%
    }%
}
\def\KWdef#1{%
    \expandafter\newcommand\csname #1\endcsname{%
    \mathsf{\MakeLowercase #1}%
    }%
}
\g@addto@macro\bfseries{\boldmath}
\newtheorem{theorem}{Theorem}[section]
\newtheorem{corollary}[theorem]{Corollary}
\newtheorem{lemma}[theorem]{Lemma}
\newtheorem{observation}{Observation}[section]
\newtheorem{definition}[theorem]{Definition}
\newcommand\bc[1]{\left({#1}\right)}
\newcommand\cbc[1]{\left\{{#1}\right\}}
\newcommand\uppergauss[1]{\left\lceil{#1}\right\rceil}
\newcommand\ceil[1]{\uppergauss{#1}}
\newcommand{\Z}{{\mathbb{Z}}}
\newcommand{\Oh}{{\operatorname{O}}}
\newcommand{\vX}{\vec{X}}
\renewcommand{\Pr}{\mathbb{P}}
\newcommand{\SmallProb}[1]{{\operatorname{\Pr}[#1]}}
\newcommand{\Exp}[1]{{\operatorname{\mathbb{E}}\mathopen{}\left[#1\right]\mathclose{}}}
\newcommand{\Prob}[1]{{\operatorname{\Pr}\mathopen{}\left[#1\right]\mathclose{}}}
\newcommand{\Bin}{{\textrm{Bin}}}
\newcommand{\cC}{{\mathcal{C}}}
\newcommand{\FuncName}[1]{\textsc{#1}}
\newcommand{\Ranking}{\ensuremath{\FuncName{AssignRanks}_r}}
\newcommand{\AssignRanks}{\Ranking}
\newcommand{\SheriffElection}{\FuncName{ElectSheriff}}
\newcommand{\Sleep}{\FuncName{Sleep}}
\newcommand{\Deputize}{\FuncName{Deputize}}
\newcommand{\Labelling}{\FuncName{Labeling}}
\newcommand{\LoadBalance}{\FuncName{BalanceLoad}}
\newcommand{\ErrorDetection}{\ensuremath{\FuncName{DetectCollision}_r}}
\newcommand{\DetectCollision}{\ErrorDetection}
\newcommand{\CheckMessageConsistency}{\FuncName{CheckMessageConsistency}}
\newcommand{\UpdateMessages}{\FuncName{UpdateMessages}}
\renewcommand{\Reset}{\FuncName{Reset}}
\newcommand{\Un}{\ensuremath{\FuncName{Unlabeled}_{t}}}
\newcommand{\ElectLeader}{\ensuremath{\textsc{ElectLeader}_r}}
\newcommand{\PropagateReset}{\FuncName{PropagateReset}}
\newcommand{\TriggerReset}{\FuncName{TriggerReset}}
\newcommand{\StableVerify}{\ensuremath{\FuncName{StableVerify}_r}}
\newcommand{\FastLeaderElect}{\ensuremath{\FuncName{FastLeaderElect}}}
\newcommand{\NoResetBy}[1]{\ensuremath{\mathsf{NoResetBy}_{#1}}}
\newcommand{\Interactions}[1]{\ensuremath{\mathsf{Ints}_{#1}}}
\newcommand{\FieldName}[1]{\textup{\texttt{#1}}}
\newcommand{\hrank}{\FieldName{highBadge}}
\newcommand{\lrank}{\FieldName{lowBadge}}
\newcommand{\id}{\FieldName{id}}
\newcommand{\counter}{\FieldName{counter}}
\newcommand{\lbl}{\FieldName{label}}
\newcommand{\sleep}{\FieldName{sleep}}
\newcommand{\channel}{\FieldName{channel}}
\newcommand{\messages}{\FieldName{msgs}}
\newcommand{\observations}{\FieldName{observations}}
\newcommand{\rank}{\FieldName{rank}}
\newcommand{\tagVal}{\FieldName{signature}}
\newcommand{\resetCount}{\FieldName{resetCount}}
\newcommand{\delayTimer}{\FieldName{delayTimer}}
\newcommand{\role}{\FieldName{role}}
\newcommand{\qRanking}{\FieldName{qAR}}
\newcommand{\qStableVerify}{\FieldName{qSV}}
\newcommand{\pState}{\qRanking}
\newcommand{\countdown}{\FieldName{countdown}}
\newcommand{\probationTimer}{\FieldName{probationTimer}}
\newcommand{\qDetectCollision}{\FieldName{qDC}}
\newcommand{\qVerifying}{\qDetectCollision}
\newcommand{\vState}{\qVerifying}
\newcommand{\generation}{\FieldName{generation}}
\newcommand{\Coin}{\FieldName{Coin}}
\newcommand{\Coins}{\FieldName{Coins}}
\newcommand{\CoinCount}{\FieldName{CoinCount}}
\newcommand{\LeaderBit}{\FieldName{LeaderBit}}
\newcommand{\LeaderDone}{\FieldName{LeaderDone}}
\newcommand{\Identifier}{\FieldName{Identifier}}
\newcommand{\MinID}{\FieldName{MinIdentifier}}
\newcommand{\LECount}{\FieldName{LECount}}
\newcommand{\RoleName}[1]{\textup{\textsf{#1}}}
\newcommand{\resetting}{\RoleName{Resetting}}
\newcommand{\producing}{\RoleName{Ranking}}
\newcommand{\ranking}{\producing}
\newcommand{\verifying}{\RoleName{Verifying}}
\newcommand{\cepi}{\ensuremath{c_{\textrm{epi}}}}
\newcommand{\csleep}{c_{\sleep}}
\newcommand{\cprobation}{c_{\text{prob}}}
\newcommand{\crank}{\ensuremath{c_{\textrm{ranking}}}}
\newcommand{\safe}{\ensuremath{\mathcal{C}_{\text{safe}}}}
\newcommand{\echo}{\textsf{EchoChamber}}
\newcommand{\effn}{{\nicefrac{n^2}{r} \cdot \log n}}
\newcommand{\effnovern}{{\nicefrac{n}{r} \cdot \log n}}
\newcommand{\QPropagateReset}{Q_{PR}}
\newcommand{\QRanking}{Q_{AR}}
\newcommand{\QAssignRanks}{\QRanking}
\newcommand{\QStableVerify}{Q_{SV}}
\newcommand{\qNaughtRanking}{{q_{0,AR}}}
\newcommand{\qNaughtStableVerify}{{q_{0,SV}}}
\newcommand{\soundness}{soundness}
\newcommand{\QDetectCollision}{Q_{DC}}
\newcommand{\QVerify}{\QDetectCollision}
\newcommand{\qNaughtDetectCollision}{{q_{0,DC}}}
\title{
  A Space-Time Trade-off for Fast Self-Stabilizing \\ 
  Leader Election in Population Protocols}
\date{}
\author[1]{Henry Austin}
\affil[1]{Durham University}
\author[2]{Petra Berenbrink}
\affil[2]{University of Hamburg}
\author[1]{Tom Friedetzky}
\author[2]{Thorsten Götte}
\author[2]{Lukas Hintze}
\begin{document}

\maketitle
\thispagestyle{empty}

\begin{abstract}
\setstretch{1.2}

We consider the problem of self-stabilizing leader election in the population model by \citeauthor{DBLP:journals/dc/AngluinADFP06} (JDistComp '06). 
The population model is a well-established and powerful model for asynchronous, distributed computation with a large number of applications. 
For self-stabilizing leader election,
    the population of $n$ anonymous agents, interacting in uniformly random pairs, must stabilize with a single leader from \emph{any} possible initial configuration.

The focus of this paper is to develop time-efficient self-stabilizing protocols whilst minimizing the number of states. 
We present a parametrized protocol, which, for a suitable setting, achieves the asymptotically optimal time
$\Oh(\log n)$ using $2^{\Oh(n^2 \log n)}$ states (throughout the paper, ``time'' refers to ``parallel time'', i.e., the number of pairwise interactions divided by $n$).
This is a significant improvement over the previously best protocol $\textsc{Sublinear-Time-SSR}$ due to \citeauthor{DBLP:conf/podc/BurmanCCDNSX21} (PODC '21), which requires $2^{\Oh(n^{\log n} \log n)}$ states for the same time bound.
In general,  for $1\le r\le n/2$, our protocol requires $2^{O(r^2\log{n})}$ states and stabilizes in time $O((n\log{n})/r)$, w.h.p.; the above result is achieved for $r = \Theta(n)$.
For $r= \log^2 n$ our protocol requires only sub-linear time using only $2^{\Oh(\log^3 n)}$ states,
    resolving an open problem stated in that paper.
$\textsc{Sublinear-Time-SSR}$
    requires $O(\log(n) \cdot n^{1/(H+1)})$ time using
    $2^{\Theta(n^H) \cdot \log n}$ states for all
    $1\le H \le \Theta(\log n)$.

Similar to previous works, it solves leader election by assigning a unique rank from $1$ through $n$ to each agent.
The principal bottleneck  for self-stabilising ranking usually is to detect if there exist agents with the same rank. 
One of our main conceptual contributions is a novel technique for collision detection.

\setstretch{1.0}
\end{abstract}

\vspace{\fill}

{\footnotesize \noindent Petra Berenbrink and Thorsten Götte are supported by DFG grant no.\ 491453517. Petra Berenbrink and Lukas Hintze are supported by DFG grant no.\ 411362735. We thank the anonymous reviewers at PODC'25 for their valuable and in-depth feedback on an earlier version of this work.}

\newpage 

\section{Introduction} 

In this paper, we consider the problem of self-stabilizing leader election in the
population model introduced by \textcite{DBLP:journals/dc/AngluinADFP06}. 
The population model is a well-established and powerful  model for asynchronous, distributed computation with a large number of applications, ranging from rumour spreading in social networks \cite{DBLP:conf/podc/BerenbrinkCGMMR23}, chemical reactions \cite{DBLP:conf/soda/Doty14}, to gene regulatory networks \cite{DBLP:journals/ijon/SchilstraRAB02}.
In the model, the $n$ agents are (a priori) anonymous and have limited memory and computational capabilities.
Each agent is in one state out of a fixed set of states.
In each round, a pair of agents is chosen uniformly at random to interact;
    they update their states according to a fixed transition function.
The complexity of a population protocol is measured by counting the number of states that each agent can adopt and the number of pairwise interactions the protocol takes to achieve its goal. 
We often measure the latter in terms of \emph{time}, which is the average number of interactions of each agent, i.e., the number of interactions divided by $n$.

The objective of a leader election protocol is to mark \emph{exactly} one of
the $n$ anonymous agents as \emph{leader}. Leader election is perhaps one of
the most fundamental and important primitive operations for distributed
computation, and it is used to achieve coordination in many different settings.
In particular, population protocols with a leader are known to become much
faster \cite{DBLP:journals/dc/AngluinAE08a} and much more state-efficient
\cite{DBLP:conf/stacs/BlondinEJ18}. Thus, unsurprisingly, there is a plethora
of population protocols for leader election
\cite{DBLP:conf/icalp/AlistarhG15,
DBLP:conf/podc/BilkeCER17,
DBLP:conf/soda/AlistarhAG18,
DBLP:conf/soda/BerenbrinkKKO18,
DBLP:conf/soda/AlistarhAEGR17,
DBLP:conf/soda/GasieniecS18,
DBLP:journals/tpds/SudoOIKM20,
DBLP:conf/spaa/GasieniecSU19,
DBLP:conf/stoc/BerenbrinkGK20}. 
The most recent of these papers,
\cite{DBLP:conf/stoc/BerenbrinkGK20}, presents a protocol with $O(\log \log n)$
states that converges within $O(n\log n)$ interactions. This is
asymptotically optimal. Many of these protocols (including \cite{DBLP:conf/stoc/BerenbrinkGK20}) 
assume that all agents start in
some well-defined configuration. However, in large distributed systems, failures and
errors and the absence of well-defined initial configurations are the rule rather than the exception. 
In particular, the agents'
memory and, therefore, their states can be corrupted through all kinds of
outside influences. For this reason, we need to design and investigate protocols that not
only elect a leader efficiently but also resiliently and reliably. Here, the notion of
self-stabilization comes into play. Self-stabilization as introduced by
\textcite{DBLP:journals/cacm/Dijkstra74} is a powerful paradigm for reliable
protocols. The main feature of self-stabilization in population protocols is for them to reach a
\emph{correct} configuration  with probability one, starting from an
\emph{arbitrary} (possibly invalid, nonsensical) configuration. In our case, the set of correct configurations
comprises all those where exactly one agent is marked as a leader.

As our main result, we present the protocol $\ElectLeader$ that is both fast and self-stabilizing. It solves leader election with  $(\frac{n^2}{r} \log{n})$ interactions w.h.p.\footnote{We say that an event occurs \emph{with high probability} (or \emph{w.h.p.})\ if it occurs with probability at least $1 - \Oh(n^{-1})$.}\ using $2^{O(r^2\log{n})}$ states for all $r$ with $1\le r\le n/2$. 
With $r=\Theta(n)$, our protocol reaches the optimal number of $O(n\log n)$ interactions using $2^{\Oh(n^2 \log n)}$ states.
This is a significant improvement of the previous state-of-the-art, the protocol by \textcite{DBLP:conf/podc/BurmanCCDNSX21}, which requires $2^{n^{O(\log n)}}$ states for the same (asymptotic) number of interactions.
On the other end of the spectrum, for $r = \log^2(n)$ our protocol  requires only a sub-quadratic number of interactions using a sub-exponential number of $O(2^{\log^3 n})$  states, solving an open problem posed in \cite{DBLP:conf/podc/BurmanCCDNSX21}. 
For constant $r$ our bound matches the bound of \cite{DBLP:conf/podc/BurmanCCDNSX21} for silent protocols.

\subsection{Model and our Contribution}

Throughout this paper, we write $[i]=\{1,...,i\}$, and all logarithms are taken to be natural unless explicitly stated otherwise.
As is standard in the population protocol model, we consider \emph{populations} of $n$ indistinguishable \emph{agents}, each of which takes a state from a predefined state space $Q$. 
All agents are equipped with a fixed \emph{transition function} $\delta: Q\times Q\to Q \times Q$, such that when two agents interact, their states are updated according to $\delta$.
The pair $Q$ and $\delta$ define a protocol using $|Q|$ states.
The protocols we present in this work are \emph{strongly non-uniform}, meaning that $n$ is encoded in the transition function, and the protocol is different for every $n$.
This is, in fact, necessary for self-stabilizing leader election as shown by \textcite{DBLP:journals/mst/CaiIW12}.

A \emph{configuration} $C \in Q^n$ is a vector specifying the state of all agents in the population.
We assume a \emph{uniformly random scheduler}: in each step, a uniformly random pair of agents interacts, updating their states according to $\delta$.
A configuration $C'$ is \emph{reachable} from $C$ if there exists a sequence of pairs of agents such that by application of $\delta$ to the states of each pair in turn, $C$ is transformed into $C'$.
We call a  configuration  \emph{stable} for a given problem if it, and all configurations reachable from it, are correct (in our context, exactly one agent is marked as a leader). 
A population protocol with state space $Q$ is \emph{self-stabilizing} with respect to a set of configurations $C_L \subset Q^n$ if and only if it fulfills the following two properties:
\begin{itemize}[noitemsep,topsep=0pt]
    \item \emph{Closure:}\quad If $\vX_t \in C_L$ for some $t$, then $\vX_{t+1} \in C_L$.
    If additionally $\vX_{t+1} = \vX_{t}$, i.e., no agent changes its state, the protocol is \emph{silent}.
    \item \emph{Probabilistic Stabilization:}\quad For every $\vX_t \in Q^n$ we have $\lim_{\tau \to \infty} \SmallProb{\vX_{t+\tau} \in C_L}=1$.
\end{itemize}
Note that in contrast to other models, we cannot guarantee deterministic stabilization for population protocols due to the random interactions.

Recall that transition function $\delta$ is deterministic and cannot generate random numbers.
Nevertheless, for an easier presentation of our protocols, we assume that the agents have access to random values.
That is, in each interaction, we assume that an agent can sample a value $X \in [N]$ (almost) u.a.r.\ where $N$ is some integer,
    where by \emph{almost} u.a.r., we mean that $\Pr{[X = x]} \in [\frac{1}{2N},\frac{2}{N}]$ for all $x \in [N]$.
Fortunately, the required randomness can be simulated through known techniques 
that exploit the randomness of the scheduler without bloating the state space too much. 
We explain how our protocols can be derandomized in \autoref{apx:derandomization}.

\medskip

\noindent Our algorithm  $\ElectLeader$ implements a time-space trade-off and matches or improves upon the current state of the art in all regimes using at most $o(n^2)$ interactions.

\begin{theorem}[Main Theorem]
\label{thm:framework}
Assume $1 \leq r < n/2$.
    The protocol $\ElectLeader$  solves self-stabilizing leader election and ranking using $2^{\Oh(r^2 \log n)}$ states and  $\Oh(\nicefrac{n^2}{r} \log n)$ interactions w.h.p.
\end{theorem}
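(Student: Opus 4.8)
The plan is to realize $\ElectLeader$ as a self-stabilizing composition of three cooperating modules, each agent running the one picked out by its current \role: a \emph{ranking} module (\AssignRanks) that tries to hand every agent a distinct rank in $[n]$ and declares the unique rank-$1$ agent the leader; an $r$-parametrized \emph{collision-detection} module (\DetectCollision) that continually checks the current assignment for a repeated rank (equivalently, a missing one); and a \emph{reset} module (\Reset, \PropagateReset) that, once a collision is flagged, floods a reset signal through the population so that everyone restarts \AssignRanks{} from scratch. The modules are glued together by local timers --- a phase clock, together with counters that pace repeated resets --- providing the loose synchronization needed to delimit rounds and to hand control from one module to the next; crucially, no global clock is assumed. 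The claim then reduces to establishing the two defining properties of self-stabilization together with the stated interaction and state bounds.

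For \emph{closure}, I would show that every configuration whose ranks already form a permutation of $[n]$ and that carries no reset signal is stable: \DetectCollision{} is \emph{sound} --- on such a configuration it never raises a flag, deterministically --- so no reset is ever triggered, no agent ever leaves \AssignRanks, and the rank-$1$ agent stays the unique leader forever. For \emph{probabilistic stabilization} I would argue in two stages, starting from an arbitrary and possibly adversarially corrupted configuration. The first, ``control-plane'' stage shows that stray reset signals, nonsensical phase-clock values, inconsistent \role{} fields and the like get flushed out, and that within $\Oh(\effn)$ interactions, w.h.p., the population either already sits in a valid-ranking configuration or has just completed a global reset and re-enters \AssignRanks{} in loose lockstep. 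The second, ``data-plane'' stage shows that from such a fresh start \AssignRanks{} produces a genuine permutation w.h.p., and that \DetectCollision{} is \emph{complete} --- it uncovers any surviving duplicate, reaching its verdict within $\Oh(\effn)$ interactions, w.h.p. --- so a bad attempt triggers another reset; since each attempt succeeds w.h.p., $\Oh(1)$ attempts suffice w.h.p.\ and the cumulative interaction count stays within the bound.

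The quantitative core is the collision-detection lemma, which is where $r$ pays off. I would have the agents aggregate, by epidemic, a compact ``signature'' that certifies a block of $\Theta(r)$ ranks at a time; after $\Theta(n/r)$ such epidemic rounds --- each costing $\Oh(n\log n)$ interactions w.h.p.\ by the standard epidemic bound, with the phase clock marking round boundaries --- every agent has witnessed enough to certify the whole assignment as a permutation or to exhibit a collision, for a total of $\Oh((n/r)\cdot n\log n)=\Oh(\effn)$ interactions; soundness and completeness of the verdict are deterministic, and only its running time is subject to the w.h.p.\ qualifier. For the state bound, each agent keeps a constant number of fields, of which the dominant one stores this signature: $\Theta(r)$ block descriptions of $\Theta(r\log n)$ bits each, i.e.\ $\Theta(r^2\log n)$ bits in all, while timers and counters contribute only $\polylog(n)$ further bits, so the state space has size $2^{\Oh(r^2\log n)}$.

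I expect the main obstacle to be the control-plane stage: making the reset and synchronization layer itself self-stabilizing under arbitrary corruption --- proving it cannot deadlock, that a reset always propagates to completion, that the counters pacing retries never wedge the system, and that a fresh ranking attempt is never polluted by leftover state of an earlier one --- all while simultaneously guaranteeing that \DetectCollision{} stays sound (no false positive on a valid ranking, so that closure holds with probability exactly $1$, not merely w.h.p.) yet complete within the time budget w.h.p. Orchestrating the hand-offs between \AssignRanks, \DetectCollision{} and \Reset{} using only local timers, rather than a shared clock, is the delicate part the full proof must pin down.
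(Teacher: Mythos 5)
Your high-level decomposition (ranking, collision detection, reset, plus a closure/recovery split) matches the paper's skeleton, but the core of your argument rests on a different collision-detection mechanism whose crucial claims are not justified and, as stated, would fail. You propose that agents \emph{aggregate by epidemic} compact block certificates and that ``soundness and completeness of the verdict are deterministic.'' Under adversarial initialization this is not true for an aggregation scheme: the adversary can pre-fill every agent's aggregate so that the population already ``certifies'' a permutation (killing completeness) or already ``exhibits'' a missing rank on top of a perfectly correct ranking (a false positive). Flushing such garbage requires restarting the aggregation in well-delimited rounds, i.e.\ it presupposes a self-stabilizing phase clock/round structure --- exactly the piece you defer to the ``control-plane'' stage without giving the idea that makes it work. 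The paper avoids this circularity altogether: its $\DetectCollision$ is \emph{local and proof-carrying} (agents circulate signed messages, load-balance them, and an agent that meets a message bearing its own rank but inconsistent content has an immediate witness of a collision), so no global aggregation, no round structure and no phase clock are needed, and soundness is proved only from the \emph{correctly initialized} state $\qNaughtDetectCollision$ (\cref{prop: error detection}), not from arbitrary detector states.

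The second, and decisive, missing ingredient is what the paper adds precisely because soundness cannot be guaranteed from corrupted detector state over a \emph{correct} ranking: a false alarm may surface very late (the paper's example: a single stray message needs $\Omega(n^2\log n)$ interactions to reach its owner), and your plan only has \emph{full} resets, so such an alarm would destroy the already-correct ranking and blow the $\Oh(\nicefrac{n^2}{r}\log n)$ w.h.p.\ bound. The paper's soft-reset mechanism (generation counters modulo $6$) combined with probation timers distinguishes ``late alarm, probably a corrupted message system'' (re-initialize only $\DetectCollision$) from ``alarm shortly after a previous alarm, probably a genuine duplicate'' (hard reset); the safe set $\safe$ in \cref{lem:safe_set_is_safe} is correspondingly delicate, requiring reachability of the soft-reset sub-configuration, and the recovery argument (\cref{lem:recovery}) proceeds through the hierarchy $\cC_1\supset\cdots\supset\cC_5$ of roles, generations, probation timers and ranks. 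Your closure statement (``valid ranking, no reset signal, detector never flags'') quietly assumes the detector's auxiliary state is also clean and that it can never false-positive otherwise; without either proving that (which your epidemic aggregation cannot) or introducing something like the soft-reset/probation machinery, the proof of the w.h.p.\ time bound, and hence of the theorem, does not go through.
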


For $r = \Theta(n)$, our protocol solves self-stabilizing ranking in the optimal $\Oh(n\log{n})$ interactions w.h.p.\ using only $2^{\Oh(n^2\log{n})}$ states as opposed to the previous best of $2^{\Oh(n^{\log{n}})\log{n}}$ states in the same 
regime~\cite{DBLP:conf/podc/BurmanCCDNSX21}.
To emphasize, we reduce the bit complexity (i.e., the logarithm of the size of the state space) of time-optimal self-stabilizing leader election from super-polynomial to sub-cubic.
Furthermore, we extend the range of the time-space trade-off to include the regime with sub-exponential states.
This is summarized in the following result.

Our result gives an affirmatively answer an open question of \cite{DBLP:conf/podc/BurmanCCDNSX21} on the existence of a protocol solving self stabilizing leader election in sub-linear time with only a sub-exponential number of states.
With $r=\Theta(\log^2(n))$, our protocol solves self-stabilizing leader election in $\nicefrac{n^2}{\log{n}}$ interactions w.h.p.\ using only $2^{\log(n)^3}$ states.
In fact, our collision detection protocol on its own constitutes a positive answer to the weaker question also posed in \cite{DBLP:conf/podc/BurmanCCDNSX21} on the existence of an ``initialized collision detection'' protocol in the same regime, independently of its recent resolution in \cite{araya2024sublinear}.
Our overall trade-off requires the construction of bespoke ranking protocols with space-time trade-offs, which we include for completeness.

\subsection{Structure of the Paper}

The remainder of this paper is structured as follows:
In \autoref{sec:relatedwork} we present some related work and important previous results.
Then in \autoref{sec:highlevel}, we provide a non-technical high-level overview of our protocol and its main ideas.
 We then describe the main protocol and its sub-protocols in \cref{sec:overall_protocol,sec:verify} (as well as \autoref{apx:ranking}, due to fierce competition for space among the first ten pages of the paper).
For a hopefully more intuitive presentation and a better reading flow, we will omit many if not most technicalities from these descriptions --- those are instead presented across the various appendices. 
Finally, we provide the framework of the proof of \cref{thm:framework} in \cref{sec:analysis}.

\section{Related Work}
\label{sec:relatedwork}

In the following, we present an overview of related work in similar models of computation or with similar notions of stabilization.
We group these works into five categories.

\paragraph{Self-Stabilizing Leader Election} 
Recall that in this work, we aim to optimize the number of states that are required for protocols that solve self-stabilizing leader election in a given near-optimal time.
In this regime, to the best of our knowledge, only the aforementioned work of \textcite{DBLP:conf/podc/BurmanCCDNSX21} present a comparable protocol. 
However, other works on self-stabilizing leader election  nicely complement our work by  aiming to optimize the stabilisation time while using a given near-optimal number of states.
Notably, \textcite{DBLP:journals/mst/CaiIW12} present a self-stabilizing leader election protocol using only $n$ states and time $O(n^2)$ in expectation. 
\cite{DBLP:conf/podc/BurmanCCDNSX21} improves the stabilization time  from $O(n^2)$ to $O(n)$ in expectation and $O(n\log n)$ w.h.p. Their protocols require slightly more, namely $\Theta(n)$, states. 
Very recent work by \textcite{leszekArXiv} complement our work by optimizing  the state complexity even further.
They propose (among other results) an algorithm that uses only $n + O(\log n)$ states and stabilizes in time $O(n\log n)$ w.h.p.
Furthermore, with $n+1$ states (i.e., one additional state compared to \cite{DBLP:journals/mst/CaiIW12}) their protocol still stabilizes in time $O(n^{\nicefrac{7}{4}} \log^2 n)$.
Notably, all these protocols \cite{DBLP:journals/mst/CaiIW12,DBLP:conf/podc/BurmanCCDNSX21,leszekArXiv} are \emph{silent}, and all solve the problem via ranking, i.e., they assign a unique rank from $[n]$ to each agent.
In the silent setting, it is known that leader election requires at least $n$ states and time $\Omega(n)$ on expectation and $\Omega(n \log n)$ for a high probability result \cite{DBLP:conf/podc/BurmanCCDNSX21}.
This makes the algorithms of \cite{DBLP:conf/podc/BurmanCCDNSX21} and \cite{leszekArXiv} time-optimal and (almost) state-optimal in the silent regime.

\paragraph{Non Self-Stabilizing Leader Election}
There is a large number of results for non self-stabilizing leader election in the population protocol model where it is assumed that all agents are initially in the same state, usually something like ``potential leader'' \cite{DBLP:conf/icalp/AlistarhG15,DBLP:conf/podc/BilkeCER17,DBLP:conf/soda/AlistarhAG18,DBLP:conf/soda/BerenbrinkKKO18,DBLP:conf/soda/AlistarhAEGR17,DBLP:conf/soda/GasieniecS18,DBLP:journals/tpds/SudoOIKM20,DBLP:conf/spaa/GasieniecSU19,DBLP:conf/stoc/BerenbrinkGK20}. 
The most recent of these papers, \cite{DBLP:conf/stoc/BerenbrinkGK20}, presents a protocol  which uses $O(\log \log n)$ states and converges in optimal time $O(\log n)$ on expectation and time $O(\log^2 n)$  w.h.p.
Moreover, \textcite{burman_et_al:LIPIcs.DISC.2019.9} present a comprehensive study of the necessary and sufficient state space conditions for non-self-stabilizing ranking.

\paragraph{Non Self-Stabilizing Ranking}
 \textcite{DBLP:conf/opodis/GasieniecJLL21,gasieniec2021efficient} 
focus on  safe and silent ranking protocols. 
A ranking protocol is called \emph{safe} if no agent ever adopts more than one rank during the execution of the protocol, that is, the first rank assigned to an agent is final. 
Their first protocol assigns a rank  in $1,\ldots,(1+\epsilon) \cdot n$ requiring $O(n/\epsilon\log n)$ interactions w.h.p. The protocol uses $(2+\epsilon) n+O(n^{\alpha})$ states for an arbitrary constant $\alpha<1$.
For the optimal range of $[1, n]$ the authors present a protocol which needs time $O(n^2)$ in expectation and $O(n+5\sqrt{n}+O(n^c))$ states, where $c$ can be an arbitrarily small constant. They also present a parameterized version of the latter protocol.

\paragraph{Loosely Self-stabilizing Leader Election}
An interesting relaxation of self-stabilization are so-called  
\emph{loose-stabilization} guarantees introduced in \cite{DBLP:journals/tcs/SudoNYOKM12}, where it is required that the population
reaches a safe configuration within a relatively short time, starting from any configuration; after that, the unique leader must be sustained for a certain time referred to as \emph{holding time}, but not necessarily, or even in general, forever. 
\textcite{DBLP:conf/wdag/SudoEIM21} present a loosely-stabilizing leader election protocol with parameter $\tau\ge 1$. It selects a leader in expected time $O(\log n)$ and expected holding time $O(n^{\tau})$ using $O(\tau \log n)$ states.
This improves on earlier work on leader election in this model
\cite{DBLP:journals/tcs/SudoNYOKM12, DBLP:journals/tcs/SudoOKMDL20}.

\paragraph{Leader Election in Anonymous Networks}
Another related problem is that of assigning a rank to all nodes of an anonymous dynamic network
    modelled as a connected graph $G = (V, E)$ whose edges may change over time.
Time proceeds in synchronous rounds and in each round, a node $v \in V$ can send a message to all its neighbours.
The nodes have no identifiers but nodes can differentiate between their neighbours based on \emph{port numbers}.
\Textcite{DBLP:conf/icdcs/KowalskiM21} present a non-self-stabilizing leader election protocol with a running time that is a function on mixing time of a simple random walk on $G$.
\Textcite{DBLP:conf/focs/LunaV22, DBLP:conf/mfcs/LunaV24} consider the ``reverse'' problem of determining the number of nodes $n$ given a predetermined leader.
Note that algorithms for population protocols can usually be transferred to anonymous networks
\textcite{DBLP:conf/opodis/AlistarhGR21} with runtime that depends on graph properties (diameter or conductance).

\section{Overview of our Algorithm}
\label{sec:highlevel}

In this section, we describe the main ideas of our protocol $\ElectLeader$ and its subprotocols. 
More details (including pseudocode) can be found in \autoref{sec:overall_protocol} and \autoref{sec:verify}.
The fundamental difficulty of self-stabilizing leader election in the population model is detecting \emph{either} the absence of a leader \emph{or} the existence of duplicate leaders.
To this end, we adopt the de facto standard technique of de-anonymising the population by assigning a unique rank from $[n]$ to each agent and taking the agent with rank $1$ to be the leader.
Under this framing, \emph{both} the existence of multiple leaders \emph{and} the absence of a leader require two agents to have the same but supposedly unique rank. 
However, the process of detecting such a collision of ranks is the principal bottleneck in the previous state-of-the-art for fast self-stabilizing leader election~\cite{DBLP:conf/podc/BurmanCCDNSX21}.

At a very high level, our algorithm consists of three components: \emph{ranking}, \emph{collision detection}, and \emph{resetting}.
First, the ranking component attempts to assign a rank to all agents. 
If the population starts from a \emph{clean} configuration where no agent has a rank, the ranking component assigns a unique rank to each agent w.h.p. 
Once every agent has a rank, the collision detection component continuously checks for the existence of two agents with the same rank.
Every agent maintains a timer to ensure that all agents eventually transition to the collision-detection component, even if the ranking fails to rank the agents consistently.
In a legal configuration, no such collision is ever detected.
However, it is possible that either (a) the ranking component failed or (b) we started from an inconsistent configuration (e.g., a configuration with two leaders or an incorrectly ranked initial configuration).
Should such a collision ever be detected, the resetting component is triggered and all agents' states are reset to a clean initial configuration and the ranking component is started again.
 
While this overall approach has been employed before, we make three main contributions, which are described below in more detail. The first contribution is a novel message-based collision detection mechanism. 
This mechanism efficiently detects collisions, but it might also need to be reset after the protocol has already assigned unique ranks. 
To mitigate this, our second contribution is a technique for safely resetting only the collision-detection component after stabilisation has occurred. 
The third contribution is a space-time trade-off adoption of our approach.

\subsection{Collision Detection}
\label{sec:overview:collision}

Our first major technical contribution is a novel technique for collision detection (see \autoref{sec:error_detection}), which allows us to dispense with the costly history tree method of~\cite{DBLP:conf/podc/BurmanCCDNSX21}, which was the state-size bottleneck in their algorithm.
Recall that the collision detection runs after the ranking has concluded, so that we can assume that each agent has a rank (though it may not be unique).
The core difficulty in collision detection for SSLE is that the system must not generate false positives (i.e., report a collision when none exists), but must be fast and correct from any configuration.
We, therefore, require some form of proof of a collision.
The simplest such proof would be if two agents of the same rank interact, however, this will typically take $\Omega(n)$ time.
We resolve this by amplifying the number of objects between which we can detect collisions.
Loosely speaking, the agents pass around a large number of identifiable tokens which we call \emph{messages}, each of which contains the rank of the generating agent and a random string.
The message can only be modified by agents having the same rank as the one in the message. Furthermore, agents always keeps a copy of messages they generated. 
Whenever an agent encounters one of its own messages (i.e., one it has permission to modify), it modifies the message's string and records its new value.
If there is only a single agent of each rank, the string within each message will always exactly match that of the corresponding agent.
However, should there be multiple agents of the same rank, eventually one of them will modify a message which then conflicts with the record of the other agents.
The moment an agent sees such a conflict it can safely conclude that there exists another agent with the same rank.

In the following, we present more details of our messaging approach. 
As soon as the collision detection is started (which is after the ranking component has just concluded),
    every agent generates $M=\Theta(n^2)$ messages.\footnote{For technical reasons, the initial round of messages for each rank is actually hardcoded into the protocol's transition function, and messages are ``pre-mixed'' among agents.}
The messages are triples of the form $(\text{rank}, \text{ID}, \text{content})$.
The agents  populate the $\text{rank}$ entry with  their own rank and the $i$th of these messages receives $i$ as the $\text{ID}$ entry of the message. 
Hence, in a consistent configuration (where the ranks are distinct), all messages have a unique $(\text{rank}, \text{ID})$ pair.
The $\text{content}$ entry of the message is filled with a randomly generated string.
Furthermore, the agents keep local copies of these messages.
When an agent with rank $i$ encounters another agent that has a message whose rank component is $i$,
    it checks the message's content against its local copy (using the message's ID to identify which of its generated message it dealing with).
If there is no conflict,  it refreshes the content of the message, updating its local copy in turn (the old content is discarded).

There are two more technical details we have to explain.
The first one is the way that we generate the random message contents, and the second one is how agents exchange these messages during interactions.
Recall that the only source of randomness for a population protocol comes from interactions.
Roughly speaking, using the standard technique, an agent can generate one random bit with each interaction.
Hence, it is important to  limit the required randomness for generating message content.
Using new random bits every time a message is updated makes it impractical.
Instead, each agent maintains a \emph{signature} drawn from $[\Theta(n^5)]$ which requires $O(\log n)$ random bits (see \autoref{apx:derandomization}).
Whenever an agent needs a random string to use as message content, it simply uses its signature.
Agents resample the signature every $\Theta(\log{n})$ interactions in case two agents of the same rank were initialised with the same signature.
We implement this by storing, in addition to the signature in use, a partial ``stand-by'' signature which is extended by one random bit every interaction, and which replaces the signature in use when it is completed.
Now, for exchanging messages, we require a mechanism that ensures that refreshed messages are distributed quickly among the agents, so that agents have a sufficiently high probability of encountering a conflicting message.
Without such a mechanism, the messages would stay clumped together.
Hence, whenever two agents interact, they swap messages according to the following scheme: 
Consider all messages with matching rank and content independently, i.e., all messages that are (allegedly) assigned the same signature by the same agent.
For every such (rank, content) pair, the corresponding messages are swapped until both agents either hold the same number of these messages or the number differs by at most one.
Note that this can be achieved deterministically based on the IDs, so no random bits are required.
All messages are ordered by their ID, and one agent receives the first half and the other receives the second half.
For technical reasons, we must enforce that all agents hold $\tfrac{M}{n} \in \Theta(n)$ messages of each rank at all times.
Note that the mechanism sketched above maintains this invariant.

With the technical details clarified, we can now give a more detailed account now how the collision detection works.
Suppose every agent holds $\Theta(n)$ messages belonging to each rank at all times.
Thus, an agent updates $\Theta(n)$ messages to match its current signature whenever it is activated.
Our load-balancing mechanism then ensures that after $O(n\log{n})$ steps, all agents hold at least one of these $\Theta(n)$ newly updated messages w.h.p.
Since a signature only gets updated every $\Omega(n\log n)$ interactions, w.h.p., the updated messages can spread without their content being overwritten.
Now, assume there are two agents with the same rank. 
Within $O(n\log{n})$ steps, one of the two will adopt a signature distinct from both the other agent and the content of all messages beginning with their shared rank w.h.p.
This agent will update $\Theta(n)$ messages, such that they no longer match the stored local copies held by its competitor.
The competitor will encounter an updated message and, when it does not match its stored local copy, trigger a full reset within $O(n\log{n})$ rounds w.h.p.

Summing up, the mechanism sketched above will detect a rank collision within $O(n \log n)$ steps w.h.p.\ if it is properly initialized with $\Theta(n)$ messages of each rank at each agent and the messages' contents matching the values stored by the respective agents.
However, given that we may start from any initial configuration, we must also deal with errors in this component.
In particular, if no agents with duplicate rank exist, we must deal with the case
where 
the message system contains invalid messages despite the valid ranking. For example, when
all but one messages are valid, it typically takes $\Omega(n^2\log n)$
interactions for the single invalid message to reach the agent having the
message's rank. Only then will the agent trigger a reset, which is
potentially disastrous because a full reset of the protocol will destroy the
correct ranking. To address this, we introduce a \emph{soft} reset mechanism that is described in the next section.

\subsection{Soft Reset Mechanism}
\label{sec:overview:softreset}
In this section, we describe how to handle errors in the collision detection component described in the previous section.
First off, if an error is detected early in the algorithm's execution, say, within $O(n\log n)$ steps, we can safely perform a full reset. 
However, if it is detected quite late, we have to avoid a full reset unless there are indeed duplicate ranks, as doing so will destroy the existing ranking.
Fortunately, our algorithm is able to detect rank collision quite fast. Therefore, any error detected after a long time is likely to be caused by an error in the messaging system (which can occur due to an unfortunate initial state).
To exploit this, we introduce two interleaved mechanisms: the \emph{soft reset} mechanism, which resets only the message system, and the  \emph{probation} mechanism to decide whether to perform a soft or hard reset. We consider these to be our second major technical contribution (see \cref{sec:selfstab}) of this paper. It should be noted that our construction
applies ranking and collision detection as sub-protocols in a black
box manner. Hence, the approach can be applied to other problems.
    
First we describe our \emph{probation mechanism} which is implemented via a timer (counting down from an suitably chosen value) maintained at each agent.
The probation timer is reset whenever the agent either directly encounters inconsistent messages or is informed about a soft-reset as a consequence of another agent finding inconsistent messages.
Upon finding inconsistent messages, the agent checks its probation timer.
\begin{itemize}\itemsep0pt
\item If the probation timer is zero, w.h.p. a large amount of time has passed since any inconsistencies were detected.
Since our collision detection mechanism will detect a real collision quickly w.h.p., the agent can conclude with a high degree of confidence that any inconsistency between messages was likely the result of a bad initialisation of the collision detection system, as opposed to a real collision between ranks.
Consequently, the agent initiates a soft-reset which propagates through the population (see below).
\item If the probation timer is positive, only a short period of time has passed since either the beginning of the process or the detection of the last inconsistency.
In the former case it is safe to reset the entire protocol as doing so will not seriously impact the stabilisation time.
In the latter case some inconsistency survived the last soft reset which occurs with only low probability, unless there exists a genuine collision between agents.
Therefore, to be safe the agent calls for a hard reset.
\end{itemize}

Note that our soft reset has the following property. 
After a successful soft reset, inconsistent messages exist if and only if the ranking is incorrect.
Thus, if the ranking is correct after a successful soft reset no further inconsistencies will be encountered in future and the correct ranking will be maintained forever with certainty.
On the other hand, if the ranking is incorrect inconsistent messages will still exist and will be discovered again before the agents' probation timers run out w.h.p.

\medskip

It remains to describe our soft reset approach in more detail. If an agent
detects an inconsistent message and the probation timer is zero, then it triggers a
soft reset. First it discards all its messages for circulation
as well as its own local message copies.
Then it re-initializes its $\Oh(n^2)$ collision detection messages as though first entering the collision detection protocol. 
(Note that due to the specifics of the model, all this is done during one interaction).
This propagates as described in the next paragraph.

We require that a soft reset can be safely
executed even if the protocol has already stabilized with a correct ranking
(i.e., the soft reset will not change a correct ranking). If there are no
duplicate ranks but there exist inconsistent messages, the soft reset will be
triggered within $\Oh(n^2 \log n)$ interactions, the contradiction in the
message system will be repaired and the ranking will remain invariant. However,
we need to avoid agents who have not yet undergone the soft reset on their own
state space putting old (and potentially inconsistent) messages into
circulation. To this end we equip each agent with a \emph{generation} variable 
which can take on values
in $\Z_6$ (the integers modulo~6).
Agents which have the same generation interact as usual.
An agent wanting to trigger a soft reset increments its generation, discards all messages,
and re-initializes its $\Oh(n^2)$ collision detection messages.
When an agent interacts with another agent whose generation is
larger by one (modulo~$6$), it adopts the successor generation and triggers a
soft reset of its own messages.
This way, the successor generation will be spread via broadcast.
As we will see later, we are able to show that, beginning
from a correct ranking, only a single soft reset will occur, and we are further
able to force the agents to quickly agree on a single generation. Hence,
counting the generations modulo $6$ is sufficient.

\newcommand{\ColDet}{\ensuremath{\FuncName{ColDet}}}
\subsection{Space-Time Trade-off}
\label{sec:overview:tradeoff}
Here we describe our generic technique for obtaining a trade-off between space and time for population protocols detecting collisions in rankings. Our method can be regarded as a framework for a wide class of collision detection protocols.
The trade-off is controlled by a parameter $1\leq r \leq \frac{n}{2}$, where bigger values of $r$ lead to faster protocols taking more memory.
Assume we are given any black-box collision detection protocol $\ColDet$ defined on all populations of size $\leq \nicefrac{n}{2}$. We partition the rank space into $\lceil\frac{n}{r}\rceil$ \emph{groups} of size $\Theta(r)$ and implement $\ColDet$ in each group independently,
    ignoring interactions between agents belonging to different groups.
Recall that a collision means two agents having the same rank. Hence, collisions can be discovered by interactions only within those agents' group and each group can be treated as a distinct (sub)population.
This essentially allows us to use $\ColDet$ parametrised for a population of size $r$,
    which allows agents to use significantly fewer states than for a population of size $n$.
However, it comes at the cost of an $\tilde{\Oh}(\nicefrac{n}{r^2})$ multiplicative slow-down due to the fact that only interactions between agents in the same group are meaningful for that group.
However,  the slow-down is somewhat ameliorated since the absolute number of interactions to detect a collision among $r$ agents is less than that for $n$ agents.

To prevent the ranking component from becoming a bottleneck, and in order to be able to use the same ranking protocol throughout the parameter space,
    we design a new ranking component.
At a high level, the protocol first elects a \emph{sheriff}, which in turn selects $r$ \emph{deputies}. Each deputy has a pool of $\lceil\nicefrac{cn}{r} \rceil$ unique labels, where $c>1$ is a constant.
Labels are distributed consecutively and all deputies continuously broadcast the number of labels they have assigned.
Since the total number of labels is larger than $n$ by a constant factor a constant fraction of the deputies will have unused labels in their pool. This allows the component to assign labels to the last unlabelled agents quickly.
Once agents learn that $n$ labels have been assigned (due to constantly broadcasting the number of assigned labels), they can choose a rank. They do that  according to some pre-determined mapping depending on the number of labels assigned by each of the deputies and their own label.

Note that the  space-time trade-off is obtained as a larger number of deputies increases the rate of label assignment, but, in turn, each agent must store the number of labels assigned by each deputy.
We defer details to \autoref{apx:ranking}.

\section{Description of \texorpdfstring{$\ElectLeader$}{Our Main Protocol}}
\label{sec:overall_protocol}

In this section we describe our protocol $\ElectLeader$ (Protocol~\ref{pr:wrapper}) in more detail.
The protocol is parametrized by $1\leq r\leq \nicefrac{n}{2}$, as introduced in \cref{sec:overview:tradeoff}, where bigger values lead to a faster protocol with a (much) larger state space.
As we will see, the protocol is a thin wrapper around a few submodules which we present in the next two sections.

\paragraph{State Space}

\begin{wrapfigure}{R}{.63\textwidth}
\[
\arraycolsep=0.5pt
\begin{array}{@{}ccccccccccccccccccc@{}}
  & &
  \graybox{3.5em}{\mathstrut \footnotesize\resetting} & &
  \graybox{8.5em}{\mathstrut \footnotesize\ranking} 
  & &
   \graybox{5em}{\mathstrut\footnotesize\verifying}\\
\underbrace{\left\{\substack{\resetting, \\ \ranking, \\ \verifying}\right\}}_{\texttt{role}} & \times  \Big(  & 
\underbrace{Q_{PR}}_{\texttt{qPR}}
%}
& \uplus &
\underbrace{[\Theta(\nicefrac{n}{r} \log n)]}_{\texttt{countdown}}\times
\underbrace{Q_{AR}}_{\texttt{qAR}} & \uplus &
\underbrace{[n]}_{\texttt{rank}}\times\underbrace{Q_{SV}}_{\texttt{qSR}} & \Big) 
\\
\end{array}
\]
    \caption{An overview of $\ElectLeader$'s state space $Q$. The subspaces $Q_{PR}, Q_{AR}$, and $Q_{SV}$ and described in \cref{apx:PropagateReset}, \cref{apx:ranking}, and \cref{sec:verify} repectively.}
    \label{fig:state-space}
\end{wrapfigure}

We describe the agents' states using \emph{fields} notated using a monospace font.
In pseudocode, we refer to agent $i$'s field as $i.\FieldName{field}$.
An overview of $\ElectLeader$'s state space is given in \cref{fig:state-space}.
Each agent has a field $\role$ which can take one the three values $\resetting$, $\producing$, or $\verifying$; we call the agents in those roles \emph{resetters}, \emph{rankers}, and \emph{verifiers}.
Resetters execute the protocol $\PropagateReset$ described in \cref{apx:PropagateReset}. 
This protocol \emph{resets} the population to a well-defined \emph{clean} configuration.
Rankers execute $\Ranking$ described in \cref{apx:ranking} that assigns a unique rank from $[n]$ to each each agent (starting from a clean configuration).
Finally, Verifiers execute $\StableVerify$ described in \cref{sec:verify} that checks if the ranking is correct and resets the population if necessary.
We denote $\QPropagateReset$, $\QRanking$, and $\QStableVerify$ to be the local state (sub-)spaces of $\PropagateReset$, $\Ranking$, and $\ErrorDetection$ as described in the respective sections. 
In addition to these protocol specific state spaces, the agent have a field $\rank \in [n]$ that stores its presumed rank and field $\countdown \in [C_{max}]$ for a sufficiently large $C_{max} = \Theta(\frac{n}{r} \log n)$.
Further, we note that $\qNaughtStableVerify \in \QStableVerify$ is the local state with which $\StableVerify$ is initialized. 

Depending on its $\role$, only certain fields in the state space are used during interactions; these fields are called active.
Whenever an agent changes its role, all newly inactive fields are deleted at the end of the interaction.
For each role, the state space is (in general) obtained as the cross-product of the active fields' sets;
the total state space is obtained by the disjoint union of these cross-products (so that its size is the sum of the sizes of the various roles' state spaces).

\newcommand{\QDormant}{Q_{\mathrm{dormant}}}

\begin{algorithm}[!ht]{$\ElectLeader(u,v)$.\label{pr:wrapper}\label{pr:electleader}}
if $u.\role = \resetting$ then
  $\PropagateReset(u, v)$ /* non-resetters may become resetters, and resetters may become rankers */

if $u.\role = v.\role = \producing$ then
  $(u.\pState, v.\pState) \gets \Ranking(u.\pState, v.\pState)$ /* execute $\Ranking$ */
  $u.\countdown \gets u.\countdown - 1$; $v.\countdown \gets v.\countdown - 1$
  
for $(i,j) \in \cbc{(u,v),(v,u)}$ with $i.\role = \producing$ and ($i.\countdown = 0$ or $j.\role = \verifying$) do
  $i.\role \gets \verifying$ /* transform ranker into verifier when timer runs out, or via epidemic */
  $i.\rank \gets i.\qRanking.\rank$; $i.\qStableVerify \gets \qNaughtStableVerify$

if $u.\role = v.\role = \verifying$ then
  $\textsc{StableVerify}(u, v)$ /* detect collisions, trigger soft/full reset, perhaps making some agents resetters */
\end{algorithm}

\paragraph{Algorithm}
We give the pseudocode of $\ElectLeader$ as Protocol~\ref{pr:wrapper}.
Agents execute one of the submodules depending on their role,
    and store that submodules' local state in a corresponding field:
    Resetters execute $\PropagateReset$,
        rankers execute $\Ranking$,
        and verifiers execute $\StableVerify$.
Rankers additionally use the $\countdown$ field
    to handle the transition between $\Ranking$ and $\StableVerify$
    in a way that prevents $\Ranking$ from stalling indefinitely:
Whenever two rankers interact, they decrement their $\countdown$ by one, and if an agent has a $\countdown$ of $0$ they are forced to change their role from $\ranking$ to $\verifying$.
The rank computed by $\Ranking$ is stored in a subfield $\rank$ of $\QAssignRanks$;
    when agents change their role from $\ranking$ to $\verifying$, it is copied over into the global $\rank$ variable.
\newcommand{\qReset}{\FieldName{qPR}}

The protocol $\PropagateReset$ was presented in \cite{DBLP:conf/podc/BurmanCCDNSX21}. For reasons of completeness we 
  state the protocol in 
\cref{apx:PropagateReset} but mostly use it as a black-box.
We require only the following:
An agent wanting to reset the entire population \emph{triggers} a reset,
    with the resulting configuration being a \emph{triggered} configuration.
Agents which are not resetting (i.e., rankers and verifiers) are \emph{computing}.
Further, we say an agent is \emph{dormant} if it has successfully reset its state to some predefined initial state and is waiting to restart (which will happen within $\Oh(n \log n)$ interactions w.h.p.).
A configuration is \emph{fully dormant} if all its agents are dormant.
When the first agent starts computing again from a dormant configuration, we call this an \emph{awkening} configuration.
The protocol ensures that from a triggered configuration, we quickly reach a fully dormant and then an awakening configuration.

Due to space constraints, we present the $\Ranking$ and the formal statement of its properties in \cref{apx:ranking}.
In short we require that it is silent, uses $2^{\Oh(r^2\log{n}})$ states and assigns unique ranks within $\Oh(\nicefrac{n^2}{r}\log{n})$ interactions of a fully dormant configuration rank w.h.p.
The details for $\StableVerify$ are given in the next section.

\section{The Module \texorpdfstring{$\StableVerify$}{StableVerify}}
\label{sec:verify}
\label{sec:selfstab}

In this section, we describe the module $\StableVerify$, which verifies that the ranking of the agents is correct and handles the full and soft resets. 
The protocol consists two parts, a submodule $\DetectCollision$ (Protocol~\ref{pr:error}), which handles the collision detection itself and a wrapper protocol that decides whether to perform a full or a soft reset (Protocol~\ref{pr:framework}).
Here, we focus on the technical description, additionally filling in the details we left out in the non-technical description in \cref{sec:highlevel}.
This section is structured as follows: We first describe the wrapper protocol (using $\DetectCollision$ as a black-box) in the remainder of this section and then $\DetectCollision$ in \cref{sec:error_detection}.

\begin{wrapfigure}{R}{.45\textwidth}
\centering \[\arraycolsep=0.5pt
\begin{array}{@{}ccccccccccccccccccc@{}}
{\graybox{10em}{\mathstrut\textsc{\footnotesize Wrapper Fields}} }
 & &
  {\graybox{6.25em}{\mathstrut{\footnotesize\DetectCollision}} }\\
 \underbrace{\Z_6}_{\texttt{generation}} \times   

\underbrace{[\Theta(\nicefrac{n}{r} \log n)]}_{\probationTimer}
%}
& \times &
\underbrace{\left[2^{\Oh(r^2\log{r})}\right]}_{\texttt{qDC}} 
\end{array}
\]
    \caption{An overview of $\StableVerify$'s state space. The field $\qDetectCollision \in \QDetectCollision$ is a state from $\DetectCollision$ described in \cref{sec:error_detection}.}
    \label{fig:state-space-stable-verify}
\end{wrapfigure}

\paragraph{State Space}
The local state space $\QStableVerify$ of $\StableVerify$ consists of two parts, the state space of the subprotocol $\DetectCollision$ and two additional fields for the generation number and the probation timer; see \cref{fig:state-space-stable-verify}.
Furthermore, the agents have access to the field $\rank$ from the main state space $Q$ described in \cref{sec:overall_protocol}. 
Note that this field is not changed by the protocol.
We denote $\QDetectCollision$ as the local state (sub-)space of $\DetectCollision$.
We use $\top \in \QDetectCollision$ to denote the state $\DetectCollision$ uses to indicate that a collision was found.
For all further details concerning this state space, we refer to \cref{sec:error_detection}.
And $\probationTimer \in [P_{max}]$ where $P_{max} = \cprobation \cdot \frac{n}{r} \cdot \log n$.

\paragraph{Algorithm}
The module, given in pseudocode as Protocol~\ref{pr:framework},
    works broadly as how we describe the soft reset mechanism in \cref{sec:overview:softreset}.
In each interaction, agents decrement their $\probationTimer$ (lines 1--2),
    which we use to decide whether to perform full or soft resets.
Then, if (and only if) two interacting agents have equal $\generation$, they update $\qDetectCollision$ using the $\DetectCollision$ module (lines 3--4; cf.\ \cref{sec:error_detection}):
    they then check if it generated the special error state $\top \in \QDetectCollision$, indicating that a collision (either in the ranking or an message inconsistency) was detected (though it may be either genuine, or only due to adversarial misinitialization),
        and initiate a soft or full reset depending on $\probationTimer$ (lines 5--8).
Whenever an agent in generation $i$ meets an agent in generation $i+1\;(\text{mod}\; 6)$ while its $\probationTimer$ is $0$, it adopts the other agent's generation
    and resets its other local state (lines 10--12).
However, if this happens while its $\probationTimer$ is positive,
    or if the other agent's generation differed by more than one,
    the agent triggers a full reset (line 13).

\begin{algorithm}[!ht]{$\StableVerify(u,v)$, where $u$, $v$  have $\role=\verifying$. \label{pr:framework}}
$u.\probationTimer \gets \max\cbc{0, u.\probationTimer - 1}$ /* decrement probation timers if possibile */
$v.\probationTimer \gets \max\cbc{0, v.\probationTimer - 1}$
if $u.\generation = v.\generation$ then /* same-generation verifiers execute $\ErrorDetection$ */
  $(u.\vState, v.\vState) \gets \ErrorDetection((u.\rank, u.\vState), (v.\rank, v.\vState))$
  for $i \in \cbc{u, v}$ with $i.\vState = \top$ do /* error detected */
    if $i.\probationTimer = 0$ then /* $i$ not on probation: soft reset */
      $i.\generation \gets i.\generation + 1 \pmod{6}$; $i.\vState \gets \qNaughtDetectCollision$; $i.\probationTimer \gets P_{max}$
    else $\TriggerReset(i)$ /* $i$ on probation: full reset */
  return

for $(i,j) \in \cbc{(u,v),(v,u)}$ with $i.\probationTimer = 0$ and $i.\generation \equiv j.\generation - 1 \pmod{6}$ do
  $i.\generation \gets j.\generation$; $i.\vState \gets \qNaughtDetectCollision$; $i.\probationTimer \gets P_{max}$ /* soft reset via epidemic */
  return

$\TriggerReset(u)$ /* generations are different, but no soft reset permissible: hard reset */
\end{algorithm}

\subsection{\texorpdfstring{$\ErrorDetection$}{ErrorDetection}}

\label{sec:error_detection}

\newcommand{\smap}{\ensuremath{\mathcal{S}}}

We now describe the submodule $\ErrorDetection$, 
    which implements the collision detection mechanism laid out in \cref{sec:overview:collision} while incorporating the space-time trade-off through the methods described in \cref{sec:overview:tradeoff}.
For the latter, recall that we use a partition of the rank space $[n]$ into $\ceil{\nicefrac{n}{r}}$ groups of size $\Theta(r)$.
Such a partition, in particular one where the groups have sizes in $\{r/2, \ldots, r\}$, always exists.
We assume that one such partition is encoded directly into the transition function via access to a function $\smap : [n] \to 2^{[n]}$ mapping ranks to the group containing the rank.
For any agent $u$, we define $r_u = |\smap(u.\rank)|$ as the size of $u.\rank$'s group and $u.\rank_r \coloneqq u.\rank \pmod{r_u}$ as the position of $u$'s rank within~$\mathcal{S}(u)$.

We outline the submodule here; more details, including the pseudocode for submodules used by it, are given in \cref{apx:error_detection}.

\paragraph{State Space \texorpdfstring{\&}{and} Notation}

\begin{wrapfigure}{R}{0.52\textwidth}
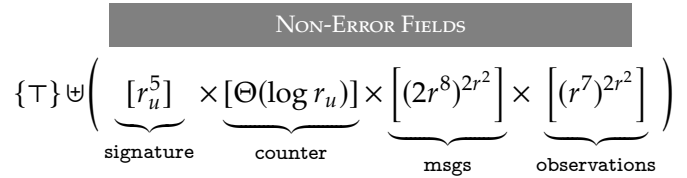

\centering
\[
\arraycolsep=0.5pt
\begin{array}{@{}ccccccccccccccccccc@{}}
&{\graybox{17.5em}{\mathstrut\textsc{\footnotesize Non-Error Fields}} }\\
\{\top\} &\uplus \bigg(  \underbrace{[r^5_u]}_{\texttt{signature}}  \times  
\underbrace{[\Theta(\log r_u)]}_{\texttt{counter}}
%}
\times
\underbrace{\left[(2r^8)^{2r^2}\right]}_{\texttt{msgs}} \times 
\underbrace{\left[(r^7)^{2r^2}\right]}_{\observations} \bigg)
\end{array}
\]
    \caption{An overview of $\DetectCollision$'s state space. The overall state complexity is $2^{\Oh(r^2\log{r})}$.}
    \label{fig:state-space-detect}
\end{wrapfigure}
The state space of $\DetectCollision$  consists of the error state $\top$ and additional non-error states having following fields:
    $\tagVal$ is the message content for the messages an agent sends out,
    $\counter$ times the interactions until $\tagVal$ is refreshed,
    $\messages$ stores the messages as a sparse array of values in $[r_u^5]$ indexed by $(i,j)\in \smap(u.\rank) \times [2r_u^2]$,
    and $\observations$ stores the local message copies  the agent generated as a dense array of $2r_u^2$ values in $[r_u^5]$.  $\DetectCollision$ requites read-only access to the field $\rank \in [n]$ (\cref{sec:overall_protocol}).

Upon starting the protocol or being soft-resetted, the agent assumes the state $\qNaughtDetectCollision$, where $\tagVal$, $\counter$ and all values in $\observations$ are $1$, while $\messages$ gets initial value 1 for all indices $\smap(u)\times \{2(u.\rank_r -1)n+1,...,(2 u.\rank_r) n\}$ and $\bot$ otherwise.
For technical reasons, we restrict the state space such that if an agent $u$ has rank $i$ and stores the message indexed by $(i,j)$, the contents of $u.\messages [(i,j)]$ match $u.\observations[j]$.
This could only be violated if the process were initialized in this manner, so we can circumvent it by definition.

\paragraph{Algorithm}
$\DetectCollision$ works as follows.
Only agents belonging to the same group have non-trivial interactions (lines 1--2):
First, they check if they have either the same rank or a duplicate circulating message,
    and raise an error in that case (lines 3--4).
Next, they use three subroutines whose pseudocode is given in \cref{apx:error_detection}:
$\CheckMessageConsistency$ (Protocol~\ref{pr:consistency}) checks if any of contents of circulating messages in an agents' $\messages$ array is inconsistent with the contents stored in an $\observations$ array,
    and raises an error if this is the case.
$\UpdateMessages$ (Protocol~\ref{pr:updatemsg}) updates the contents of circulating messages having an agents' $\rank$ to its current $\tagVal$,
    updating the corresponding $\observations$ as well.
$\LoadBalance$ (Protocol~\ref{pr:loadbalance}) swaps messages between $u$ and $v$ such that, for each pair $i\in \smap(u)$ and $k \in [r^5]$, $u$ and $v$ both receive half of the messages governed by $i$ and with content $k$ held by $u$ and $v$.

\begin{algorithm}[!ht]{$\ErrorDetection(u,v)$\label{pr:error}}
if $\smap(u.\rank) \neq \smap( v.\rank)$
  return

if $u.\rank =v.\rank$ or ($\exists (i,j) \in \smap(u.\rank)\times [r_u^2]:$ $u.\messages [(i,j)]\neq \bot$ $\wedge\ v.\messages [(i,j)]\neq \bot$) then
  set local state of $u,v$ to $\top$ /* Rank shared, or two copies of same circulating message: obvious collision */
$\CheckMessageConsistency(u,v)$;  $\CheckMessageConsistency(v,u)$ /* may raise error */
$\UpdateMessages(u,v)$; $\UpdateMessages(v,u)$
$\LoadBalance(u,v)$
\end{algorithm}
\paragraph{Requirements} We give a formal description and proof of the properties that $\StableVerify$ requires of $\ErrorDetection$ in \cref{prop: error detection}. Informally, $\StableVerify$ requires that (i)~from any configuration with a rank collision, $\ErrorDetection$ produces a $\top$ symbol within $\Oh(\nicefrac{n}{r}\log{n})$ time, and (ii)~if all agents are initialized from $\qNaughtDetectCollision$ and the ranking is correct, $\ErrorDetection$ will never generate a $\top$.

\section{Overview of our Analysis}\label{sec:analysis}

In this final section we present the proof of our main result, \autoref{thm:framework},
    which serves double duty as an overview of the overall proof structure.
We then state its constituent lemmas (save for a strictly technical result) and defer their proofs of the lemmas to appendices.

\begin{proof}[Proof of \cref{thm:framework}]
Our statement on the state space follows immediately from \autoref{fig:state-space}.
The correctness proof has three components: 
    By \cref{lem:safe_set_is_safe} there exists a ``safe'' set of configurations $\safe$ which is stable and forms a strict subset of those configurations with a correct ranking.
    By \cref{lem:correctness_from_reset}, the protocol reaches a configuration in $\safe$ within $\Oh(\nicefrac{n^2}{r}\log n)$ interactions of a full reset being triggered, w.h.p.
    Finally, by \cref{lem:recovery},
        from an arbitrary configuration, the protocol either triggers a reset or reaches a configuration in $\safe$ within $\Oh(\nicefrac{n^2}{r})$ interactions, w.h.p.

Thus, from an arbitrary configuration w.h.p. our protocol reaches either a configuration in $\safe$ or a fully dormant configuration within $\Oh(\nicefrac{n^2}{r}\log{n})$ interactions.
If it reaches a fully dormant configuration, then it in turn reaches a configuration in $\safe$ within $\Oh(\nicefrac{n^2}{r}\log{n})$ w.h.p.
Once the protocol reaches a configuration from $\safe$ it remains correct forever.
\end{proof}

The definition of the set $\safe$ of safe configurations is somewhat delicate and involves internals of $\StableVerify$ as well as the reachability of configurations by $\DetectCollision$; we state it here, with the full proof of safety deferred to \cref{apx:lem:safe_set_is_safe}.

First we need some additional definitions.
For a configuration $C$ of $\ElectLeader$ where all agents are verifiers,
    let $\tilde{C}$ be the configuration of $\ErrorDetection$ obtained by keeping only the $\rank$ and $\ErrorDetection$'s state $\qVerifying$.
Further, let $\tilde{C}_0$ be the configuration obtained from $\tilde{C}$ by replacing $\ErrorDetection$'s state to $\qNaughtDetectCollision$, for each agent.
Finally, let $\tilde{C}_{i+}$ be the configuration obtained from $\tilde{C}$ by replacing $\ErrorDetection$'s state to $\qNaughtDetectCollision$ only for those agents which have $\generation = i$ in $C$.
\begin{lemma}[Safety]\label{lem:safe_set_is_safe}
    The set of configurations $\safe \subset \mathcal{C}$ where both of the following properties hold is \emph{safe} (i.e., it is impossible for the protocol to leave the set, and the protocol's output is correct):
    \begin{enumerate}[label=(\alph*),noitemsep,nolistsep]
        \item All agents are verifiers and the ranking is correct.
        \item There is an $i$ such that all $\generation$ fields are either $i$ or $i+1 \pmod{6}$,
            all agents with $\generation = i$ have $\probationTimer = 0$, and $\tilde{C}_{i+}$ is reachable from $\tilde{C}_0$ by $\ErrorDetection$.
    \end{enumerate}
\end{lemma}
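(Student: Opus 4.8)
The goal is to show that $\safe$ is closed under the protocol's transitions and that configurations in $\safe$ are correct. Correctness is immediate: property (a) says all agents are verifiers with a correct ranking, so there is exactly one agent with rank $1$, hence exactly one leader. The real content is closure, which I would prove by showing that in any configuration of $\safe$, no interaction can ever produce a $\top$ symbol in $\DetectCollision$, no $\TriggerReset$ is ever called, and properties (a) and (b) are preserved.

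First I would argue that no full reset is ever triggered. Looking at Protocol~\ref{pr:framework}, $\TriggerReset$ is called only in two places: (line 8) when an agent sees $\vState = \top$ while its $\probationTimer$ is positive, and (line 13) when two interacting agents have generations differing by something other than $0$ or $+1 \pmod 6$. The latter is excluded directly by property (b), since all generations lie in $\{i, i+1 \pmod 6\}$, so any two generations differ by $0$ or $\pm 1 \pmod 6$; and in the $\pm 1$ case the agent with the smaller generation either executes the epidemic soft-reset branch (if its $\probationTimer = 0$) or... — here I need property (b)'s clause that all agents with $\generation = i$ have $\probationTimer = 0$, combined with the observation that an agent with $\generation = i+1$ meeting an agent with $\generation = i$ falls into neither the soft-reset branch (that requires the agent to be one generation \emph{behind}) nor line 13 (it just does nothing in the final for-loop and falls through)... so I would need to check the pseudocode carefully: an agent at generation $i+1$ meeting one at generation $i$ has $i+1 \equiv (i) + 1$, so from the perspective of the generation-$i$ agent, $j.\generation \equiv i+1 \equiv i.\generation + 1$, and since that agent has $\probationTimer = 0$ by (b), it takes the soft-reset-via-epidemic branch and returns — no full reset. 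The subtle point is that this soft reset must preserve membership in $\safe$: the agent moves from generation $i$ to $i+1$ and resets its $\vState$ to $\qNaughtDetectCollision$; I must show the new configuration still satisfies (b), now possibly with a different set of generation-$i$ agents but with the reachability condition $\tilde C_{i+}$-from-$\tilde C_0$ maintained. This is where the reachability clause does the work: resetting one more agent's $\DetectCollision$ state to $\qNaughtDetectCollision$ (moving it into the "generation $i+1$" pool) only makes $\tilde C_{i+}$ "closer" to $\tilde C_0$, and I'd formalize that $\tilde C_{i+}$ reachable from $\tilde C_0$ implies the version with one more agent reset is also reachable from $\tilde C_0$ (trivially, since we can just not perform that agent's updates).

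Second, and this is the heart of the argument, I would show no $\top$ is ever generated. By property (a) the ranking is correct (all ranks distinct), so the "obvious collision" check in Protocol~\ref{pr:error} lines 3--4 can only fire via the duplicate-circulating-message clause; and $\CheckMessageConsistency$ can only fire on a genuine message/observation mismatch. The key invariant to maintain is: in every reachable configuration from $\safe$, restricting to same-generation agents, the $\DetectCollision$ sub-configuration among the generation-$(i{+}1)$ agents is reachable from $\qNaughtDetectCollision$-everywhere (this is essentially property (b)'s reachability clause, lifted to the sub-protocol), and likewise the generation-$i$ agents are in states reachable from $\qNaughtDetectCollision$. Since only same-generation agents interact non-trivially in $\DetectCollision$, the two generation-classes evolve as independent runs of $\DetectCollision$ on correctly-ranked populations started from $\qNaughtDetectCollision$ — and by requirement (ii) on $\ErrorDetection$ (stated in \cref{prop: error detection}), such a run never generates $\top$. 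The only wrinkle is generation-$i$ agents getting "absorbed" into generation $i+1$ via the epidemic, which merges a prefix of one independent run into the other; I'd need the merged configuration to still be reachable from all-$\qNaughtDetectCollision$, which follows because the absorbed agent is freshly reset to $\qNaughtDetectCollision$ at the moment of absorption, so we can simulate the merged run by running the generation-$(i{+}1)$ agents as before and only now "activating" the newly-reset agent.

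The main obstacle I expect is the bookkeeping around the reachability clause in (b): I must show that the property "$\tilde C_{i+}$ is reachable from $\tilde C_0$ by $\ErrorDetection$" is genuinely an \emph{invariant} under all four branches of Protocol~\ref{pr:framework} — ordinary same-generation $\DetectCollision$ steps, soft reset via $\top$ (line 7), soft reset via epidemic (line 11), and no-op — and simultaneously that it's strong enough to rule out $\top$. The cleanest way is probably to prove a slightly stronger invariant: the full $\ErrorDetection$ configuration $\tilde C$ (not just $\tilde C_{i+}$) decomposes as an interleaving of two $\ErrorDetection$-reachable-from-$\qNaughtDetectCollision$ runs on the two generation classes, plus the generation assignment satisfies (b); then closure is a branch-by-branch check and no-$\top$ follows from requirement (ii). I would also need the small observation that $\probationTimer$ never spontaneously becomes positive for a generation-$i$ agent (it only resets to $P_{max}$ on a soft reset, which simultaneously moves the agent to generation $i+1$), so clause (b)'s "$\probationTimer = 0$ for generation-$i$ agents" is preserved.
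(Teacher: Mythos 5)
Your first half is sound and matches the paper: line 13 is never reached because the only mixed pair is (generation $i$, generation $i+1$), the lagging agent has $\probationTimer=0$ by (b) and so executes the soft-reset-via-epidemic branch and returns; and that branch preserves (b) because the advanced agent's $\vState$ becomes exactly the $\qNaughtDetectCollision$ by which it was already represented in $\tilde{C}_{i+}$, so in fact $\tilde{C}'_{i+}=\tilde{C}_{i+}$ (your ``just do not perform that agent's updates'' justification is not a valid general argument, since omitting an agent's interactions also changes its partners' states, but it is not needed here).

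The genuine gap is in what you call the heart of the argument. You claim that no $\top$ is ever generated, supported by an invariant that the generation-$i$ agents' $\DetectCollision$ states are themselves reachable from the all-$\qNaughtDetectCollision$ configuration. Neither claim follows from membership in $\safe$: the definition constrains only the \emph{virtual} configuration $\tilde{C}_{i+}$, in which generation-$i$ agents are \emph{replaced} by $\qNaughtDetectCollision$, while their actual message/observation contents are arbitrary (adversarially initialized). Two generation-$i$ agents can therefore produce $\top$ despite the correct ranking (duplicate circulating message, or a message inconsistent with an observation); this is precisely why (b) requires $\probationTimer=0$ for generation-$i$ agents and why the soft-reset mechanism exists, so that such a $\top$ takes the line-7 soft reset rather than the line-8 hard reset. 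Your strengthened invariant fails already at the initial configuration of $\safe$, so proving closure of the smaller set you describe would not prove the lemma, and the blanket ``no $\top$'' claim is false. The paper's proof instead argues: any $\top$ can only arise between two generation-$i$ agents, both of which have zero probation timers, so only a soft reset occurs and $\tilde{C}'_{i+}=\tilde{C}_{i+}$; an interaction of two generation-$(i+1)$ agents is literally an interaction of $\tilde{C}_{i+}$, which is reachable from $\tilde{C}_0$, so soundness (\cref{prop: error detection}) forbids $\top$ there (this also keeps the reachability over the full population with a correct ranking, exactly as the soundness statement requires, rather than over per-generation sub-populations as in your decomposition); and mixed-generation pairs never execute $\ErrorDetection$ at all.
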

\begin{proof}[Proof Sketch]
    This follows from careful case analysis of all possible transitions. 
    For ranks to change or an agent to change role requires a full reset to occur. 
    This in turn requires an agent with a positive probation timer to change generation, and so it must be started by an agent in generation $i+1$. 
    However, from the correctness of $\DetectCollision$ we obtain that no $\bot$ can be generated by an agent in generation $i+1$.
\end{proof}
The following lemma captures the fact that our protocol quickly reaches a safe configuration after a reset; we give its full proof in \cref{apx:correctness after reset}:

\begin{lemma}[Correctness after a full reset]\label{lem:correctness_from_reset}
    $\ElectLeader$ reaches a configuration in $\safe$ within $\Oh(\nicefrac{n^2}{r} \log n)$ interactions w.h.p.\ starting from a triggered configuration.
\end{lemma}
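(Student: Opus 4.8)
The plan is to trace the protocol through the four phases guaranteed by the reset machinery—triggered, fully dormant, awakening, and then the ranking/verification run—and to concatenate the time bounds for each, keeping a union bound over the high-probability events. First I would invoke the black-box properties of $\PropagateReset$ (stated in \cref{sec:overall_protocol}): from a triggered configuration, within $\Oh(n\log n)$ interactions w.h.p.\ we reach a fully dormant configuration, and then an awakening one, so that all agents simultaneously (up to $\Oh(n\log n)$ slack) enter $\Ranking$ from a clean configuration. At this point every agent is a ranker with $\countdown$ set to its initial value $C_{\max} = \Theta(\tfrac{n}{r}\log n)$, local state $\qNaughtRanking$, and no rank yet assigned.

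Next I would apply the properties of $\Ranking$ from \cref{apx:ranking}: starting from a fully dormant configuration it assigns \emph{unique} ranks in $[n]$ within $\Oh(\tfrac{n^2}{r}\log n)$ interactions w.h.p., and it is silent. The $\countdown$ mechanism in Protocol~\ref{pr:wrapper} must be calibrated so that $C_{\max}$ is large enough that, w.h.p., no ranker's $\countdown$ hits $0$ before $\Ranking$ has finished—this uses a Chernoff bound on the number of ranker–ranker interactions a fixed agent participates in over $\Oh(\tfrac{n^2}{r}\log n)$ total interactions (which is $\Oh(\tfrac{n}{r}\log n)$ in expectation, matching the order of $C_{\max}$ with a suitable constant). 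Once $\Ranking$ has completed and every agent's $\countdown$ subsequently reaches $0$ (or is pulled along by the $\verifying$ epidemic), all agents transition to $\verifying$ with the correct unique ranking copied into $\rank$, each initializing $\StableVerify$ with $\generation$ arbitrary-but-irrelevant-yet, $\probationTimer = P_{\max}$, and $\qDetectCollision = \qNaughtDetectCollision$.

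It then remains to argue that this configuration lies in—or reaches—$\safe$ within the budget. Here I would use part (ii) of the $\ErrorDetection$ requirements: since the ranking is correct and every agent is initialized from $\qNaughtDetectCollision$, $\ErrorDetection$ never produces $\top$, so no soft or full reset is ever triggered henceforth; in particular no agent ever increments its $\generation$ or calls $\TriggerReset$. The only remaining obstacle is the generations: agents enter $\verifying$ at slightly different interaction counts, and until they do, the ``generation'' field content is whatever garbage survived the reset—so I must check that $\PropagateReset$'s clean configuration fixes all $\generation$ fields to a common value $i$ (it does, since the clean state is fully specified), whence condition (b) of \cref{lem:safe_set_is_safe} holds with that $i$ (all $\probationTimer$ positive actually, but trivially $\tilde C_{i+} = \tilde C_0$ is reachable from itself). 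Thus after the last agent becomes a verifier—within $\Oh(\tfrac{n}{r}\log n)$ further interactions of the first one doing so, w.h.p., by a coupon-collector argument on the $\verifying$ epidemic and the $\countdown$ countdowns—we are in $\safe$, and by \cref{lem:safe_set_is_safe} we stay there. Summing the phase bounds $\Oh(n\log n) + \Oh(\tfrac{n^2}{r}\log n) + \Oh(\tfrac{n}{r}\log n) = \Oh(\tfrac{n^2}{r}\log n)$ and union-bounding the $\Oh(1)$ many w.h.p.\ events completes the proof.

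The main obstacle I anticipate is the careful bookkeeping of the transition window between $\Ranking$ and $\StableVerify$: one must ensure simultaneously that $\Ranking$ has genuinely \emph{finished} (produced globally unique ranks) before \emph{any} agent's $\countdown$ forces it into $\verifying$, yet that \emph{all} agents do make the transition promptly afterwards, and that the mixed-role intermediate configurations (some rankers, some verifiers) neither corrupt the ranking nor spuriously trigger a reset—this is exactly where the epidemic clause ``or $j.\role = \verifying$'' in Protocol~\ref{pr:wrapper} and the $\countdown$ calibration interact, and it is the part of the argument most sensitive to the constants hidden in $\Theta(\tfrac{n}{r}\log n)$.
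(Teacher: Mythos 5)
Your plan follows essentially the same route as the paper's proof: $\PropagateReset$ to a fully dormant configuration in $\Oh(n\log n)$ interactions, then \cref{lem:fast-and-silent-ranking} together with \cref{lem: full population technical} to argue that $\Ranking$ finishes and becomes silent before any $\countdown$ can expire, and finally the soundness of $\DetectCollision$ (the verifier--verifier interactions form a valid execution from the all-$\qNaughtDetectCollision$ configuration on the correct ranking) to show the role transition completes without any reset and lands in $\safe$. The only blemishes are cosmetic: the verifier epidemic costs $\Theta(n\log n)$, not $\Oh(\nicefrac{n}{r}\log n)$, interactions (absorbed by the final bound anyway), and condition~(b) of \cref{lem:safe_set_is_safe} should be instantiated with the predecessor generation if fresh verifiers carry a positive $\probationTimer$ --- neither of which changes the argument.
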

\begin{proof}[Proof Sketch]
    W.h.p. the following occurs.
    From a full reset, we reach a fully dormant configuration within $\Oh(n\log{n})$ interactions.
    Then, from such a configuration $\AssignRanks$ produces a correct ranking and becomes silent long before the $\countdown$ runs out after $\Theta(\nicefrac{n^2}{r}\log{n})$ interactions.
    When this occurs, agents move to become $\verifying$, keeping their correct ranks and will never raise a $\top$.
    The agents are all verifiers, the ranking is correct, all agents are in generation $0$ and so we are in $\safe$.
\end{proof}

Our recovery lemma is the most complex of the result listed here.
Because of this, we present its proof, itself using sub-lemmas, here,
    with the statement of said sub-lemmas and their proofs given in \cref{apx:lem:recovery}.
\begin{lemma}[Recovery]\label{lem:recovery}
    Starting from an arbitrary configuration, $\ElectLeader$ triggers a reset or reaches configuration in $\safe$ within $\Oh(n\log{n})$ interactions w.h.p.
\end{lemma}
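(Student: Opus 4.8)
The plan is to establish the dichotomy by following the quantities that propagate through the population by broadcast, since an epidemic seeded at a single agent saturates the whole population within $\Oh(n\log n)$ interactions w.h.p.---precisely the target timescale. I would first fix this epidemic bound as a black box and then identify the three contagious pieces of $\ElectLeader$'s state: an in-progress reset (via $\PropagateReset$), the $\verifying$ role (Protocol~\ref{pr:electleader}, lines~5--7, where contact with a verifier upgrades a ranker), and the $\generation$ field (Protocol~\ref{pr:framework}, lines~10--12, where a successor generation overwrites its predecessor). Each of these, once present at even one agent, reaches all agents within $\Oh(n\log n)$ interactions w.h.p.

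First I would clear the two contagious branches. If the configuration already contains a resetter, a reset has been triggered and we are in the first alternative. Otherwise, if any agent is a verifier, the role epidemic makes every agent a verifier within $\Oh(n\log n)$ interactions w.h.p., so I may assume a pure-verifier configuration. Running the same epidemic argument on $\generation$, either two agents whose generations differ by at least two meet---forcing a full reset by Protocol~\ref{pr:framework} (line~13), hence the first alternative---or the successor-overwrites-predecessor dynamics collapse all generations into a single pair $\{i,i+1 \pmod 6\}$ within $\Oh(n\log n)$ interactions w.h.p.

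At this point the configuration already satisfies the purely structural half of the $\safe$ conditions of \cref{lem:safe_set_is_safe}: all agents are verifiers, and the $\generation$ fields occupy a single successor pair. It remains to certify the substantive half---that the ranking is correct and that the surviving $\ErrorDetection$ state is reachable from the clean start $\tilde{C}_0$ (so that $\tilde{C}_{i+}$ lies in the reachable set and no spurious $\top$ can ever arise)---together with the probation condition on the base generation. The intended argument is that \emph{failing} to reach the second alternative within the window forces the first: any genuine rank collision or surviving message inconsistency is, by property~(i) of $\ErrorDetection$ (\cref{prop: error detection}), eventually exposed as a $\top$, which fires a soft reset off probation and a full reset on probation (Protocol~\ref{pr:framework}, lines~5--8), while any leftover generation conflict fires a full reset directly.

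The main obstacle is that this last certification runs against the clock: the detection guarantee of $\ErrorDetection$ and the two timers $\countdown$ and $\probationTimer$ all live on the $\Theta(\nicefrac{n}{r}\log n)$-time (i.e.\ $\Theta(\nicefrac{n^2}{r}\log n)$-interaction) scale, which is slower than the $\Oh(n\log n)$ interactions allotted. I expect the resolution to split by how a non-safe configuration betrays itself \emph{fast}: a sufficiently large rank collision, or two identical circulating messages, causes a same-rank (or duplicate-message) meeting within $\Oh(n\log n)$ interactions w.h.p., tripping the obvious-collision check of Protocol~\ref{pr:error} (line~3) and hence a reset; and the pure-ranker configurations, in which the $\verifying$ epidemic has no seed, must be argued separately---either a small $\countdown$ already present seeds a verifier quickly, or the configuration is handled as a fresh $\AssignRanks$ invocation so that the reachability bookkeeping of $\tilde{C}_{i+}$ can be invoked. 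Tightening exactly which non-safe configurations are guaranteed to trigger a reset within $\Oh(n\log n)$ interactions, rather than merely within the slower detection window, is the crux of the proof.
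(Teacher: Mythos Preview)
Your timescale worry is not a technical wrinkle to be tightened but the actual obstruction: the stated $\Oh(n\log n)$ bound is a typo. The paper's own proof of this lemma concludes with $\Oh(\nicefrac{n^2}{r}\log n)$ interactions, and that is also the bound invoked when the lemma is used in the proof of \cref{thm:framework}. For general $r$ the $\Oh(n\log n)$ bound is simply false, and your attempts to rescue it cannot succeed. Two concrete counterexamples: (i) take all agents to be rankers with $\countdown = C_{max} = \Theta(\nicefrac{n}{r}\log n)$; no verifier appears and no reset can be triggered until some $\countdown$ hits zero, which takes $\Theta(\nicefrac{n^2}{r}\log n)$ interactions; (ii) take all agents to be verifiers, same generation, $\probationTimer=0$, with $\DetectCollision$ cleanly initialised from $\qNaughtDetectCollision$, but with exactly two agents sharing a rank---then no duplicate-message shortcut fires, and one must wait for the detection guarantee of \cref{prop: error detection}, again on the $\Theta(\nicefrac{n^2}{r}\log n)$ scale. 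Neither configuration reaches $\safe$ nor triggers a reset within $\Oh(n\log n)$ interactions.

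With the bound corrected to $\Oh(\nicefrac{n^2}{r}\log n)$, the paper's argument is a filtration rather than an epidemic cascade: it defines nested sets $\cC = \cC_0 \supset \cC_1 \supset \cdots \supset \cC_5 \subset \safe$ (successively: no resetters; all verifiers; single generation; all $\probationTimer=0$; ranking correct) and shows that from $\cC_i \setminus \cC_{i+1}$ the protocol either triggers a reset or enters $\cC_{i+1}$ within at most $\Oh(\nicefrac{n^2}{r}\log n)$ interactions w.h.p. Your epidemic intuition does drive the $\cC_0\to\cC_1$ and $\cC_2\to\cC_3$ steps (which indeed run in $\Oh(n\log n)$), but even there the generation epidemic is not the clean ``successor overwrites predecessor'' you describe: by Protocol~\ref{pr:framework}, line~10, an agent advances to a successor generation only when its $\probationTimer$ is zero---otherwise the interaction triggers a full reset---so the paper needs the more careful coupling-to-max-epidemic argument of \cref{lem:recovery:generations}.
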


\begin{proof}
    We define a hierarchy of nested subsets of the set $\cC$ of all configurations, such that any configuration will either quickly move to a smaller subset or a reset is triggered w.h.p.
    Formally, we define a hierarchy of configuration sets $\cC \eqqcolon \cC_0 \supset \cC_1 \supset \cC_2 \supset \cC_3 \supset \cC_4 \supset \cC_5$, where each set contains exactly those configurations where the following properties are true:
    \begin{itemize}[noitemsep]
        \item $\cC_1$: No agent is a resetter (i.e., all agents are rankers or verifiers).
        \item $\cC_2$: Additionally to $\cC_1$, no agent is a ranker (i.e., all agents are verifiers).
        \item $\cC_3$: Additionally to $\cC_2$, all $\generation$ values are equal.
        \item $\cC_4$: Additionally to $\cC_3$, all agents have $\probationTimer = 0$.
        \item $\cC_5$: Additionally to $\cC_4$, all $\rank$ fields are all distinct, i.e., the ranking is correct.
    \end{itemize}
    
    \Cref{lem:recovery:resetting} shows that from any configuration in $\cC_0 \setminus \cC_1$, the protocol will reach a configuration in $\cC_1$ within $\Oh(n \log n)$ interactions w.h.p.
    \Cref{lem:recovery:producing,lem:recovery:generations,lem:recovery:probation} then subsequently show that from any configuration in $\cC_i \setminus \cC_{i+1}$ (for $i \in \{1, 2, 3\}$),
        the protocol will trigger a reset or reach a configuration in $\cC_{i+1}$ within $\Oh(\nicefrac{n^2}{r})$ interactions w.h.p.\ 
        (in fact in $\Oh(n \log n)$ interactions for $i=2$).
    Lastly, \cref{lem:recovery:incorrect_output} shows that from any configuration in $\cC_4 \setminus \cC_5$, the protocol will trigger a reset within $\Oh(\nicefrac{n^2}{r} \log n)$ interactions w.h.p.
    As $\mathcal{C}_5\subset \safe$, this taken together means that from any configuration within $\Oh(\nicefrac{n^2}{r} \log n)$ either the population enters a safe configuration or a reset is triggered within $\Oh(\nicefrac{n^2}{r} \log n)$ interactions w.h.p.\ from \emph{any configuration}, as claimed.
\end{proof}

\newpage

\appendix 

\section{Toolbox}
\label{apx:toolbox}
\begin{lemma}[Technical Lemma for counting interactions in the population as a whole]
\label{lem: full population technical}
    For $t(n)\geq 4n\log{n}$ for all sufficiently large $n$, with probability $1-o(n^{-2})$ no agent will be involved in more than $\nicefrac{\alpha t(n)}{n}$ or fewer than $\nicefrac{t(n)}{\alpha n}$ within $t(n)$ interactions of the overall population for all $\alpha>7$.
\end{lemma}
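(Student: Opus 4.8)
The plan is a direct concentration argument: bound the interaction count of a single fixed agent by a Chernoff bound, then union over the $n$ agents. Fix an agent $v$ and let $X_v$ denote the number of the $t:=t(n)$ interactions in which $v$ participates. Since in each step the scheduler picks an unordered pair of distinct agents uniformly and independently, $v$ participates in a given step with probability $(n-1)/\binom{n}{2}=2/n$, independently across steps; hence $X_v\sim\Bin(t,2/n)$ and $\mu:=\Exp{X_v}=2t/n$, which satisfies $\mu\ge 8\log n$ by the hypothesis $t(n)\ge 4n\log n$.

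First I would discharge the quantifier over $\alpha$. The interval $[\,t/(\alpha n),\,\alpha t/n\,]$ is strictly increasing (as a set) in $\alpha$, so for a fixed realization of $X_v$ there exists $\alpha>7$ with $X_v\notin[\,t/(\alpha n),\,\alpha t/n\,]$ if and only if $X_v< t/(7n)$ or $X_v> 7t/n$. (For ``if'', pick $\alpha$ strictly between $7$ and $X_vn/t$, respectively between $7$ and $t/(X_vn)$; for ``only if'', any $\alpha>7$ has $t/(\alpha n)<t/(7n)\le X_v\le 7t/n<\alpha t/n$.) Thus the failure event of the lemma is contained in $\bigcup_v\bigl(\{X_v\ge 7t/n\}\cup\{X_v\le t/(7n)\}\bigr)$, and it suffices to bound the two tail probabilities for a single $v$ and multiply by $n$.

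For the upper tail, $7t/n=\tfrac72\mu$, so the multiplicative Chernoff bound $\Prob{X\ge(1+\delta)\mu}\le\bigl(e^{\delta}/(1+\delta)^{1+\delta}\bigr)^{\mu}$ with $1+\delta=7/2$ gives $\Prob{X_v\ge 7t/n}\le\bigl(e^{5/2}/(7/2)^{7/2}\bigr)^{\mu}\le (0.16)^{\mu}\le n^{-10}$, using $\mu\ge8\log n$. For the lower tail, $t/(7n)=\tfrac1{14}\mu$, so $\Prob{X\le(1-\delta)\mu}\le e^{-\delta^2\mu/2}$ with $1-\delta=1/14$ gives $\Prob{X_v\le t/(7n)}\le e^{-(13/14)^2\mu/2}\le e^{-0.43\mu}\le n^{-3.4}$, again using $\mu\ge 8\log n$. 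Summing over the (at most) $n$ agents bounds the probability that the conclusion fails by $n\cdot\bigl(n^{-10}+n^{-3.4}\bigr)=o(n^{-2})$.

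There is no serious obstacle here; the argument is routine. The only points needing care are (i) the reduction of the ``for all $\alpha>7$'' statement to the two boundary tails, and (ii) bookkeeping of constants: the per-agent bound must be $o(n^{-3})$ so that it survives the union-bound loss of a factor $n$, and it is precisely the constant $4$ in $t(n)\ge 4n\log n$ (together with the constant $7$ in the statement) that forces the weaker, lower-tail exponent strictly above $3$. Larger values $t(n)>4n\log n$ need no separate treatment: the relative deviations $\tfrac72$ and $\tfrac1{14}$ are scale-invariant while $\mu$ only grows, so all the above estimates only improve.
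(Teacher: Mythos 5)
Your proposal is correct and follows essentially the same route as the paper's proof: model each agent's interaction count as $\Bin(t(n),\nicefrac{2}{n})$, apply multiplicative Chernoff bounds to both tails, and union-bound over the $n$ agents. Your explicit reduction of the ``for all $\alpha>7$'' quantifier to the boundary tails at $\alpha=7$ is a small extra bit of care (the paper instead bounds the tails for generic $\alpha>7$ directly), but the substance of the argument is the same.
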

\begin{proof}
    Each pair of interacting agents is chosen independently of all others,
        and each agent appears in an interaction with probability $\frac{2(n-1)}{n(n-1)}=\nicefrac{2}{n}$.
    So the number $X_i$ of interactions had by a specific agent $i$ has distribution $\Bin(t(n), \nicefrac{2}{n})$, with expectation $\nicefrac{2t(n)}{n}$.
    Applying the weak multiplicative Chernoff Bound,
    \begin{align}
    \Prob{X_i>\frac{\alpha t(n)}{n}}&\leq e^{-\frac{(\alpha-1)^2t(n)}{(2+\alpha-1)n}}
    \leq e^{\frac{36t(n)}{8n}}
    \leq n^{-4}
    \end{align}
    Then taking a union bound over such an event,
    \begin{align}
        \Prob{\bigvee_{i \in [n]}X_i>\frac{\alpha t(n)}{n}}&\leq \sum_{i \in [n]}\Prob{X_i>\frac{\alpha t(n)}{n}}
        \leq n^{-2}
    \end{align}
    For the other direction,
    \begin{align}
        \Prob{X<\nicefrac{t(n)}{\alpha n}}&\leq e^{\frac{-(1-\nicefrac{1}{2\alpha})^2t(n)}{n}}
        \leq e^{-\frac{9t(n)}{10n}}
        \leq n^{-\nicefrac{7}{2}}
    \end{align}
    Then applying a union bound as before we obtain the probability of this happening for any agent is $o(n^{-2})$.
\end{proof}
Finally, our protocol depends heavily on the use of \emph{one-way epidemics} to quickly spread information throughout the population.
In a two-way epidemic, one (or more) agents are \emph{marked}, and on every interaction between a marked and an unmarked agent, both become marked.
For our purposes, we simply require the following result, which we will use throughout. The statement is immediately from Lemma 2.9 of \cite{DBLP:journals/corr/abs-1907-06068}.
\begin{lemma}
\label{lemma:epidemic}
    There exists a constant $\cepi <7$ such that if $n$ epidemics are started at time $t$, there exists $t_1\leq \cepi n\log{n}$ such that all epidemics will infect all agents after $t_1$ interactions w.h.p.
\end{lemma}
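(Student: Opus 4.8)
The plan is to reduce the statement to the behaviour of a \emph{single} one-way epidemic and amplify by a union bound. First I would note that one-way epidemics are monotone under a natural coupling: for any fixed sequence of interacting pairs, enlarging the initial set of marked agents can only make every agent become marked (weakly) earlier. Hence it suffices to bound the time for an epidemic started from a single marked agent to infect the entire population, since each of the $n$ epidemics in the statement starts from at least one marked agent.

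For the single-epidemic bound I would invoke Lemma~2.9 of \cite{DBLP:journals/corr/abs-1907-06068}, which is exactly such a statement and supplies the explicit constant: an epidemic started from one marked agent infects all $n$ agents within $\cepi n \log n$ interactions with probability $1 - \Oh(n^{-2})$ for some $\cepi < 7$. For a self-contained argument one would run the standard two-phase analysis. In the growth phase, as long as the number $i$ of infected agents is at most $n/2$, a uniformly random interaction is productive with probability at least $i/n$; batching interactions into blocks of $\Theta(n)$ and applying Chernoff bounds shows the infected count roughly doubles per block and hence reaches $n/2$ within $\Oh(n \log n)$ interactions w.h.p. In the saturation phase, with at least $n/2$ agents infected, each still-uninfected agent is infected on any given interaction with probability $\Omega(1/n)$, so a coupon-collector-style union bound over the at most $n/2$ remaining agents finishes the epidemic within a further $\Oh(n \log n)$ interactions w.h.p.

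Finally I would take a union bound over the $n$ epidemics: each fails to complete within $\cepi n \log n$ interactions with probability $\Oh(n^{-2})$, so with probability $1 - \Oh(n^{-1})$ all of them finish within $\cepi n \log n$ interactions, which is the claimed high-probability bound. The $n$ epidemics run on the same interaction sequence and are therefore correlated, but the union bound is insensitive to this; and the starting time $t$ plays no role since the scheduler is memoryless, so we may take $t = 0$ without loss of generality.

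The one real difficulty, were a fully self-contained proof required, is pinning down the explicit constant $\cepi < 7$: the per-step waiting times in the growth phase are geometric and heavy-tailed, and the very first few infections occur before the infected count is large enough for Chernoff-type concentration, so they have to be handled separately (or one argues directly about the sum of all the geometric increments, whose mean is $(1+o(1)) n \log n$). As \cite{DBLP:journals/corr/abs-1907-06068} already carries this out, the proof reduces to that citation together with the monotonicity reduction and the union bound above.
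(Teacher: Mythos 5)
Your proposal is correct and matches the paper's treatment: the paper gives no proof of its own and simply states that the lemma follows immediately from Lemma~2.9 of \cite{DBLP:journals/corr/abs-1907-06068}, which is exactly the citation you reduce to. Your added details --- the monotonicity coupling to a single-source epidemic and the union bound over the $n$ epidemics (each failing with probability $\Oh(n^{-2})$, so all succeed w.h.p.) --- are just an explicit spelling-out of what the paper's ``immediately'' elides, not a different route.
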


\section{Derandomizing the Transition Function}
\label{apx:derandomization}
Recall that we made the simplifying assumption that the agent are able to sample random values.
This was only done for an cleaner description of the protocols as typically one assumes a deterministic transition function.
In the following, we will briefly sketch how derandomization technique that can be used to adapt our protocols such that they only require the randomization of the scheduler.
In our protocols, on several occasions, an agent must sample a value from $[r^5]$ uniformly at random.
Here, $r > 0$ is a fixed parameter encoded in the transition function. 
However, the agents do not need to sample such a value in every iteration.
In the worst case, an agent needs to sample such a value every $\alpha\log{r}$ of its own interactions. 
The value $\alpha$ is a large tunable constant determined in the analysis.
\begin{lemma}[Derandomization of $\delta$]
\label{lemma:derandomization}
Let $\mathcal{P}$ be a population protocol with a (probalistic) transition function $\delta$ that fulfills the following two properties: 
\begin{enumerate}
    \item  each agent $v \in V$ must sample a random value $x \in [N]$ (almost) u.a.r.
    \item Between two interactions where $v$ must sample a value, it is activated at least $\alpha \log r$ times.
\end{enumerate}
Then, $\mathcal{P}$ can be implemented with a determistic transition function $\delta'$ using only randomness from the scheduler. 
The blowup in the number of states is within $O(N\log N)$.
\end{lemma}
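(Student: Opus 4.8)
The plan is to have each agent deterministically harvest randomness from the sequence of partners it meets, using the identity (or rather, a locally-computable pseudo-random bit derived from) the scheduler's choices. Since the scheduler picks a uniformly random ordered pair in each step, an agent that is activated observes a partner drawn (almost) uniformly from the other $n-1$ agents; but an anonymous agent cannot read a partner's identity directly. The standard fix, which I would invoke, is the \emph{synthetic coin} technique: whenever two agents interact, each flips its role-bit (or, symmetrically, one reads the parity of some ever-incrementing local counter of the partner), yielding a bit that is unbiased and independent enough across an agent's own interactions. Concretely, I would equip each agent with a small auxiliary field holding (i) a buffer of at most $\lceil \log_2 N \rceil + O(1)$ freshly collected bits and (ii) a position counter recording how many bits have been collected since the last ``sample'' was needed.

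First I would describe the bit-extraction subroutine: on each activation, the agent appends one synthetic-coin bit to its buffer and increments its position counter. When the protocol $\mathcal{P}$ calls for a sample $x \in [N]$, the agent reads the low-order $\lceil \log_2 N\rceil$ bits of its buffer, interprets them as an integer, reduces modulo $N$ (or, to stay within the ``almost uniform'' tolerance $\Pr[X=x]\in[\tfrac{1}{2N},\tfrac2N]$, uses the raw value when it falls in the largest multiple of $N$ below $2^{\lceil\log_2 N\rceil}$ and otherwise takes it mod $N$ anyway — a crude reduction suffices since we only need constant-factor uniformity), then clears the buffer and resets the position counter. Second I would verify the timing precondition: hypothesis~2 guarantees at least $\alpha \log r$ activations between consecutive sample requests, and since each activation yields one fresh bit, for a suitably large constant $\alpha$ (chosen so that $\alpha \log r \ge \lceil \log_2 N\rceil$, which holds because $N = \Theta(r^5)$ in every use case and $\log_2(r^5) = 5\log_2 r = O(\log r)$) the buffer always contains enough bits when a sample is demanded. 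Third I would bound the state blowup: the buffer holds $O(\log N)$ bits, i.e.\ contributes a factor $2^{O(\log N)} = N^{O(1)}$; the position counter ranges over $O(\log N)$ values, contributing another $O(\log N)$ factor; being slightly more careful, the buffer contributes a factor $N$ and the counter a factor $O(\log N)$, for the claimed $O(N \log N)$ multiplicative blowup. Fourth, I would argue the extracted bits are ``almost u.a.r.'' and sufficiently independent: the synthetic coin is known to produce bits that are unbiased up to an $o(1)$ additive term and whose correlations across distinct interactions of the same agent are negligible for the purposes at hand, so the resulting samples meet the $[\tfrac1{2N},\tfrac2N]$ requirement; crucially, the analysis of $\mathcal{P}$ only ever used the samples through w.h.p.\ events over $\poly(n)$-many of them, so a union bound absorbs the small deviations without changing any asymptotic bound.

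The main obstacle I expect is not the mechanics but the \emph{independence justification}: the synthetic coin's output at one interaction is not perfectly independent of the population's global configuration (e.g.\ two agents meeting each other repeatedly could see correlated coins), so the cleanest route is to cite the established derandomization results for population protocols (the synthetic-coin lemma of Alistarh–Aspnes–Gelashvili–Rivest-type and its refinements) and observe that our usage pattern — a single $O(\log r)$-bit sample every $\Omega(\log r)$ activations, used only inside high-probability arguments — lies safely within their hypotheses. I would therefore keep this step short, emphasizing that since $\mathcal{P}$ tolerates the almost-uniform model by assumption, and since the coin delivers almost-uniform bits at the required rate, the substitution is transparent; the only genuinely new check is the arithmetic of the blowup, which is routine.
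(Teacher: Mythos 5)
Your proposal is correct and follows essentially the same route as the paper: collect one synthetic-coin bit per activation (agents flip a local coin bit every interaction and read their partner's), buffer $\log N$ bits together with a cyclic position counter, use the buffered bits as the sample when one is requested, invoke the timing hypothesis to guarantee the buffer is refreshed in time, and account the blowup as a factor $N$ for the buffer times $O(\log N)$ for the counter. The paper justifies the almost-uniformity bound $[\tfrac{1}{2N},\tfrac{2}{N}]$ by citing the concentration of coin values around $\tfrac12$ from the load-balancing literature, which is the same move as your appeal to the synthetic-coin results.
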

We can implement this sampling in the following way:
Each agent $v \in V$ stores three additional fields.
First, a variable $v.\Coin \in \{0,1\}$, a binary array $v.\Coins \in \{0,1\}^{\log N}$ of length $\log N$ and a cyclic counter $v.\CoinCount \in \mathbb{Z}_{\log N}$. 
Storing these values requires $\log N + \log\log N + 1$ additional bits and blows up the state space by a factor of $2N\log N$.
On every interaction, an agent $v$ flips the value of $v.\Coin \in \{0,1\}$ to its complement, i.e., it sets
\begin{align}
    v.\Coin \gets 1 - v.\Coin
\end{align}
Also, on every interaction $v.\CoinCount$ is increased by one and wraps around to $0$ once it surpasses it maximal value $\log n$.
Formally, we have
\begin{align}
    v\CoinCount \gets v\CoinCount +1 \pmod {\log N} 
\end{align}
Further, the array $\Coins_v$ stores the last $\log N$ coins observed in the last $\log N$ interactions.
That is, on an interaction with agent $w \in V$, agent $v$ observes $w.\Coin$ and sets:
\begin{align}
    i &\gets v.\CoinCount\\
    v.\Coins[i] &\gets  w.\Coin
\end{align}
These $\log N$ values in $v.\Coins$ encode the desired random values.

Finally, we show why this process produces the desired random numbers.
Clearly, after $\log N$ activations, the values in $v.\Coins$ have completely replaced by new values.
Thus, two consecutive samples are independent of one another.
It remains to show that the samples have the correct probability.
\textcite{berenbrink2019tight} have shown that after $O(n\log N)$ global interactions, there are $(\nicefrac{1}{2} \pm\nicefrac{1}{10\log N}) \cdot n$ agents with either value of the coin w.h.p.
Thus, the probability that the last $\log N$ coins stored in $v.\Coins$ form the bit representation of a value $x \in [N]$ is between $(\nicefrac{1}{2} - \nicefrac{1}{10\log N})^{\log N} \geq \frac{1}{2N}$ and $(\nicefrac{1}{2} + \nicefrac{1}{10\log N})^{\log N} \leq \frac{2}{N}$ as desired. 

Thus, we only need to make sure that there are $O(N \log N)$ global inter interactions before each sampling, the lemma follows.
Here, the second property of the protocol comes into play.
We require $\alpha \log r$ interaction between each sample.
Choosing $\alpha \gg 5$ large enough and applying \cref{lem: full population technical} shows that there enough time to the coins to converge to the right distribution before each sampling, w.h.p.
\section{The PropagateReset Protocol}\label{apx:PropagateReset}
In this section we describe the reset mechanism of \texorpdfstring{\textcite{DBLP:conf/podc/BurmanCCDNSX21}}{Burmann et al.} which we use in our protocol. We state its pseudocode in Protocol~\ref{pr:propagatereset}.
We will need to know just a bit of its inner workings:
Resetting agents have their $\role$ field set to $\resetting$;
    if $\role \neq \resetting$, we call an agent \emph{computing}.
Any configuration containing a computing agent is \emph{partially computing}.
Agents in the $\resetting$ role have
    a field $\resetCount \in \cbc{0, \ldots, R_{max}}$ (with $R_{max} = \Theta(\log n)$),
    and a field $\delayTimer \in \cbc{0, \ldots, D_{max}}$ (with $D_{max} = \Theta(\log n)$, and $\delayTimer = D_{max}$ whenever $\resetCount > 0$).
To trigger a reset, an agent is made a resetting agent according to $\TriggerReset$ (Protocol~\ref{pr:triggerreset});
    an agent in that state is \emph{triggered}, and any configuration containing such an agent is \emph{triggered}.
While $\resetCount$ is non-zero, an agent may ``infect'' computing agents to also become resetting.
This counter is decremented in every interaction;
    once it hits $0$, an agent becomes \emph{dormant},
    upon which they initialize $\delayTimer$, decrement it in every interaction, and become computing again when the timer hits $0$ (with users of $\PropagateReset$ describing how computing agents are initialized by giving a routine \textsc{Reset}).
Computing agents additionally ``wake up'' dormant agents via epidemic.
Any partially computing configuration
    which is the first such configuration reachable from a fully dormant configuration
    is called an \emph{awakening configuration}.

The following results capture almost everything we need to know about $\PropagateReset$:
\begin{lemma}[Combination of Lemmas 3.2 and 3.3 in \cite{DBLP:journals/corr/abs-1907-06068}]\label{lem:triggered-to-dormant}
    Let $R_{\max} = 60 \log n$ and $D_{max} = \Omega(\log n)$.
    Starting from a triggered configuration, we reach a fully dormant configuration after $\Oh(n \log n)$ interactions with probability $1 - \Oh(1/n)$.
\end{lemma}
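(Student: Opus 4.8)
The plan is a two-phase argument: first track the spread of the $\resetting$ role through the population, then track the draining of the reset counters. In a triggered configuration there is at least one agent with $\role = \resetting$ and $\resetCount = R_{max}$, while every other agent may be in an arbitrary state (computing with garbage, resetting with an arbitrary $\resetCount$, or dormant with an arbitrary $\delayTimer$). In Phase~1 I would show that within $\Oh(n\log n)$ interactions every agent has $\role = \resetting$ w.h.p.; in Phase~2 that within a further $\Oh(n\log n)$ interactions every $\resetCount$ reaches $0$, so the configuration becomes fully dormant, and that no $\delayTimer$ causes an agent to re-enter the computing role before that happens. A union bound over the $\Oh(1)$ many w.h.p.\ events then yields the claimed $1-\Oh(1/n)$.

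For Phase~1, the key observation is that the spread of $\resetting$ behaves like a one-way epidemic whose infectious agents are precisely the resetting agents with $\resetCount > 0$: each time such an agent meets a computing agent, the latter joins the reset (and receives a fresh positive $\resetCount$). By \cref{lemma:epidemic} a one-way epidemic saturates the population within $\cepi n\log n$ interactions w.h.p. The only thing to verify is that the epidemic does not stall, i.e., that the infectious set is non-empty until saturation. This follows by combining the epidemic bound with \cref{lem: full population technical}: within $\Oh(n\log n)$ interactions every agent is involved in only $\Oh(\log n)$ interactions w.h.p., so with $R_{max} = 60\log n$ — whose constant dominates $\cepi$ and the concentration constant of \cref{lem: full population technical} — every agent that becomes resetting remains infectious throughout this window. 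Dormant agents present in the initial configuration are handled the same way once they re-enter the computing role (which happens within $\Oh(n\log n)$ interactions, their $\delayTimer$ being at most $D_{max} = \Theta(\log n)$), again using that $R_{max}$ is large enough that the reset epidemic is still alive at that time.

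For Phase~2, once an agent is resetting it has $\resetCount \le R_{max}$ and decrements it on every interaction, becoming dormant the moment $\resetCount$ hits $0$. By \cref{lem: full population technical}, within a further $\Theta(n\log n)$ interactions every agent is involved in at least $R_{max}$ interactions w.h.p.\ (once more choosing the constant large relative to $60$), so within $\Oh(n\log n)$ interactions total all agents are dormant. It remains to rule out the race where an agent that became dormant early has its $\delayTimer$ — which counts that agent's own interactions down from $D_{max}$ — expire before the last agent becomes dormant, turning it computing again and letting it re-seed the population. But the window between the first and the last agent becoming dormant spans $\Oh(n\log n)$ interactions, hence only $\Oh(\log n)$ own-interactions for any fixed agent by \cref{lem: full population technical}; choosing the constant in $D_{max} = \Omega(\log n)$ large enough, no $\delayTimer$ expires inside that window, so we genuinely reach a fully dormant configuration.

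The main obstacle is the non-stalling claim in Phase~1, together with the adversarial dormant agents: one must couple the counter-decrement dynamics with the epidemic frontier and argue the infectious set never empties, which is sensitive to the exact rule by which a newly recruited agent's $\resetCount$ is (re)initialized and to how large $R_{max}$ is relative to $D_{max}$. Since the statement is exactly the combination of Lemmas~3.2 and~3.3 of \cite{DBLP:journals/corr/abs-1907-06068} — the former giving convergence to a fully dormant configuration, the latter the $\Oh(n\log n)$-interaction and $(1-\Oh(1/n))$-probability bounds — the cleanest route in the write-up is to invoke those two results directly, with the argument above recorded as the underlying intuition.
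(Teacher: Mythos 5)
The paper offers no independent proof of this lemma --- it is imported directly as the stated combination of Lemmas~3.2 and~3.3 of \cite{DBLP:journals/corr/abs-1907-06068} --- so your closing recommendation to simply invoke those two results is exactly what the paper does, and your two-phase epidemic/counter-drain sketch is a fair record of the intuition behind the cited proof. One minor mechanical quibble with the sketch: in $\PropagateReset$ a newly recruited agent's \resetCount{} is set to $0$ and only then synchronized to the infector's count minus one via the max-rule (so counters decay along the epidemic rather than being refreshed), and dormant agents are re-infected directly through that same rule rather than having to first re-enter the computing role.
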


\begin{theorem}[Theorem 3.4 in \cite{DBLP:journals/corr/abs-1907-06068}]\label{thm:propagate-reset}
    Let $R_{max} = 60 \log n$ and $D_{max} = \Omega(\log n+ R_{max})$. Starting from a partially-triggered configuration, we reach an awakening configuration in at most $O(D_{max} n)$ interactions with probability at least $1 - \Oh(1/n)$.
\end{theorem}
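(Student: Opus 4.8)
The plan is to bound the time by splitting the execution into two phases: (i)~from the given partially-triggered configuration, reaching a fully dormant configuration, and (ii)~from a fully dormant configuration, reaching an awakening configuration. Each phase will be shown to finish with probability $1-\Oh(1/n)$, so a union bound over the two failure events gives the theorem; and since $D_{max}=\Omega(\log n + R_{max})=\Omega(\log n)$, the phase-(i) bound of $\Oh(n\log n)$ interactions is absorbed into the phase-(ii) bound of $\Oh(D_{max}\, n)$, for a total of $\Oh(D_{max}\, n)$.

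For phase~(i), when the starting configuration is \emph{freshly} triggered this is exactly \cref{lem:triggered-to-dormant}. For a general partially-triggered start I would re-run the underlying argument: the resetting role spreads as a one-way epidemic, so by \cref{lemma:epidemic} every agent becomes a resetter within $\cepi\, n\log n$ interactions w.h.p., while by \cref{lem: full population technical} no agent participates in more than $\Oh(\log n)$ interactions during that window w.h.p. This is precisely where the generous constant in $R_{max}=60\log n$ (recall $\cepi<7$) is used: it guarantees that no reset counter hits $0$, and hence no agent reverts to computing, before the entire population is resetting. Once it is, every reset counter decrements monotonically and, again by \cref{lem: full population technical}, they all reach $0$ within a further $\Oh(n\log n)$ interactions w.h.p., at which point the configuration is fully dormant.

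For phase~(ii), note that in a fully dormant configuration every agent is dormant with reset counter $0$ and delay timer at most $D_{max}$, each agent decrements its delay timer on its own interactions, and an agent becomes computing exactly when that timer reaches $0$; since there is no computing agent yet to wake anyone via the epidemic, the first change to occur must be some agent's delay timer hitting $0$, and the resulting configuration is by definition an awakening configuration. Taking $t(n)=8\, D_{max}\, n\ge 4n\log n$ (valid since $D_{max}=\Omega(\log n)$) and applying \cref{lem: full population technical} with $\alpha=8$, within $t(n)$ interactions every agent participates in at least $t(n)/(8n)=D_{max}$ of its own interactions with probability $1-o(n^{-2})$; in particular at least one delay timer runs out, so an awakening configuration is reached within $\Oh(D_{max}\, n)$ interactions.

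The step I expect to be the main obstacle is phase~(i) for a genuinely partially-triggered (rather than freshly triggered) start: one must verify that every resetter still has enough countdown remaining for the reset epidemic to saturate the population before anyone reverts to computing, which is really a re-examination of the internals behind \cref{lem:triggered-to-dormant} and \PropagateReset{} rather than a black-box invocation. Everything else --- the epidemic bound, the interaction counting, and the delay-timer race of phase~(ii) --- is routine given \cref{lemma:epidemic} and \cref{lem: full population technical}.
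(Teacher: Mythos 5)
First, a point of context: this statement is not proved in the paper at all --- it is imported verbatim (Theorem~3.4 of the cited work of Burman et al.) and \PropagateReset{} is deliberately used as a black box, so any ``proof'' here is really a reconstruction of the cited analysis. Your two-phase decomposition (partially triggered $\to$ fully dormant $\to$ awakening) is the right skeleton, and your phase~(ii) argument is correct: from a fully dormant configuration there are no computing agents, so the only possible transition is a $\delayTimer$ expiring, and \cref{lem: full population technical} with $t(n)=\Theta(D_{max}n)$ guarantees some (indeed every) timer expires within $\Oh(D_{max}n)$ interactions w.h.p., which by definition yields an awakening configuration. Moreover, the obstacle you flag in phase~(i) largely dissolves if you read the paper's definitions: a ``triggered configuration'' is \emph{any} configuration containing an agent that has just executed $\TriggerReset$, with the remaining agents arbitrary --- which is exactly what ``partially-triggered'' means here (cf.\ the proof of \cref{lem:correctness_from_reset}). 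So phase~(i) is a direct black-box invocation of \cref{lem:triggered-to-dormant}; no re-derivation of its internals is needed, and with that substitution your argument is complete and matches how the paper itself chains these imported statements.

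As written, however, your own phase~(i) sketch has genuine problems and does not stand on its own. The claim that $R_{max}=60\log n$ ``guarantees that no reset counter hits $0$, and hence no agent reverts to computing, before the entire population is resetting'' is false from an adversarial partially-triggered start: pre-existing resetters may have arbitrarily small $\resetCount$, and adversarially initialized dormant agents with $\delayTimer$ near $0$ will wake up (and can wake further dormant agents via the ``meets a computing agent'' rule) long before the reset epidemic saturates; the correct statement is only that agents infected by the \emph{triggered} agent's high counter keep a positive counter throughout the epidemic window, while prematurely awakened agents are harmless because they are re-infected. More importantly, reaching a \emph{fully} dormant configuration is a simultaneity statement: the earliest agents to become dormant must not have their $\delayTimer$ expire before the last counters reach $0$. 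This is precisely where the hypothesis $D_{max}=\Omega(\log n + R_{max})$ is needed --- your sketch only ever uses $D_{max}=\Omega(\log n)$ and never addresses this race, so the argument as given would not rule out the population oscillating in and out of dormancy. Either cite \cref{lem:triggered-to-dormant} for phase~(i), or add the counter-synchronization and no-premature-wake-up argument explicitly.
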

We can simplify these statements as follows
\begin{corollary}[Correctness of \PropagateReset, simplified]
\label{lemma:propagatereset}
There is a population protocol $\PropagateReset$ with $\Theta(\log n)$ states and the following properties:
\begin{enumerate}[noitemsep,topsep=0pt]
    \item Starting configuration where an agent calls \TriggerReset, we reach  fully dormant configuration in at most time $O(\log n)$ w.h.p.
    \item Starting from a fully dormant configuration, we reach an awakening configuration in at most time $O(\log n)$ w.h.p.
    \item  Starting from a configuration with some {propagating} agent, we reach a computing configuration in at most $O(\log{n})$ time w.h.p. unless another reset is called.
\end{enumerate}
\end{corollary}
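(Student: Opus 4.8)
The plan is to derive the corollary directly from \cref{lem:triggered-to-dormant} and \cref{thm:propagate-reset}, converting their interaction counts into parallel time (dividing by $n$, so $\Oh(n\log n)$ interactions become $\Oh(\log n)$ time) and stitching on one short one-way-epidemic argument to pass from an awakening configuration to a fully computing one. First I would fix the constants $R_{max} = 60\log n$ and $D_{max} = \Theta(\log n)$, chosen large enough for both cited results to apply. For the state bound, I would note that the constraint that $\delayTimer$ equals $D_{max}$ whenever $\resetCount > 0$ means a resetting agent is specified by either $\resetCount \in \{1,\dots,R_{max}\}$ alone, or by $\resetCount = 0$ together with $\delayTimer \in \{0,\dots,D_{max}\}$; this gives $\Oh(R_{max}+D_{max}) = \Oh(\log n)$ resetting states, which with the $\Oh(1)$ further bookkeeping is the claimed $\Theta(\log n)$.

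For Property~1, I would observe that a configuration in which an agent has just run $\TriggerReset$ is by definition triggered, so \cref{lem:triggered-to-dormant} immediately yields a fully dormant configuration within $\Oh(n\log n)$ interactions, i.e.\ $\Oh(\log n)$ time, w.h.p. For Property~2, starting from a fully dormant configuration every agent has $\resetCount = 0$ and is counting $\delayTimer$ down from $\Theta(\log n)$, and $\PropagateReset$ never triggers a reset on its own; by \cref{lem: full population technical} every agent is activated at least $D_{max}$ times within $\Oh(n\log n)$ interactions w.h.p., so some agent's timer hits $0$ and it becomes computing, and the first partially computing configuration reached is by definition awakening (this is essentially a restatement of \cref{thm:propagate-reset}). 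For Property~3, I would split on the propagating (resetting) agent: if it has $\resetCount > 0$ the configuration is partially-triggered and \cref{thm:propagate-reset} reaches an awakening configuration within $\Oh(n\log n)$ interactions; if every resetting agent is dormant, Property~2 (applied to the dormant agents, ignoring any already-computing agents) does the same. From an awakening configuration all agents have $\resetCount = 0$, so the computing agents wake the remaining dormant agents by a one-way epidemic and \cref{lemma:epidemic} makes every agent computing within a further $\Oh(n\log n)$ interactions w.h.p. Summing the $\Oh(1)$ many $\Oh(n\log n)$-interaction phases and union-bounding over the corresponding w.h.p.\ events yields a fully computing configuration in $\Oh(\log n)$ time w.h.p., unless a fresh $\TriggerReset$ intervenes.

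I expect the only real obstacle to be terminological bookkeeping: carefully matching the notions \emph{triggered}, \emph{partially-triggered}, \emph{awakening}, \emph{dormant}, and \emph{computing} as used in the cited statements to the three clauses of the corollary, and verifying that it is precisely the constraint linking $\delayTimer$ to $\resetCount$ that keeps the state space at $\Theta(\log n)$ rather than $\Theta(\log^2 n)$. A secondary point of care is the w.h.p.\ accounting, since each clause chains a constant number of $1-\Oh(1/n)$ events, which still leaves a $1-\Oh(1/n)$ bound after a union bound.
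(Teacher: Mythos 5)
Your derivation matches the paper's (implicit) justification: the corollary is stated without a separate proof precisely because it is meant to follow directly from \cref{lem:triggered-to-dormant} and \cref{thm:propagate-reset} after converting $\Oh(n\log n)$ interactions to $\Oh(\log n)$ parallel time, exactly as you do, and your added bookkeeping (the $R_{max}+D_{max}+\Oh(1)$ state count using the $\delayTimer$--$\resetCount$ constraint, the timer/pigeonhole argument from a fully dormant configuration, and the wake-up epidemic via \cref{lemma:epidemic} for the third property) is the routine gap-filling the paper leaves to the reader. The terminological matching you flag (propagating vs.\ partially-triggered, awakening vs.\ computing) is indeed the only delicate point, and it is inherent in the paper's own black-box use of the cited results rather than a defect of your argument.
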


\begin{algorithm}[!ht]{$\PropagateReset(u,v)$ \label{pr:propagatereset}, where $u.\role$ must be $\resetting$}
if $u.\resetCount > 0$ and $v.\role \neq \resetting$ then
  $v.\role \gets \resetting$; $v.\resetCount \gets 0$; $v.\delayTimer \gets D_{max}$
  
if $v.\role = \resetting$ then
  $a.\resetCount \gets b.\resetCount \gets \max\cbc{a.\resetCount - 1, b.\resetCount - 1, 0}$ 
  
for $(i, j) \in \cbc{(u, v), (v, u)}$ with $i.\role = \resetting$ and $i.\resetCount = 0$ do
  if $i.\resetCount$ just became 0 then
    $i.\delayTimer \gets D_{max}$
  else
    $i.\delayTimer \gets i.\delayTimer - 1$
    
  if $i.\delayTimer = 0$ or $j.\role \neq \resetting$ then
    execute $\textsc{Reset}(i)$
\end{algorithm}

\begin{algorithm}[!ht]{$\TriggerReset(u)$ \label{pr:triggerreset}}
$u.\role \gets \resetting$
$u.\resetCount \gets R_{max}$
$u.\delayTimer \gets D_{max}$
\end{algorithm}

The routine $\textsc{Reset}$ which we use to (re-)initialize agents is given by Protocol \ref{pr:reset}.
\begin{algorithm}[!ht]{$\textsc{Reset}(u)$.\label{pr:reset} This routine, used exclusively in $\PropagateReset$, initializes $u$'s state.}
$u.\role \gets \producing$; $u.\pState \gets \qNaughtRanking$; $u.\countdown \gets C_{max}$
\end{algorithm}

\section{Ranking}
\label{apx:ranking}

In this section we present the parametrized (\emph{non}-self-stabilizing) ranking protocol $\Ranking$, which satisfies the following Lemma.
\begin{lemma}[Fast and Silent Ranking]
    \label{lem:fast-and-silent-ranking}
     $\Ranking$ assigns unique rank in $[n]$ to each agent in time $\crank\cdot\nicefrac{n}{r}\cdot\log n$ w.h.p, starting from any dormant configuration.
    $\Ranking$ uses $2^{\Oh(r \log n)}$ states.
\end{lemma}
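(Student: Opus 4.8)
The plan is to describe $\Ranking$ as the three-phase protocol sketched in Section~\ref{sec:overview:tradeoff} --- sheriff election, deputization, label assignment --- and analyze each phase in turn, working always from a fully dormant configuration (so that all agents start in the single state $\qNaughtRanking$, with no adversarial corruption to worry about; this is the crucial simplification that makes a \emph{non}-self-stabilizing analysis suffice). First I would give the state space: each agent stores a constant-size role marker (aspiring sheriff / sheriff / deputy / plain agent), a sheriff-election sub-state, a deputy index in $[r]$ (or $\bot$), an assigned label which is a pair (deputy index, offset in $[\lceil cn/r\rceil]$) or $\bot$, and --- the dominant term --- an array of $r$ counters each in $\{0,\dots,\lceil cn/r\rceil\}$ recording the number of labels each deputy has handed out so far. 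This gives $(cn/r+1)^r \cdot \poly(n) = 2^{\Oh(r\log n)}$ states, matching the claim; I would verify this bit-count explicitly as the first step.

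Next I would analyze the phases. \textbf{Phase 1 (sheriff election):} run any $\Oh(n\log n)$-interaction leader-election-from-clean-start primitive (e.g. the standard doubling/knockout scheme, or simply a unique-minimum-of-random-IDs argument using the derandomized coins of Lemma~\ref{lemma:derandomization}); from a dormant configuration this elects a unique sheriff in $\Oh(n\log n)$ interactions $=\Oh((n/r)\log n)$ time w.h.p. since $r\le n/2$. \textbf{Phase 2 (deputization):} the sheriff, upon meeting plain agents, successively deputizes $r$ of them, handing deputy $k$ the label block $(k,\cdot)$; this takes $\Oh(n\log n)$ interactions for the sheriff to meet $r$ distinct agents (coupon-collector-type, and $r\le n/2$ so it is dominated by $\Oh(n\log n)$). \textbf{Phase 3 (label assignment + rank extraction):} each deputy assigns its next free offset to each unlabeled agent it meets; simultaneously every agent broadcasts, via one-way epidemic (Lemma~\ref{lemma:epidemic}), the vector of per-deputy assigned-counts. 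The key quantitative step is that since the total label supply is $r\lceil cn/r\rceil \ge cn > n$ with $c>1$ a constant, at every point in time a constant fraction of deputies still hold unused labels, so an unlabeled agent meets a ``useful'' (deputy with free label, matching group if we are inside the trade-off) partner with probability $\Omega(1/n \cdot \text{(number of such deputies)}) = \Omega(1/n)$ per interaction when $\Omega(n)$ agents remain unlabeled, and this rate only improves; a standard phase/coupon-collector argument then shows all $n$ agents are labeled within $\Oh((n/r)\cdot n)$... wait --- here is exactly where the time bound must be checked carefully: with $r$ deputies each handing out labels at rate $\Theta(1/n)$ per global interaction, the aggregate assignment rate is $\Theta(r/n)$ per interaction, so labeling all $n$ agents costs $\Oh(n^2/r)$ interactions; the $\log n$ factor and w.h.p.\ come from the tail of the coupon-collector bound for the last few agents, giving $\Oh((n^2/r)\log n)$ interactions $=\Oh((n/r)\log n)$ time. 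Once the broadcast count-vector stabilizes (another $\Oh(n\log n)$ interactions by Lemma~\ref{lemma:epidemic}) and sums to $n$, every agent computes its rank via the fixed injective map $\smap^{-1}$ from (count-vector, own label) to $[n]$; uniqueness of ranks follows because labels within a deputy block are assigned consecutively and deputy blocks are disjoint.

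\textbf{Silence} I would argue last: after all labels are assigned and all count-vectors agree summing to $n$, no transition rule fires --- no free labels remain so no deputy assigns anything, the broadcast has converged so no epidemic step changes state, and the sheriff-election sub-state is frozen at ``done''; hence the configuration is a fixed point and $\Ranking$ is silent. \textbf{The main obstacle} I anticipate is making the Phase-3 analysis rigorous uniformly in $r$ across the whole range $1\le r\le n/2$: at the low end ($r=\Oh(1)$) the bottleneck is genuinely the $\Oh(n^2/r)$ label-assignment time and one must be careful that the constant-fraction-of-deputies-have-free-labels invariant does not degrade as the pool empties, which is where the $c>1$ slack is spent; at the high end ($r=\Theta(n)$) one must ensure the $r$-fold blowup in the count-vector does not interact badly with the epidemic broadcast time, and that meeting $r$ distinct deputies in Phase 2 really is $\Oh(n\log n)$ rather than $\Oh(n\log r)$-with-a-bad-constant. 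I would isolate the invariant ``at least $\tfrac{c-1}{c}\,r$ deputies have a free label whenever $\ge \tfrac{1}{c}\,n$ agents are unlabeled'' as a sub-claim and drive the whole Phase-3 timing off it.
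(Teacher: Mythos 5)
Your Phase~2 is where the proposal breaks. You have a \emph{single} sheriff personally deputize $r$ agents and claim this costs $\Oh(n\log n)$ interactions ``coupon-collector-type''; but the bottleneck is not which agents the sheriff meets, it is that the sheriff itself must take part in at least $r$ interactions, and since each global interaction involves a fixed agent with probability $\nicefrac{2}{n}$, this requires $\Theta(rn)$ global interactions w.h.p. For $r=\omega(\sqrt{n\log n})$ this already exceeds the target of $\crank\cdot\nicefrac{n^2}{r}\cdot\log n$ interactions, and in the regime $r=\Theta(n)$ (the one needed for the main theorem) it is $\Theta(n^2)$ against a required $\Oh(n\log n)$. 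The paper avoids exactly this by \emph{binary splitting} of the badge pool: the elected sheriff holds all $r$ badges, and whenever a sheriff meets a recipient it promotes the recipient to sheriff and hands over half of its badge range; a sheriff whose range shrinks to a single badge becomes that deputy. A given badge then needs only $\log_2 r$ hand-offs along its chain of custody, each such productive interaction happening with probability $\Omega(\nicefrac{1}{n})$ per step, so a Chernoff bound per badge plus a union bound over the $r$ badges yields $\Oh(n\log n)$ interactions w.h.p. (the paper's ``ruled to quorate'' lemma). Without this, or some equivalent parallelization of deputization, the stated time bound cannot be met.

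The rest of your outline is essentially the paper's argument: your Phase~3 invariant (``at least $\tfrac{c-1}{c}r$ deputies still have free labels while unlabeled agents remain'') is the same slack the paper spends, leading to the identical $\Oh(\nicefrac{n^2}{r}\log n)$ labeling bound, and your state count $(\nicefrac{cn}{r}+1)^r\cdot\mathrm{poly}(n)=2^{\Oh(r\log n)}$ matches. Two smaller deltas to be aware of: the leader election must work from an \emph{awakening} configuration in which agents wake at different times (the paper builds a small bespoke protocol for this rather than citing a clean-start primitive), and the paper inserts a sleep phase of $\Theta(\log n)$ own-interactions between learning that $n$ labels are out and adopting a rank, precisely so that agents do not discard their $\FieldName{channel}$ state and kill the broadcast before everyone has the full count vector; your silence argument implicitly assumes every agent retains the count vector forever, which is consistent but should be stated, since the paper's design deliberately discards it after ranking.
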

In the time-optimal regime, one could use other previously established approaches to compute a ranking (in a way that is not self-stabilizing):
    In the protocol of \cite{DBLP:conf/podc/BurmanCCDNSX21},
        agents choose one of $\Oh(n^3)$ names at random.
    They then broadcast these names, storing the entire set of seen names, and obtain ranks from this set (as the used names are unique w.h.p.); this requires $\Oh(n\log n)$ bits and $\Oh(n\log n)$ interactions w.h.p.
    Alternatively, one could adapt a protocol in \cite{DBLP:conf/opodis/GasieniecJLL21} which assigns unique labels in $[2n]$ quickly, so that storing the set of used labels takes only $\Oh(n)$ bits.
However, for the regime using $\tilde{\omega}(n^{3/2})$ interactions,
    given our collision detection protocol, this would make rank assignment the bottleneck for the size of the state space.
In particular, we would not obtain a protocol using $o(n^2)$ interactions with sub-exponential state space size.

The fundamental technique we use is a standard one: we assign to each agent a unique ``label'' drawn from a space much larger than $[n]$, our agents communicate to learn which labels exist and then pick ranks based on that information.
However, we need to avoid agents having to exchange names directly with the rest of the population and instead consolidate the information to be sent more efficiently.
We achieve this by delegating the assignment of labels to $r$ unique ``deputies'' and thus agents are only required to know how many labels each deputy has given out in order to learn the full set of labels.
\medskip

More explicitly, in order to nominate a \emph{sheriff}, we use a black-box fast leader election protocol that is \emph{not} self-stabilizing and functions from an awakening configuration (see \autoref{apx:simple_leader_elect}). 

The sheriff, once nominated, has a collection of $r$ badges for some parameter $1\leq r<\nicefrac{n}{2}$. Each badge has a unique badge number from ~$[r]$. 
As soon as a sheriff  meets a \emph{recipient} (i.e. an agent who was not elected sheriff), the sheriff promotes the recipient to a sheriff and gives them half of its badges. 
When a sheriff has only a single badge left, it becomes a \emph{deputy}. 

A deputy has a unique \emph{id} from $[r]$, for which we use its badge id, and maintains a counter initialized to $1$. 
Upon meeting a recipient, the deputy increases its counter and assigns the recipient a temporary label $(i,j)$, where $i$ is
the deputy's id, and $j$ the new value of the deputy's counter.
After assigning the temporary label, the deputy broadcasts the total number of labels it has given out. 
To implement this broadcast, every agent continuously forwards the largest observed values of all the deputies' counters.

Once an agent hears that $n$ labels have been given out, i.e., the sum of all counters is $n$, it knows the set $\{(i,j) \mid i \in [r], j \in [c_i]\}$ of assigned labels, and gives itself the rank in lexicographic ordering.

After assigning itself a rank, an agent discards its remaining states.
However, this cannot happen immediately, as it could lose some of the states required for broadcasting the labelling information.
In other words, this could kill the broadcast partway and leave some agents stuck in the ranking phase, as they would never learn that $n$ labels are assigned. 
Therefore, we put the agents to sleep until the broadcast has likely finished. 
Upon waking up, they pick a rank and delete the rest of their state space.
\medskip

\noindent We will now describe the state space in more detail. An agent is in one of six types being either: in leader election, a sheriff, a deputy, a recipient, a sleeper or ranked. These have some unique fields, as well as some shared fields. Uniquely,
\begin{itemize}[noitemsep]
    \item an agent in leader election has all the fields of the leader election algorithm,
    \item a sheriff has $\lrank, \hrank  \in [r]$, with $\lrank < \hrank$. These determine the low and high end (inclusive) of the range of badges the sheriff may distribute and are initialized as $\lrank =1$ and $\hrank =r$,
    \item a deputy has an $\id  \in [r]$ and a $\counter  \in [cn/r]$ determining how many labels (including their own) they have given out,
    \item a recipient has a $\lbl  \in [r]\times[cn/r]\cup\{\bot\}$ which records the label they have received from a deputy and from which deputy. This is initialized to $\bot$ representing none,
    \item a sleeper has a timer $\sleep \in [\csleep  \log{n}]$ which is initialized to $1$ and a $\lbl  \in [r]\times[cn/r]\cup\{\bot\}$ which is inherited from the value as a recipient.
\end{itemize}
All non-leader election, non-ranked types shared the following field.
\begin{itemize}
    \item $\channel  \in \{0,...,cn/r\}^r$. These are $r$ channels that store the highest value each deputy has given out.
\end{itemize}
All  types shared the following field:
\begin{itemize}
    \item $\rank \in [n]$. This is the rank the agent currently believes itself to be. This is initialised to $1$ and update only when agent becomes ranked
\end{itemize}
The algorithm then proceeds as described in Protocol~\ref{pr:ranking}, and makes use of $\Oh(L(n))+n+(r^2+cn+cn+1+(cn+1)(\csleep\log{n}))(\nicefrac{cn+r}{r})^r=2^{\Oh(r\log{n})}$ states.

\begin{algorithm}[!ht]{$\Ranking(u,v)$\label{pr:ranking}}
if $u$ or $v$ is in leader election then $\SheriffElection(u,v)$
else
  if $u$ or $v$ in sleep then $\Sleep(u,v)$
  else if $u$ or $v$ is a sheriff and the other a recipient then
    $\Deputize(u,v)$
  else if $u$ or $v$ is a deputy and the other is a recipient without a label then
    $\Labelling(u,v)$
  for each $i \in [r]$ do
    $u.\channel [i], v.\channel [i]\leftarrow \max(u.\channel [i], v.\channel [i])$
  if $\sum_{i=1}^r u.\channel [i]=n$ then 
    $u$ and $v$ become sleepers
\end{algorithm}

\begin{algorithm}[!ht]{$\SheriffElection(u,v)$\label{pr:sheriff}}
if both $u$ and $v$ are in leader election then
  update states according to black-box leader election
else if exactly one of $u$ and $v$ is in leader election then
  the other becomes a recipient
\end{algorithm}

\begin{algorithm}[!ht]{$\Deputize(u,v)$\label{pr:deputize}}
Let $w$ be the sheriff and $x$ be the recipient
$x$ becomes a sheriff
$x.\hrank \gets w.\hrank $
$w.\hrank \gets \lfloor (w.\hrank +w.\lrank ) / 2 \rfloor$
$x.\lrank \gets w.\hrank + 1$
for $z$ in $\{x, w\}$
  if $z.\hrank =z.\lrank $ then
    $z$ becomes a deputy
    $z.\id \gets z.\lrank $
    $z.\counter \gets 1$
    $z.\channel [z.\id ]\gets 1$
\end{algorithm}

\begin{algorithm}[!ht]{$\Labelling(u,v)$\label{pr:label}}
if $\sum_{i=1}^r u.\channel \geq r$ then
  Let $w$ be the deputy and $x$ the recipient
  if $w.\counter <cn/r$ then
    $w.\counter \leftarrow w.\counter +1$
    $w.\channel [w.\id ]\leftarrow w.\counter $
    $x.\lbl =(w.\id , w.\counter )$
\end{algorithm}

\begin{algorithm}[!ht]{$\Sleep(u,v)$\label{pr:sleep}}
Let $x$ be a sleeping agent and $w$ the other
if $w$ is ranked then
  $x$ becomes ranked and updates their $\rank$ based on its $\lbl $ and $\channel $ fields
else if $x.\sleep =\csleep  \log{n}$ then
  $x$ and $w$ become ranked and updates their $\rank$ based on their $\lbl $ and $\channel $ fields
else
  $w$ becomes a sleeper
\end{algorithm}

\subsection{Proof of \texorpdfstring{\autoref{lem:fast-and-silent-ranking}}{Main Lemma for AssignRanks}}
Our strategy is to show that beginning from a dormant configuration the following ``correct'' execution occurs w.h.p.. 
\begin{itemize}[noitemsep]
    \item All agents begin dormant
    \item An agent wakes up and propagates this to all other agents
    \item A unique sheriff is elected by the black-box leader election protocol
    \item That sheriff selects $r$ deputies.
    \item The deputies distribute labels until $n$ labels have been given out in total, continuously broadcasting which labels have been used.
    \item All agents go to sleep
    \item All agents wake up and choose a rank based on the position of their label amongst those distributed.
\end{itemize}
In each case we will show that for each in turn, the step occurs sufficiently quickly and correctly w.h.p. 
Throughout the proof we will take the $\Ranking$ subprotocol in isolation, other than when initially waking up from $\PropagateReset$.
We will make use of the following definition.
\begin{definition}[Ruled Population]
    The population is \emph{ruled} if there exists a single sheriff with $\hrank =r$, $\lrank =1$ and all entries of its channel field set to $0$, and all other agents are in a terminal state of the leader election protocol.
\end{definition}
\begin{lemma}
    There exists $a>0$ such that beginning from a dormant configuration, within $an\log{n}$ interactions either a unique sheriff is elected and all other vertices remain of the leader election type (i.e. the population is ruled) w.h.p.
    \label{lemma:reset to ruled}
\end{lemma}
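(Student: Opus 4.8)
The plan is to prove \Cref{lemma:reset to ruled} in two stages: first show that from a dormant configuration the population quickly becomes ``awake'' and that the black-box leader election protocol then runs essentially in isolation, and second show that it terminates with a single sheriff within $\Oh(n \log n)$ interactions w.h.p. For the first stage, I would invoke \Cref{lemma:propagatereset}(2): starting from a fully dormant configuration, within $\Oh(n \log n)$ interactions we reach an awakening configuration, and then by a one-way epidemic argument (\Cref{lemma:epidemic}) all agents become computing and initialize their $\Ranking$ state within a further $\cepi n \log n$ interactions w.h.p. Crucially, every computing agent starts in the leader-election type (this is what $\textsc{Reset}$ guarantees), and as long as \emph{no} agent has left leader election, the only rule of $\Ranking$ that fires is $\SheriffElection$; hence until the first sheriff appears the population is genuinely just running the black-box leader election.

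For the second stage, I would appeal to the stated guarantee of the black-box fast leader election protocol (\autoref{apx:simple_leader_elect}): from an awakening configuration it elects a unique leader in $\Oh(n \log n)$ interactions w.h.p. The subtlety is that ``awake'' does not happen instantaneously --- agents trickle out of the dormant state over an $\Oh(n\log n)$ window --- so I would need the leader-election protocol's correctness to be robust to this staggered start, or alternatively argue that once the first agent wakes, the epidemic completes within $\Oh(n\log n)$ steps and the leader-election clock only ``really'' starts then; either way the total is still $\Oh(n\log n)$. When a unique leader is elected, by definition of $\textsc{Reset}$ / the leader-election output, exactly one agent becomes a sheriff with $\hrank = r$, $\lrank = 1$, and (having never participated in $\Deputize$ or $\Labelling$) all entries of its $\channel$ array equal to $0$, while every other agent sits in a terminal leader-election state and has not yet become a recipient. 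This is precisely the \emph{ruled} configuration of \Cref{def:...}, so the first stage plus the second stage together give a bound of the form $an\log n$ for a suitable constant $a$ (one may take, e.g., $a$ larger than the sum of the hidden constants from \Cref{lemma:propagatereset}, \Cref{lemma:epidemic}, and the leader-election bound).

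The main obstacle I anticipate is the interplay between the staggered awakening and the black-box leader election: I must ensure no agent leaves the leader-election type before a unique leader is chosen (otherwise a premature sheriff could start deputizing and corrupt the count), and I must ensure the leader-election protocol's w.h.p.\ guarantee applies even though agents do not all start simultaneously. I would handle the first point by noting that the \emph{only} transition out of leader election is ``exactly one of $u,v$ is in leader election $\Rightarrow$ the other becomes a recipient'' inside $\SheriffElection$, which requires a non-leader-election agent to already exist --- so by induction the first non-leader-election agent can only be the sheriff produced by the leader-election protocol's termination, not a spuriously created recipient. I would handle the second point by conditioning on the (w.h.p.) event from \Cref{lemma:epidemic} that all agents are awake within $\Oh(n\log n)$ interactions, and then treating the configuration at that time as the true starting configuration for leader election (all agents present, each in some valid leader-election state), so that the black-box guarantee applies verbatim from that point. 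A union bound over the $\Oh(1)$ many w.h.p.\ events completes the argument.
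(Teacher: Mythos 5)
Your proposal follows essentially the same route as the paper's proof: a dormant-to-awake phase (an agent executes \textsc{Reset} by a pigeonhole/timer argument, and the ranker role then spreads as an epidemic so all agents enter the leader-election type), followed by an appeal to the black-box leader election's correctness and runtime together with the fact that an elected sheriff is initialized with badge range $\{1,\ldots,r\}$ and an all-zero $\channel$ field. The staggered-start subtlety you flag is precisely why the paper uses its bespoke \FastLeaderElect, whose guarantee is stated directly from an awakening configuration, so your first option (invoking that guarantee verbatim) is the intended resolution and no separate clock-restarting argument is needed.
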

\begin{proof}
    As expressed in the proof of \autoref{thm:propagate-reset} from \cite{DBLP:journals/corr/abs-1907-06068}, from a fully dormant configuration an agent executes $\Reset$ within $\nicefrac{n}{2}D_{\text{max}}$ interactions by the pigeon hole principle and becomes a ranker.
    This role propagates as an epidemic and so w.h.p.\ all agents becoming rankers within $\cepi n \log{n}$ interactions.
    From here, this follows immediately from the correctness and run-time of a state of the art leader-election protocol, and the fact that any sheriff elected by the leader election protocol is initiated to have a full roster of badges from $\{1,...,r\}$ and its channel field all set to $0$.
\end{proof}

We can make the following observations.

\begin{observation}\label{obs:prot}\mbox{}\\[-2ex]
\begin{enumerate}[label=(\alph*),nolistsep,noitemsep]
    \item Any agent that leaves the black box leader election protocol to become a deputy, recipient or sheriff begins with a channel field that is either all $0$ or equal to that of the agent who spurred the change. \label{obs:Agents initialize to 0}
    \item If for all agents $\channel [i]=0$, then the only interaction that can produce an agent with $\channel [i]=1$ is the creation of a deputy with $\id $ $i$, in which case it increases by $1$.\label{obs:Only a deputy can change a channel from 0 to 1}
    \item The maximum value of $\channel [i]$ across all agents can only increase if a deputy with $\id $ $i$ gives out a label (possibly to itself). \label{obs: Only a deputy can increase the maximum value of a channel}
    \item If there is only a single sheriff or deputy with badge $z$ then no other agent can be created with badge $z$.\label{obs:Ids are unique}
    \item If there is a sheriff with badges $\{x,...,x+y\}$ then, for any $z \in \{x,...,x+y\}$, at any future time either: there is a sheriff with badge $z$, there is a deputy with badge $z$ or a deputy with badge $z$ has gone to sleep. \label{obs:Ids are conserved}
\end{enumerate}
\end{observation}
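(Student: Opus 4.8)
The plan is to establish all five items by direct inspection of the transition rules, since each is either a single-step invariant (items (a)--(c)) or a one-line induction on the number of interactions (items (d)--(e)). The whole argument rests on two bookkeeping facts that I would isolate first: \emph{(i)}~the only places any \channel\ array is written are the coordinate-wise $\max$ in the final loop of \Ranking, the assignment $z.\channel[z.\id]\gets 1$ performed when a sheriff turns into a deputy in \Deputize, and the assignment $w.\channel[w.\id]\gets w.\counter$ in \Labelling\ (which, because \Labelling\ first increments $w.\counter\geq 1$, always writes a value $\geq 2$); and \emph{(ii)}~the only places a badge range $[\lrank,\hrank]$ or a deputy's \id\ is written are the disjoint split performed in \Deputize\ and the initialization of a freshly elected sheriff to the full roster $[r]$ with \channel\ identically~$0$ --- and in the execution we analyze leader election has already terminated (\cref{lemma:reset to ruled}), so no further full-roster sheriff is ever created. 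These two facts drive everything below.

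For item (a), the only rule that moves an agent out of the leader-election type is \SheriffElection: it either becomes the elected sheriff (hence \channel\ identically $0$, by the initialization convention) or it becomes a recipient upon meeting a non-leader-election agent, in which case we take it to copy that agent's \channel; no rule turns a leader-election agent directly into a deputy, so that sub-case is vacuous. For item (b), if every agent currently has $\channel[i]=0$ then no agent is a deputy with $\id=i$ (such a deputy would have been created with $\channel[i]=1$, and that entry never decreases), so \Labelling\ cannot write coordinate $i$; a $\max$ of two zeros is zero; and by (a) a newly created recipient or sheriff still has $\channel[i]=0$; hence the first interaction that puts a $1$ into coordinate $i$ anywhere must be a \Deputize\ step creating a deputy with $\id=i$, which raises that entry from $0$ to $1$. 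Item (c) is the same accounting without the all-zero hypothesis: the $\max$ loop never exceeds the current population-wide maximum of $\channel[i]$, agent creation introduces nothing larger by (a), and the two writes that \emph{can} raise $\channel[i]$ --- the birth of a deputy with $\id=i$ (which assigns that deputy its own label $(i,1)$) and the \Labelling\ write for a deputy with $\id=i$ (which hands out a label $(i,w.\counter)$) --- are both exactly ``a deputy with id $i$ gives out a label (possibly to itself).''

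For items (d) and (e) I would simply describe the life cycle of a single badge $z$. It is born inside the full roster of the unique elected sheriff; it is then handed down through a sequence of \Deputize\ splits, each of which is a disjoint partition of a range and so leaves $z$ in the range of exactly one sheriff, until some sheriff's range shrinks to the singleton $\{z\}$, at which point that sheriff becomes the deputy with $\id=z$; that deputy retains $\id=z$ through every \Labelling\ interaction and relinquishes it only by becoming a sleeper (last line of \Ranking), and sleepers never revert. Hence, at every time after $z$ first appears in a sheriff's range, it is held by exactly one of ``a sheriff whose range contains $z$'', ``the deputy with $\id=z$'', or ``an agent that was the deputy with $\id=z$ and has since gone to sleep'' --- the existence half of this statement is item (e), and the uniqueness half is item (d).

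The bulk of the work, and the only place needing genuine care, is the exhaustiveness of the case analysis behind facts (i) and (ii): one has to walk through every branch of \Ranking, \SheriffElection, \Deputize, \Labelling, and \Sleep\ --- including the implicit deletion of now-inactive fields when an agent changes type, and the behaviour of an agent meeting a partner of an unexpected type --- to be certain that no rule secretly writes a \channel\ entry, a badge range, or an \id, and that the black-box leader election is invoked only between two leader-election agents. Once this enumeration is in hand, I do not expect any individual item to pose a difficulty.
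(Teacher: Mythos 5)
Your proof is correct and matches the paper's intent: the paper states these five facts as an unproved Observation, treating them as immediate from inspection of the pseudocode of $\Ranking$, \SheriffElection, \Deputize, \Labelling, and \Sleep{} (in executions started from a ruled configuration), which is exactly the direct case analysis you carry out. Your explicit bookkeeping of where \channel, the badge ranges, and \id{} can be written is simply a more detailed version of that omitted routine verification, so no further comparison is needed.
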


Now we shall show that all the deputies will be elected sufficiently quickly.
More precisely,
\begin{definition}[Quorate Population]
    The population is \emph{quorate} if there exist exactly $r$ deputies each with a unique $\id $ from $[r]$ and all other agents are either leader-election or unlabelled recipients. Furthermore, we require that the maximum value of each entry of the $\channel $ field is $1$ and that this value is held by the deputy with the matching $\id $ and is equal to their $\counter$ field.
\end{definition}
Combining this definition with the observations we obtain,
\begin{lemma}
     Suppose the population is ruled.
    Then, there exists a constant $b>0$ such that  the population becomes quorate within $bn\log{n}$ interactions w.h.p. 
    \label{lemma:ruled to quorate}
\end{lemma}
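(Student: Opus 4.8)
The plan is to separate a \emph{structural} claim from a \emph{timing} claim. For the structural claim, I would show that the \emph{first} configuration in which every badge has become the singleton of some deputy is already quorate. Starting from a ruled configuration the unique sheriff holds the full range $[1,r]$, so by \cref{obs:prot}\,(d)--(e) the sheriffs' badge ranges together with the deputies' singleton badges always partition $[r]$; in particular the number of sheriffs plus deputies never exceeds $r\le n/2$. Moreover, every $\channel$ entry is $0$ in a ruled configuration, so by \cref{obs:prot}\,(a)--(c), for as long as no label has been handed out, each $\channel$ entry lies in $\{0,1\}$ and entry $i$ equals $1$ exactly once deputy $i$ exists. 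Hence no agent can have channel-sum $\ge r$ (let alone $=n$) before all $r$ deputies have been created, so neither $\Labelling$ nor the transition to sleeping can fire before the last $\Deputize$. Consequently, at the moment that last $\Deputize$ completes there are exactly $r$ deputies with pairwise distinct $\id$s in $[r]$, every other agent is a leader-election agent or a recipient that never received a label, and each $\channel$ entry equals $1$ and is attained by the matching deputy (whose $\counter$ is still $1$); that is, the configuration is quorate.

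For the timing claim I would proceed in two stages. First, a ruled configuration contains a non-leader-election agent (the sheriff), and a leader-election agent turns into a recipient as soon as it interacts with any non-leader-election agent; since no agent ever re-enters leader election, the set of non-leader-election agents spreads as a one-way epidemic from the sheriff, and by \cref{lemma:epidemic} within $\cepi n\log n$ interactions no agent is in leader election, w.h.p. (If all $r$ deputies already exist by then, the structural claim finishes the proof.) Second, once there are no leader-election agents the recipients number $n$ minus the sheriffs and deputies, hence always at least $n/2$. I would then track a single badge $j$ that is not yet a deputy: letting $S_j$ be the unique sheriff whose range contains $j$, at each subsequent interaction (conditioned on the history) $S_j$ meets a recipient --- triggering $\Deputize$ and at least halving the size of the range holding $j$ --- with probability at least $\tfrac{2(n/2)}{n(n-1)}\ge\tfrac1n$, and after at most $\lceil\log_2 r\rceil$ such events $j$ has become a deputy. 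Thus the number of these progress steps in any window of $t$ interactions stochastically dominates $\Bin(t,\tfrac1n)$, so a Chernoff bound with $t=Cn\log n$ for a large constant $C$ makes $j$ a deputy within $Cn\log n$ interactions with probability $1-n^{-2}$. Union-bounding over the at most $n$ badges and the epidemic's failure probability gives that within $bn\log n:=(\cepi+C)n\log n$ interactions every badge belongs to a deputy, whence the configuration is quorate.

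The crux is the quantitative timing bound: viewing the $r-1$ deputizations as happening serially at rate $\Theta(1/n)$ would only yield an $O(rn)$ bound, which is far too slow when $r$ is large. The remedy is the per-badge view above --- each individual badge descends a root-to-leaf path of length $O(\log r)$ in the binary splitting tree of badge ranges, and the splits along that path occur at rate $\Omega(1/n)$ no matter how many other sheriffs are present --- so each badge is done after $O(\log r)$ geometric steps and the union bound costs only a polynomial factor. Two supporting points deserve attention: the bookkeeping that rules out premature $\Labelling$ or sleeping, for which the channel invariants of \cref{obs:prot} are exactly what is needed; and the fact that the set of non-leader-election agents is monotone (which one should confirm from $\SheriffElection$), since this is what licenses the one-way-epidemic argument despite deputizations occurring in parallel.
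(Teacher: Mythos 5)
Your proposal is correct and takes essentially the same route as the paper's proof: the identical structural argument (via Observation~\ref{obs:prot}) that no \Labelling{} or sleeping can occur before all $r$ deputies exist, so the configuration reached at the last \Deputize{} is automatically quorate, followed by the same per-badge timing analysis in which each badge needs only $\Oh(\log r)$ splitting interactions, each occurring with probability at least $\nicefrac{1}{n}$, finished off by binomial domination, a Chernoff bound, and a union bound over the badges. Your only deviation is the preliminary one-way-epidemic stage that first turns all leader-election agents into recipients before counting sheriff--recipient meetings as halving events, whereas the paper counts every interaction of the badge-holder with a non-sheriff/non-deputy as a split; your extra $\Oh(n\log n)$-interaction stage is a slightly more careful reading of $\Ranking$ and $\SheriffElection$ and changes nothing asymptotically.
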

\begin{proof}
    By Observations 
    \ref{obs:prot}-\ref{obs:Agents initialize to 0},
    \ref{obs:prot}-\ref{obs:Only a deputy can change a channel from 0 to 1} and
    \ref{obs:prot}-\ref{obs: Only a deputy can increase the maximum value of a channel}, 
    as well as the definition of a state being ruled, we must have that no agent can have a channel field summing to $r$ or higher unless $r$ deputies have been created or a deputy has given out a label to another agent. However, $\Labelling(u,v)$ prevents the distribution of non-deputy labels until at least one agent has a channel field summing to $r$. Then by observation \ref{obs:prot}-\ref{obs:Ids are unique} combined with the uniqueness of the initial sheriff we must have that all deputies are unique. Thus, at the time the first label is distributed to a non-deputy agent all $r$ deputies exist and have unique $\id $s. Furthermore, all deputies must have a counter equal to $1$, the channel associated with their $\id $ equal to $1$ and no agent can have a channel field with that entry greater than $1$ by Observations 
    \ref{obs:prot}-\ref{obs:Agents initialize to 0},
    \ref{obs:prot}-\ref{obs:Only a deputy can change a channel from 0 to 1}, and
    \ref{obs:prot}-\ref{obs: Only a deputy can increase the maximum value of a channel}.

\newcommand{\pr}[1]{\Prob{#1}}
\newcommand{\E}[1]{\Exp{#1}}
\newcommand{\Deputy}[1]{\ensuremath{\mathsf{Deputy}_{#1}}}

    For ease of discussion, without loss of generality, we will assume that $\log_2{r} \in \mathbb{N}$, i.e., we assume $r$ is a power of $2$. 
    It thus suffices to show that all $r$ deputies will be elected within $bn\log{n}$ interactions w.h.p. 
    By the previous part and Observation \ref{obs:prot}-\ref{obs:Ids are conserved} we must have that at any point after our initial state with one sheriff elected and all others still in the leader-election phase, for each $i \in \{1,...,r\}$ either a deputy with badge $i$ has existed or there is currently a  sheriff with badge $i$ in its pool. 
    If a deputy with badge $i$ does not currently or formerly exist, the sheriff holding badge $i$ must have interacted with non-sheriff/deputy agents at most $\log_2{r}-1$ times.

    In the following, we let $\Deputy{i,t}$ be the event that a deputy with $\id $ $i$ existed within $t$ interactions of the election of a unique sheriff. We will show that 
    \begin{align}
        \pr{\Deputy{i,t}} \geq 1 - o(n^{-3}) \label{eq:deputy}
    \end{align}
    Thus, taking a union bound, all $r$ deputies exist before $bn\log{n}$ interactions have elapsed w.h.p.

    Therefore, it remains to prove \autoref{eq:deputy}. For a fixed badge $i$, let $v_t(i) \in V$ be the agent that holds badge $i$ in step $t$. Further, let  $Y_t(i)$ be the total number of badges $v_t(i)$'s pool. Whenever $v_t(i)$ interacts with an agent that is neither a sheriff nor a deputy, we have $Y_{t+1}(i) = Y_{t+1}(i)/2$. Recall that $Y_{t+1}(i)$ is always even as $r$ is a power of two.
    Otherwise, if $v_i(t)$ does not interact with a non-sheriff/non-deputy, we have $Y_{t+1}(i) = Y_{t+1}(i)$. 
    Let now $X_t(i)$ count the interactions between the agent holding badge $i$ and non-sheriffs/non-deputies within $t$ interactions, then $$Y_t(i) = r \cdot 2^{-X_t(i)}.$$
    Therefore,
    $$ \pr{\Deputy{i,t}} \geq 1 -\pr{X_{t}<\log_{2}{r}-1}.$$
    Since there are at most $\frac{n}{2}$ deputies/sheriffs as $r<n/2$, we have that the probability that on any given interaction the sheriff would interact with a non-sheriff/deputy agent is at least $\frac{1}{n}$. 
    Thus, the probability a deputy with $\id $ $i$ has not existed within $t$ interactions is dominated by the probability $\Prob{Z<\log_2{r}}$ for $Z\sim Bin(t,\nicefrac{1}{n})$. 
    For $t=bn\log{n}$ (and so $\Exp{Z}=b\log{n}$) and $\delta=1-\nicefrac{\log_2{r}}{b\log{n}}$ with $b$ sufficiently large we have, by a multiplicative Chernoff bound
    \begin{align*}
    \Prob{\Deputy{i,t}}
   &\ge 1-\Prob{X_t(i)<\log_{2}{r}-1}
    \geq 1-\Prob{Z_{t}<\log_{2}{r}}
    \geq 1-\Prob{Z_t<(1-\delta)\Exp{Z}}
\\ &=1-\exp\left(\frac{-(1-\nicefrac{\log_2{r}}{b\log{n}})^2b\log{n}}{2}\right)
    \geq 1-\exp\left(\frac{-(1-\nicefrac{\log_2{e}}{b})^2b\log{n}}{2}\right)
\\ &=1-n^{\frac{-(1-\frac{\log_2{e}}{b})^2b}{2}}
    \end{align*}
    For $b$ sufficiently large, we get that $\Prob{\Deputy{i,t}} \geq 1-o(n^{-3})$ proving \autoref{eq:deputy}. 
\end{proof}
Once the population is quorate, we can then show that every agent receives a label within a relatively short period of time. We make use of the following technical definition:
\begin{definition}[Valid/complete $\channel$ Field]
    For a configuration with exactly one deputy for each $\id  \in \{1,...,r\}$ and $v_i$ the deputy with $\id $ $i$, an agent $u$ has a valid $\channel $ field if for all $i\in \{1,...,r\}$ $u.\channel [i]\leq v_i.\counter $ or if it is in the leader-election phase. Additionally, if $u$ is a deputy we require that $u.\counter =u.\channel [u.\id ]$.
    
    Further, we say that an agent has a complete $\channel $ field if $\forall i \in \{1,...,r\}$ $u.\channel [i]=v_i.\counter $.
\end{definition}

Observe that unless at least one agent has become ranked, if all agents have valid $\channel $ in a given interaction this is conserved in the next interaction.
\begin{definition}[Well-labelled Population]
    The population is \emph{well-labelled} if all agents have a unique temporary label either concretely as a recipient or sleeper, or implicitly as a deputy. Furthermore, the maximum value of $\channel [i]$ is equal to the highest temporary label given out by deputy $i$.
\end{definition}
\begin{lemma}
    Beginning from a quorate state there exists $d>0$ such that the population is well-labelled within $\frac{dn^2\log{n}}{r}$ interactions w.h.p.
    \label{lemma:quorate to well-labelled}
\end{lemma}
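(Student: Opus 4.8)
The plan is to show that from a quorate configuration the labelling proceeds in two phases: a short $\Oh(n\log n)$-interaction ``warm-up'' that enables the deputies to start handing out labels, followed by the actual labelling, which finishes within $\Oh(\nicefrac{n^2}{r}\log n)$ interactions. First I would record the deterministic structure that persists throughout, using \cref{obs:prot}: from a quorate configuration no new deputy is ever created (there is no sheriff left), each of the $r$ deputies keeps its unique $\id$ and its invariant $\counter = \channel[\id]$, and by \ref{obs:prot}-\ref{obs: Only a deputy can increase the maximum value of a channel} the population-wide maximum of $\channel[i]$ always equals $\counter_i$. In particular the ``$\channel$'' clause of \emph{well-labelled} holds automatically, as does uniqueness of the temporary labels (they are $(i,1),(i,2),\dots$ with $i$ the unique $\id$). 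I would also note that $\sum_i \counter_i$ equals the number of agents currently holding a temporary label, which is at most $n$ with equality exactly when every agent is labelled; hence no agent reaches the sleeping condition $\sum_i \channel[i]=n$ before the population is well-labelled, so deputies and recipients retain their types during this phase. Thus it suffices to show that all $n-r$ non-deputy agents acquire a label in time.

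For the warm-up phase, recall that $\Labelling$ lets a deputy hand out a label only once its $\channel$ sums to at least $r$, i.e.\ once it has ``heard of'' every deputy. Treating ``agent $u$ has $\channel[j]\geq 1$'' as an infection, this is a collection of $r$ one-way epidemics, each seeded by the time the configuration is quorate; within a further $\Oh(n\log n)$ interactions any leftover leader-election agents convert to recipients and are then re-infected by the ongoing broadcast. By \cref{lemma:epidemic} all of these complete within $\cepi n\log n$ interactions w.h.p., after which every agent is a recipient or deputy with $\sum \channel \geq r$ and $\Labelling$ is enabled everywhere. Since $r<n/2$, this costs only $\Oh(n\log n)=\Oh(\nicefrac{n^2}{r}\log n)$ interactions.

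The core of the argument is a capacity invariant: as long as some non-deputy agent is unlabelled, at least $\beta r$ deputies still have a free label in their pool, for a constant $\beta=\beta(c)>0$. Indeed, the total label capacity (excluding the deputies' own slots) is $\sum_i(\lceil cn/r\rceil-1)\geq cn-r$, and at most $n-r$ of these are used while an agent remains unlabelled, leaving at least $(c-1)n$ free labels; since each deputy holds at most $\lceil cn/r\rceil$ labels, at least $(c-1)n/\lceil cn/r\rceil=\Omega(r)$ deputies are non-full. Now fix an agent $u$ that is still an unlabelled recipient. On each interaction the scheduled pair equals $\{u,d\}$ for a fixed non-full deputy $d$ with probability $\tfrac{2}{n(n-1)}$, so, conditioned on $u$ still being unlabelled, $u$ meets some non-full deputy --- and is labelled by $\Labelling$, the warm-up having enabled it --- with probability at least $\tfrac{2\beta r}{n(n-1)}=\Omega(r/n^2)$. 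Hence the number of interactions until $u$ is labelled is stochastically dominated by a geometric random variable with success probability $\Omega(r/n^2)$, and so exceeds $C\cdot\tfrac{n^2}{r}\log n$ with probability at most $n^{-\Omega(C)}$. A union bound over the at most $n$ non-deputy agents, with $C$ a large enough constant, shows every agent is labelled within $\Oh(\nicefrac{n^2}{r}\log n)$ interactions w.h.p., which together with the automatic clauses above means the population is well-labelled.

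I expect the main obstacle to be bookkeeping rather than probability: one must check carefully, via \cref{obs:prot} and the definitions of \emph{valid} and \emph{complete} $\channel$ fields, that the configuration stays well-structured throughout --- in particular that the capacity invariant is genuinely deterministic, that no agent sleeps, ranks, or is re-elected in a way that breaks the ``exactly $r$ deputies with unique $\id$s'' picture, and that the minor nuisance of leftover leader-election agents (whose $\channel$ field may start all-zero) is absorbed into the $\Oh(n\log n)$ warm-up. The essential quantitative point is that the constant-factor slack $c>1$ in the pool sizes is exactly what guarantees $\Omega(r)$ always-available deputies, which in turn pins the per-agent hitting time at $\Oh(\nicefrac{n^2}{r}\log n)$.
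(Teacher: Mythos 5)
Your proposal is correct and follows essentially the same route as the paper's proof: an $\Oh(n\log n)$ epidemic warm-up until the $\sum\channel\geq r$ condition enables $\Labelling$, the observation that the slack $c>1$ guarantees $\Omega(r)$ deputies with free labels whenever an unlabelled agent remains, a per-interaction labelling probability of $\Omega(\nicefrac{r}{n^2})$, and a union bound over agents to conclude within $\Oh(\nicefrac{n^2}{r}\log n)$ interactions. The only differences are cosmetic (the paper spreads ``sum $\geq r$'' to the deputies in two epidemic stages rather than via $r$ parallel epidemics to everyone, and phrases the tail bound directly rather than via geometric domination), plus some extra bookkeeping on your part that the paper leaves implicit.
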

\begin{proof}
    Since all agents have valid $\channel $ fields and there are $r$ deputies with their $\counter $ fields at $1$, by the condition in $\Labelling(u,v)$ one agent must gain a complete  $\channel $ field before any labelling can occur. 
    The maximum value in each subfield of the $\channel $ is propagated by an epidemic, and so by standard results, \textit{at least one} deputy will have a complete $\channel $ field within $d'n\log{n}$ interactions w.h.p.
    
    At this point, the agent with a complete $\channel $ field has a $\channel $ field that sums to $r$. 
    
    Since the maximum value in each subfield of the $\channel $ field is propagated by epidemic, again by standard results \textit{all} deputies will have a $\channel $ field with sum greater than or equal to $r$ within $\hat{d}n\log{n}$ interactions for some constant $\hat{d}>0$ w.h.p.
    
    After this point, if an unlabelled recipient or leader-election agent interacts with a deputy either the deputy has given out all of its labels or the agent receives a label. 
    Since there are $r$ deputies each with $cn/r$ labels to distribute, if there is an unlabelled agent then there are at least $(1-\nicefrac{1}{c})r$ deputies with free labels.
    
    Thus, in each interaction any given previously unlabelled agent receives a label with probability at least $\frac{2(1-\nicefrac{1}{c})r}{n(n-1)}$. 

    \medskip
    \noindent Let $\Un$ be the event that after $t$ interactions from all deputies having a $\channel $ field that sums to at least $r$, and there are still unlabelled agents. By a union bound,
        \[\Prob{\Un} \leq n\bc{1-\frac{2(1-\nicefrac{1}{c})r}{n(n-1)}}^{t}.\]
    Then as $1-x\leq e^{-x}$ and taking
        $t = 4 \log{n} \cdot \frac{n(n-1)}{2(1-\nicefrac{1}{c})r}\leq \frac{2n^2\log{n}}{(1-\nicefrac{1}{c})r}$,
    \[\Prob{\Un}
        \leq n \exp\bc{-t\cdot \frac{2(1-\nicefrac{1}{c})r}{n(n-1)}}
        \leq n\cdot e^{-4\log n}
        = n\cdot n^{-4} = n^{-3}
        =o(n^{-2}).\]
    Thus, for $d>d'+\hat{d}+\frac{2}{(1-\nicefrac{1}{c})}$ the claim holds.
\end{proof}
\begin{lemma}
\label{lemma:well-labelled to ranked}
    From the interaction where the population becomes well-labelled, there exists $f,\csleep  >0$ such that all agents will go to sleep and wake up within $fn\log{n}$ interactions and the last agent will go to sleep before the first one wakes up w.h.p. 
    Furthermore, if this occurs, every agent that wakes up will adopt a unique rank.
\end{lemma}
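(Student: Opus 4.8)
The plan is to follow, from the well-labelled interaction onward, the ``correct'' timeline of $\Ranking$ and bound each step. At that point the deputies' counters are frozen at their final values $c_1,\dots,c_r$ with $\sum_i c_i=n$; write $\bar c=(c_1,\dots,c_r)$. I would establish, each w.h.p., the following chain of events: \emph{(i)} the vector $\bar c$ propagates to every agent's $\channel$ field; \emph{(ii)} as soon as an agent's channel sums to $n$ it becomes a sleeper, and ``being a sleeper'' spreads until every agent is asleep; \emph{(iii)} the $\sleep$ timers advance only on their owner's own interactions, so by taking $\csleep$ a large constant the first timer does not expire until long after \emph{(i)} and \emph{(ii)} are complete; \emph{(iv)} when the first timer expires the two agents involved become ranked, and ``being ranked'' spreads until all agents are ranked. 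The crucial point is separating two timescales: a one-way epidemic completes in $\cepi n\log n$ interactions (\cref{lemma:epidemic}), whereas a sleeper needs $\Theta(\csleep n\log n)$ global interactions to accumulate $\csleep\log n$ of its own interactions (\cref{lem: full population technical}); a sufficiently large $\csleep$ drives the second far above the first.

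In more detail: validity of the $\channel$ fields (which holds at the well-labelled interaction and is preserved while no agent is ranked) gives $\channel[i]\le c_i$ for every agent throughout this phase, while deputy $i$ already has $\channel[i]=c_i$; since each coordinate is max-propagated in every interaction (no agent is in leader election anymore), the indicator ``$\channel[i]=c_i$'' spreads as a one-way epidemic from deputy $i$, so by \cref{lemma:epidemic} there is a time $t_{\mathrm{ch}}\le\cepi n\log n$ by which every agent holds $\bar c$. Hence some agent reaches $\sum_i\channel[i]=n$ and becomes a sleeper at some $t_1\le t_{\mathrm{ch}}$, and from $t_1$ the last branch of $\Sleep$ makes ``sleeper'' spread by a one-way epidemic, so all agents are sleepers by $t_1+\cepi n\log n$ (no agent can skip this state, since ``ranked'' is reachable only from ``sleeper'' inside $\Sleep$). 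Each sleeper's timer advances by one on each of its own interactions, so by \cref{lem: full population technical} with a fixed $\alpha$ and a union bound over the $n$ agents, every timer needs between $\csleep n\log n/\alpha$ and $\alpha\csleep n\log n$ global interactions to expire, w.h.p. Thus the earliest expiry occurs no earlier than $t_1+\csleep n\log n/\alpha$; choosing $\csleep>\alpha\cepi$ makes this strictly later than $t_1+\cepi n\log n$, hence strictly later than both $t_{\mathrm{ch}}$ and the moment the last agent falls asleep. This proves the ``last agent sleeps before the first wakes'' claim and guarantees that, at the first timer expiry, every agent is asleep and holds $\bar c$. At that expiry the two interacting agents become ranked (using $\bar c$), and the first branch of $\Sleep$ spreads ``ranked'' by a one-way epidemic, so within another $\cepi n\log n$ interactions every agent is ranked, w.h.p. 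Summing the three windows gives the bound with $f=2\cepi+\alpha\csleep=\Oh(1)$.

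For the uniqueness claim, observe that each agent is a sleeper continuously from when it first becomes one until it becomes ranked, and during that interval its channel is still max-updated; since every agent holds $\bar c$ before any wake-up and $\max(\,\cdot\,,\bar c)$ fixes $\bar c$, every agent has $\channel=\bar c$ at the instant it becomes ranked. It then takes as its rank the position of its own label $(i,j)$ in the lexicographic ordering of $\{(i',j'):i'\in[r],\,j'\in[c_{i'}]\}$. This set has $\sum_i c_i=n$ elements; every agent's label lies in it (as $j\le c_i$, deputy $i$'s counter being nondecreasing and equal to $c_i$ at the end); and by well-labelledness the $n$ agents' labels are pairwise distinct. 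Hence the assignment is an injection $[n]\to[n]$, i.e.\ a bijection, so every agent obtains a unique rank.

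The main obstacle is getting the timing hierarchy right: one must ensure no agent becomes ranked before \emph{both} every agent is asleep \emph{and} every agent holds $\bar c$, even though sleeper-creation, channel propagation and timer expiry all play out over overlapping $\Theta(n\log n)$ windows. This is absorbed into the single constant $\csleep$, but the argument genuinely needs the \emph{two-sided} concentration of each agent's interaction count from \cref{lem: full population technical} (the lower bound is what pushes the first expiry past everything else) together with a union bound over all $n$ agents, and it needs the structural observation that ``sleeper'' and ``ranked'' each propagate as honest one-way epidemics so that \cref{lemma:epidemic} applies.
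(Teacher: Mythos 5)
Your proposal is correct and follows essentially the same route as the paper's proof: complete $\channel$ fields spread by epidemic, a concentration bound on per-agent interaction counts (you via \cref{lem: full population technical}, the paper via a direct Chernoff bound with $\csleep=3f'$) pushes the first sleep-timer expiry past the point where all agents are asleep with full channel knowledge, and uniqueness follows because every agent ranks itself via the same pre-agreed bijection on the common label set. Your treatment is slightly more explicit than the paper's about how the sleeper and ranked states spread as one-way epidemics (the paper settles for a pigeonhole argument for the wake-up), but the underlying argument is the same.
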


\newcommand{\FirstWake}{\ensuremath{\mathsf{FirstWake}}\xspace}

\begin{proof}
    Since new labels cannot be distributed, and the maximum value in each subfield of $\channel $ propagates according to an epidemic by standard results, there exists $f'>3$ such that all agents will have a complete $\channel $ field within $f'n\log{n}$ interactions w.h.p. 
    An agent goes to sleep the interaction it gets a complete $\channel $ field and will not wake up until some agent has had at least $\csleep  \log{n}$ interactions. 
    Let $\FirstWake$ be the random variable describing the number of interactions after the first agent goes to sleep the first agent wakes up. 
    $\FirstWake$ is clearly greater than the number of interactions before any agent takes $\csleep  \log{n}$ interactions after the first agent goes to sleep.
    \newcommand{\Steps}{\ensuremath{\mathsf{Steps}_{i}}\xspace}
    Let \Steps be the number of interactions performed by agent $i$ within $f'n\log{n}$ of the first agent getting a complete $\channel $ field. We can see that $\Steps\sim \Bin(f'n\log{n},\nicefrac{2}{n})$ and thus we can apply a Chernoff bound to obtain an an upper bound on the probability that too many interactions occur. Taking $\csleep  =3f'$ and $\delta=2$
    \[\Prob{\Steps<\csleep  }=\Prob{\Steps>(1+\delta)\Exp{\Steps}}
         \leq \exp(\frac{-\delta^2\Exp{\Steps}}{2+\delta})
         =\exp(-f'\log{n})=n^{-f'}.\]
    Thus, taking a union bound and $f'>4$,
    \begin{align}
        \Prob{\FirstWake}\leq n^{1-f'}\leq n^{-3}=o(n^{-2})
    \end{align}
    Thus, w.h.p.\ we get that all agents go to sleep before the first agent wakes up. Now since waking up only requires interacting $\csleep  \log{n}$ times we must have that one agent wakes up within $\csleep n\log{n}$ interactions by the pigeonhole principle with certainty. Thus, taking $f=\csleep$ gives the first part of the claim. Furthermore, since every agent has a complete and valid $\channel $ field they are aware of the label of every other agent. Thus all agents have the same information and so any pre-agreed bijection from the set of distributed labels to the ranks from $[n]$ will produce a unique ranking.
\end{proof}
Now tying it all together, we can prove that $\Ranking$ satisfies \autoref{lem:fast-and-silent-ranking}.

\begin{proof}[Proof of \autoref{lem:fast-and-silent-ranking}]
    Beginning, from a fully dormant configuration, for the protocol not to then produce a correct ranking within this time one of the following must fail:
    \begin{itemize}[noitemsep]
        \item The population does not become ruled within $an\log{n}$ interactions, which occurs with low probability (\cref{lemma:reset to ruled}).
        \item The population does not become quorate within $bn\log{n}$ interactions of being ruled, which occurs with low probability (\cref{lemma:ruled to quorate}).
        \item The population does not become well-labeled within $d\frac{n^2\log{n}}{r}$ interactions of being quorate, which occurs with low probability (\cref{lemma:quorate to well-labelled}).
        \item All agents do not uniquely pick a ranking within $fn\log{n}$ interactions of the last label being given, which occurs with low probability (\cref{lemma:well-labelled to ranked}).
    \end{itemize}
    Since each of these events fail to occur w.h.p., by taking a union bound we have that a unique ranking is obtained w.h.p. within $(a+b+d+f)\frac{n^2\log{n}}{r}$.
    Furthermore, once an agent is ranked the $\Ranking$ subprotocol can no longer update the $\qRanking$ component of its state, and thus has become silent.
    Finally, the state complexity follows immediately from the definition of the sub-protocol. Thus, the claim holds.
\newcommand{\MaxInteractions}[1]{\ensuremath{\mathsf{MaxInteractions}_{#1}}}
\end{proof}
\subsection{Non Self-stabilizing Leader Election from Awakening Configurations}
\label{apx:simple_leader_elect}

In this subsection, we describe $\FastLeaderElect$, a simple protocol that elects a leader w.h.p. from an awakening configuration.
We require this protocol because most non-self-stabilizing leader election protocols from the literature are designed to be started from configuration where all agents have the same state.
This is not the case for us, because even if we triggered a reset and are in an awakening configuration, not all nodes are in the same state.
In particular, some agents may \emph{wake up} much later than others and the protocol must be able to deal with this fact.
While many existing protocols can be adapted to work with awakening configurations, we present a simple protocol to be self-contained. 
Our protocol $\FastLeaderElect$ uses $O(n^3) \in 2^{O(\log n)}$ states and w.h.p. elects a unique leader in time $O(\log n)$ w.h.p.
The number of states is large if compared with other non self-stabilizing protocols but for our specific purpose, it is reasonable.
In particular, for every possible choice of $r \in [\nicefrac{n}{2}]$ the number of states is within $2^{O(r^2 \log n)}$ and therefore fits bound of \cref{thm:framework}.
More precisely, to ensure a clean interface with the other protocols, we assume the agents have field $\LeaderBit$ that indicates whether the agent is the leader and $\LeaderDone$ that indicates that the protocol has finished for this agent.
Altogether, we show the following:
\begin{lemma}[Fast Leader Election]
  There is a population protocol using $2^{O(\log n)}$ states electing a unique leader in time $O( \log n)$ w.h.p. starting from an awakening configuration.
  Further, after at most time $O(\log n)$, there is, w.h.p., an agent $\ell$ with $\LeaderDone = 1$ and $\LeaderBit = 1$, and at that time, w.h.p., all other agents $v \neq \ell$ have $\LeaderBit = 0$.
\end{lemma}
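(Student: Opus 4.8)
The plan is to design $\FastLeaderElect$ around two interleaved phases: an \emph{identifier-selection} phase in which every participating agent picks a random name from a polynomially large pool, and a \emph{minimum-broadcast} phase in which agents learn (via a one-way epidemic) the smallest identifier present, so that the unique agent holding it declares itself leader. The subtlety forced by awakening configurations is that agents \emph{wake up at different times}, so I cannot assume all agents begin simultaneously; an agent that wakes up late may contaminate a broadcast that has already (apparently) converged. I would handle this the standard way: each agent carries a $\LECount$ countdown timer, initialized to $\Theta(\log n)$ upon awakening, which is only allowed to decrease while the agent agrees (in a ``phase clock'' sense) with the agents it meets, and the $\LeaderDone$ flag is not raised until the timer expires; meanwhile the first computing agent to awaken spreads an ``awake'' epidemic that reaches all agents in $O(n\log n)$ interactions w.h.p.\ by \cref{lemma:epidemic}, so that by the time any agent's timer could expire, every agent has already joined the protocol and chosen its identifier.

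The concrete steps, in order, would be: \textbf{(1)} Upon awakening, an agent samples $\Identifier \in [n^3]$ almost-u.a.r.\ (using the sampling primitive, derandomized via \cref{lemma:derandomization}), sets $\MinID \gets \Identifier$, and initializes $\LECount \gets c\log n$ for a suitable constant $c$. \textbf{(2)} On every interaction between two agents that have both awoken, they update $\MinID \gets \min$ of their two $\MinID$ values; this is a one-way epidemic on the value of the global minimum identifier, so by \cref{lemma:epidemic} all agents hold the true minimum within $O(n\log n)$ interactions (i.e.\ time $O(\log n)$) after the last agent has awoken, w.h.p. \textbf{(3)} The $\LECount$ timer is decremented on each interaction; if it reaches $0$, the agent sets $\LeaderDone \gets 1$ and sets $\LeaderBit \gets 1$ iff its own $\Identifier$ equals its $\MinID$. \textbf{(4)} Argue that with probability $1 - O(1/n)$ the minimum identifier is unique (birthday bound: $\binom{n}{2}/n^3 = O(1/n)$), so exactly one agent ends with $\LeaderBit = 1$. \textbf{(5)} Compose the timing bounds: awakening epidemic $O(\log n)$ time, plus min-broadcast $O(\log n)$ time, plus timer length $O(\log n)$; choosing the timer constant $c$ large enough relative to the epidemic constant $\cepi$ (and using \cref{lem: full population technical} to translate interaction counts to per-agent interaction counts) guarantees that no agent's timer expires before the true global minimum has propagated everywhere, w.h.p. \textbf{(6)} State-count bookkeeping: $\Identifier, \MinID \in [n^3]$ and $\LECount \in [\Theta(\log n)]$ give $O(n^6 \log n) = 2^{O(\log n)}$ states.

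The main obstacle I anticipate is step \textbf{(5)}: the race between the per-agent countdown timers and the min-identifier epidemic, complicated by staggered wake-up times. The worry is an agent that wakes very late, re-sends a small identifier after neighbours' timers have nearly expired, or an agent whose timer drains ``too fast'' because it happens to be scheduled unusually often. The fix is to make the timer robust — e.g.\ only decrement it when the interaction partner is also awake and has a $\MinID$ no larger, and reset it (or refuse to finish) on observing a strictly smaller incoming value — so that ``timer expired'' provably certifies ``I have seen the global minimum'', and then bound, via Chernoff over $\Bin(t, 2/n)$ as in \cref{lem: full population technical}, the probability that any agent accumulates $c\log n$ qualifying interactions before time $(\cepi + o(1))\log n$ has elapsed. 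Getting the constants to line up cleanly — $c$ large enough to wait out the epidemic but the total still $O(\log n)$ — is routine once the timer's monotonicity invariant is set up correctly, but it is where the care is needed.
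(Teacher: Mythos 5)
Your proposal matches the paper's own construction and proof essentially step for step: random identifiers from $[n^3]$ with a union/birthday bound for uniqueness, a min-propagating epidemic, a $\Theta(\log n)$ per-agent countdown gating $\LeaderDone$/$\LeaderBit$, and the race between the countdown and the (two-stage: wake-up then min-broadcast) epidemic resolved by choosing the timer constant large relative to $\cepi$ via \cref{lem: full population technical} and \cref{lemma:epidemic}. The extra timer-robustness tweaks you suggest in step (5) are not needed in the paper's argument (the plain decrement-on-every-interaction timer already suffices), but they do not change the approach.
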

\begin{wrapfigure}{R}{0.55\textwidth}
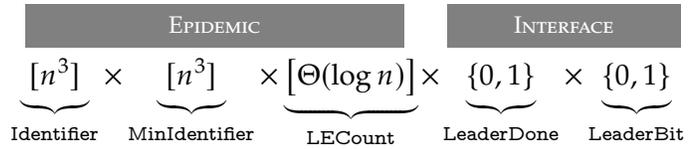

\centering
\[
\arraycolsep=0.5pt
\begin{array}{@{}ccccccccccccccccccc@{}}
{\graybox{12.5em}{\mathstrut\textsc{\footnotesize Epidemic}} }  & &{\graybox{7.5em}{\mathstrut\textsc{\footnotesize Interface}} }\\
  \underbrace{[n^3]}_{\Identifier}  \times  
\underbrace{[n^3]}_{\MinID}
\times
\underbrace{\left[\Theta(\log n)\right]}_{\LECount} &\times &
\underbrace{\{0,1\}}_{\LeaderDone} \times
\underbrace{\{0,1\}}_{\LeaderBit}
\end{array}
\]
    \caption{An overview of $\FastLeaderElect$'s state space. The overall state complexity is $2^{\Oh(\log{n})}$.}
    \label{fig:state-space-le}
\end{wrapfigure}
The state space of the protocol $\FastLeaderElect$ is presented in \cref{fig:state-space-le}.
In a nutshell, the protocol works as follows:
On its first activation (which either happens if the agent wakes up from $\PropagateReset$ or it interacts with another computing agent) it draws a value $x \in [n^3]$ (almost) u.a.r. and sets its field $u.\Identifier$ to $x$.
Then each agent determines the minimal identifier via a two-way epidemic. 
To this end, it uses a field $\MinID$ that stores the smallest value seen so far. In other words, upon an interaction between agents $u$ and $v$ they set
\begin{align}
    u.\MinID,v.\MinID \gets \min \{u.\MinID,v.\MinID \}
\end{align}
Each agent further maintains a counter $\LECount$ that is decreased by one on every interaction.
The counter is initialized to $c \log n$ where $c > 14$.
After $c\log n$ interactions, i.e. when the counter reaches $0$, the agent $u$ sets $u.\LeaderDone$ to $1$.
Then, it checks if it has the minium out of all identifiers it has seen thus far, i.e., it test $u.\Identifier =u.\MinID$.
 If so, it declares itself the leader and set $u.\LeaderBit$ to $1$.
Otherwise, $u.\Identifier =u.\MinID$, it set $u.\LeaderBit$ to $0$.

We now proves the protocols correctness.
First we note that the minimal identifier is unique w.h.p. as we draw the identifiers from such a large space. The probability that any two agent draw the same identifier is $\frac{2}{n^3}$. 
As there are $O(n^2)$ possible pairs of agents, the union bound implies that w.h.p. there are no two agents with the same identifier.
Recall that an agent decleares itself if its $\LECount$ reaches $0$ and its own $\Identifier$ matches $\MinID$.
Thus, only one agent will ever set its $\LeaderBit$ to be $1$ \textbf{if} by the time an agent's timer runs out, it knows the minimum.
Therefore, it remains to show that this event happens w.h.p.
We will first show the following auxilliary lemma:
\begin{lemma}
Let $t_0$ be the time step in which the first agent starts executing the protocol. Them, w.h.p., in time step $t_0 + 14 \cdot n \cdot \log n$ every agent $v \in V$ stores the minimum identifier in $v.\MinID$.
\end{lemma}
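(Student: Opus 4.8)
The plan is a two-phase one-way-epidemic argument, each phase costing at most $\cepi n\log n < 7n\log n$ interactions, for a total below $14n\log n$; the constant $14$ is exactly $2\cepi$, which is what makes the bound work given that \cref{lemma:epidemic} provides $\cepi < 7$.

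First I would fix notation. Let $m$ be the minimum of the $n$ identifiers $v.\Identifier$ (well defined once every agent has been activated and hence drawn an identifier), and let $v^\star$ be an agent attaining it. Since a drawn identifier is never subsequently changed, $v^\star$ carries the value $m$ from its first activation onward, and $v^\star.\MinID$ equals $m$ from then on: it is reset to $v^\star.\Identifier$ on first activation and thereafter only records a minimum of identifiers it has seen, all of which are $\ge m$. The same remark gives $v.\MinID \ge m$ for every computing agent at all times; and since dormant agents (those still running $\PropagateReset$) do not execute the leader-election transition at all, no stale $\MinID$ value inherited from a corrupted prior state can leak into the computing population. Uniqueness of $v^\star$ is not needed here.

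\emph{Phase 1 (everyone becomes computing).} At the awakening configuration at time $t_0$, a single agent is computing and the rest are dormant. A computing agent turns any dormant agent it meets into a computing agent, and computing agents do not revert (no reset occurs along the way), so the set of computing agents contains the infected set of a one-way epidemic started from a single source at $t_0$; by \cref{lemma:epidemic} that epidemic covers the whole population within $\cepi n\log n$ interactions, w.h.p. In particular $v^\star$ is active, and hence $v^\star.\MinID = m$, by time $t_1 := t_0 + \cepi n\log n$.

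\emph{Phase 2 (the minimum floods).} I would then condition on the (w.h.p.)\ event of Phase 1 — it depends only on the first $\cepi n\log n$ interactions, so the later interactions remain i.i.d.\ uniform pairs — and run a second epidemic from $t_1$: call $v$ \emph{marked} when $v.\MinID = m$. The marked set is non-empty at $t_1$ (it contains $v^\star$), a marked agent stays marked, and an unmarked partner of a marked agent sets $\MinID \gets \min(\MinID, m) = m$ and becomes marked; so this is again a one-way epidemic, which by \cref{lemma:epidemic} reaches all agents within a further $\cepi n\log n$ interactions, w.h.p. A union bound over the two events gives that every agent has $\MinID = m$ by time $t_0 + 2\cepi n\log n < t_0 + 14 n\log n$, which is the claim. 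I do not expect a real obstacle here: the only points needing a little care are that $m$ must be read as the global minimum over all $n$ identifiers — realised only once Phase 1 has completed, but carried throughout by $v^\star$ — and that conditioning on Phase 1 does not disturb the independence used in Phase 2.
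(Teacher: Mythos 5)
Your proposal is correct and follows essentially the same route as the paper: two sequential one-way epidemics (first the wake-up spreads so that the minimum-identifier agent is activated, then the minimum value itself floods via the $\MinID$ updates), each bounded by $\cepi n\log n < 7n\log n$ interactions via \cref{lemma:epidemic}, combined with a union bound to get $14 n\log n$. Your treatment is, if anything, slightly more careful than the paper's about when the global minimum is realized and about the conditioning between the two phases.
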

\begin{proof}
W.l.o.g. let $v_0 \in V$ be first agent that wakes up and starts executing the protocol and $v_{min} \in V$ be the agent that will draw the minimum identifer.
First, we determine the time $t'$ in which $v_{min}$ draws its identifier. 
Note that that any agent that interacts with an agent that executes the protocol will also start executing the protocol.
Thus, $t'$ must be smaller or equal time that an epidemic from $v_0$ takes to reach $v_{min}$.
Then, after another epidemic time, every agent must learned the minimum value. 
Therefore, after twice the epidemic time, all agents must have learned.
By \cref{lemma:epidemic} a single epidemic requires $7 \cdot n \cdot \log n$ interactions to reach all agents, w.h.p..
Using the union bound, two sequential epidemic therefore require $14 \cdot n \cdot \log n$ interactions, w.h.p.
Thus, the lemma follows.
\end{proof}
Recall that an agents counter $\LECount$ is decreased on every interaction.
By \cref{lem: full population technical} an agent's counter $\LECount$ will, w.h.p., not reach $0$ within $14 n \log n$ interaction for a large enough $c$. 
Thus, all agents will, w.h.p, receive the minimum value before they decide whether they are the leader or not. 
Together with the fact that the counter will hit $0$ in $O(\log n)$ times, this proves the lemma.

\section{Proofs and Additional Pseudocode for \texorpdfstring{$\DetectCollision$}{DetectCollision}}
\label{apx:error_detection}
In this section we present the omitted sub-protocols of $\DetectCollision$. Furthermore, we  formalise our requirements of the sub-protocol and provide a proof that it satisfies these conditions.
\begin{algorithm}[!ht]{$\CheckMessageConsistency(u,v)$\label{pr:consistency}}
for each $j \in [r_u^2]$ do
  if $v.\messages [(u.\rank , j)] \neq \bot$ and $v.\messages [(u.\rank , j)] \neq u.\observations [j]$ then
    set local state of $u,v$ to $\top$
\end{algorithm}

\begin{algorithm}[!ht]{$\UpdateMessages(u,v)$\label{pr:updatemsg}}
$u.\counter  \leftarrow u.\counter +1$ /*Update the counter*/
if $u.\counter =c\log{r_u}$ then
  $u.\tagVal \gets$  value chosen u.a.r. from $\{1,...,r_u^5\}$
  $u.\counter  \leftarrow 1$ /*Update the signature and reset the counter*/
  for each $j \in [r_u^2]$ do /*Update u's messages and observations to match the new signature*/
    if $u.\messages [(u.\rank ,j)]\neq \bot$ then
      $u.\messages [(u.\rank ,j)]\leftarrow u.\tagVal $
      $u.\observations [j]\leftarrow u.\tagVal $
for each $j \in [r_u^2]$ do /*Update v's messages that u governs to have u's signature*/
  if $v.\messages [(u.\rank ,j)]\neq \bot$ then
    $v.\messages [(u.\rank ,j)]\leftarrow u.\tagVal $
    $u.\observations [j]\leftarrow u.\tagVal $
\end{algorithm}

\newcommand{\TempArrayU}{\FieldName{uTemps}}
\newcommand{\TempArrayV}{\FieldName{vTemps}}
\newcommand{\IDs}{\FieldName{IDs}}
\newcommand{\FloorIDs}{\FieldName{floorIDs}}
\newcommand{\CeilIDs}{\FieldName{ceilIDs}}

\begin{algorithm}[!ht]{$\LoadBalance(u,v)$\label{pr:loadbalance}}
$\TempArrayU, \TempArrayV$ $\gets$ copies of $u.\messages, v.\messages$ with all entries set to $\bot$
for each $i \in \smap(u)$ do
  for each $k \in [1,r_u^5]$ do
    $\IDs \gets \{j \in [r_u^2]: u.\messages [(i,j)]=k \vee v.\messages [(i,j)]=k\}$
    $\FloorIDs, \CeilIDs \gets$ partition of $\IDs$ into two sets of size $\lfloor\frac{|\IDs|}{2}\rfloor$ and $\lceil\frac{|\IDs|}{2}\rceil$ respectively
    if $\TempArrayU$ has more non-$\bot$ cells than $\TempArrayV$ then
      for each $j \in \FloorIDs$ do $\TempArrayU[(i,j)]\leftarrow k$
      for each $j \in \CeilIDs$ do $\TempArrayV[(i,j)]\leftarrow k$
    else then
      for each $j \in \FloorIDs$ do $\TempArrayV[(i,j)]\leftarrow k$
      for each $j \in \CeilIDs$ do $\TempArrayU[(i,j)]\leftarrow k$
$u.\messages  \leftarrow \TempArrayU$
$v.\messages  \leftarrow \TempArrayV$
\end{algorithm}
We consider $\ErrorDetection$ as taking read-only inputs in $[n]$ (the $\rank$ field),
    with its additional state taken as a set $\QVerify$.
We will thus write agents' states as tuples in $[n] \times \QVerify$,
    and consider its transition function as mapping from $([n] \times \QVerify)^2$ to $\QVerify^2$.
The state $\top \in \QVerify$ indicates that an error was found.
Further, we say that the message indexed by $(i,j)$ is governed by agents of rank $i$ and has $j$ as its ID. Additionally, we say that agent $u$ has message $(i,j)$ with content $k$ if the cell of $u$'s $\messages $ indexed by $(i,j)$ has the value $k$.

We show that $\ErrorDetection$ will never generate such a $\top$ when correctly initialized on a correct ranking,
    but quickly generate such a state under any initialization on an incorrect ranking (w.h.p.).

\begin{lemma}
    \label{prop: error detection}
    The protocol $\ErrorDetection$ (with parameter $1 \leq r < n/2$) uses $2^{\Oh(r^2 \log n)}$ states and has the following properties:\begin{enumerate}[label=(\alph*),noitemsep]
        \item \emph{Soundness: No false positive upon correct initialization on correct ranking.}
            Let $(x_i)_{i \in [n]} \in [n]^n$ be a correct ranking (i.e., a permutation of $[n]$).
            Then from the (correct, initial) configuration $\vec{c}' = ((x_i, \qNaughtDetectCollision))_{i \in [n]}$,
                no configuration with an agent having a state $(\cdot, \top)$ is reachable by $\ErrorDetection$.
        \item \emph{Robust completeness: Fast detection of erroneous rankings, no matter $\ErrorDetection$'s state.}
            If at time $t$, the configuration $((x_i, q'_i))_{i \in [n]} \in ([n] \times \QVerify)^n$
                is such that $(x_i)_{i \in [n]}$ is not a correct ranking,
                then within $\Oh(\frac{n^2 \log n}{r})$ interactions,
                    there will be an agent in a state $(\cdot, \top)$.
    \end{enumerate}
\end{lemma}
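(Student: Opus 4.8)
The plan is to exhibit an invariant that holds at the start and is preserved by every transition of $\ErrorDetection$, and which makes writing $\top$ impossible. For a correct ranking each rank $i$ has a unique \emph{governor} $u_i$, and the invariant reads: (i) no message ID $(i,j)$ is stored (i.e.\ non-$\bot$) by two distinct agents; and (ii) whenever $u.\messages[(i,j)]\neq\bot$ then $u.\messages[(i,j)] = u_i.\observations[j]$. Both hold at $\qNaughtDetectCollision$: every message content and every observation entry equals $1$, and the pairwise-disjoint initialisation index-ranges make each stored ID held by a single agent. I would then verify preservation line by line through Protocols~\ref{pr:error}, \ref{pr:consistency}, \ref{pr:updatemsg}, and~\ref{pr:loadbalance}. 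Key points: $\LoadBalance$ only relocates stored messages between the two interacting agents, never duplicating an ID nor altering a content; $\UpdateMessages(u,\cdot)$ overwrites all stored copies of $u$'s own messages and the corresponding entries of $u.\observations$ in lockstep; and $\CheckMessageConsistency$ together with the line~3--4 test never modify non-$\top$ state. Given the invariant, the rank-equality clause is impossible (distinct ranks), the duplicate-ID clause is impossible by (i), and the consistency clause is impossible by (ii) taken with $u=u_i$; hence $\top$ is unreachable. The only delicate bookkeeping is that $\UpdateMessages(u,v)$ touches $u.\observations[j]$ also for indices stored only by $v$, but by (i) at most one of $u,v$ stores each such message, so the update is well-defined and keeps (ii).

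\textbf{Part (b): reduction to a dense colliding group.} If $(x_i)_{i\in[n]}$ is not a permutation it fails to be injective, so two agents share a rank, and such a collision is confined to one of the $\lceil n/r\rceil$ groups. First I would observe that for $r\ge 2$ there is always a colliding group that moreover contains $\ge r/2$ agents: if $K$ is the set of colliding groups, the groups outside $K$ are injective, so $\sum_{k\in K}(\#\text{agents in }k)=n-\sum_{k\notin K}(\#\text{agents in }k)\ge n-\sum_{k\notin K}(\#\text{ranks in }k)=\sum_{k\in K}(\#\text{ranks in }k)\ge |K|\cdot r/2$, whence some colliding group holds $\ge r/2$ agents. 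Fix such a group $G$ with $a,b\in G$ of equal rank $i$. For constant $r$, $\Oh(n^2\log n/r)=\Oh(n^2\log n)$, within which the specific pair $\{a,b\}$ interacts w.h.p.\ and line~3--4 of Protocol~\ref{pr:error} writes $\top$; so assume $r$ is super-constant, hence $|G|\ge r/2$ is too. By \cref{lem: full population technical} and a Chernoff bound, within $T:=\Theta(n^2\log n/r)$ interactions $G$ undergoes $\Theta(|G|^2\log n/r)\ge\Theta(|G|\log|G|)$ internal interactions (enough for epidemics inside $G$) and each of $a,b$ is activated $\ge\Theta(\log n)$ times; and since two agents of $G$ ever storing a common ID already yields $\top$, I may also assume no ID is duplicated in $G$, which caps the number of distinct rank-$i$ message contents present in $G$ by $\poly(r)\ll r_u^5$.

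\textbf{Part (b): the signature mechanism.} Over the window, $a$ resamples $a.\tagVal$ some $m=\Theta(\log n/\log r)$ times, each value almost-uniform in $[r_u^5]$; with probability $1-\poly(r)/r_u^5$ a given resample is \emph{fresh}: distinct from $b$'s signature, from the $\poly(r)$ rank-$i$ contents currently in $G$, and from the (adversarial but fixed) entries of $b.\observations$. When $a$ resamples to a fresh $s^\ast$ it rewrites all of its stored rank-$i$ messages and its own observations to $s^\ast$ and keeps injecting $s^\ast$ on subsequent meetings; since the refresh period $c\log r_u$ is tuned to a large constant times the in-group epidemic time (\cref{lemma:epidemic}), w.h.p.\ these $\Theta(|G|)$ copies of $s^\ast$ spread through all of $G$ — in particular to some agent $c$ that then meets $b$ — before $a$'s next resample overwrites them and before any rank-$i$ agent refreshes them away (here I use the no-duplicate reduction and the presence of at least the two rank-$i$ agents $a,b$ to control who may overwrite). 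Meanwhile $b$ repeatedly resets $b.\observations$ to $b.\tagVal\neq s^\ast$, so when $c$ meets $b$, $\CheckMessageConsistency(b,c)$ sees a rank-$i$ message with content $s^\ast\neq b.\observations[\cdot]$ and writes $\top$. Finally, the probability all $m$ resampling attempts fail is at most $(\poly(r)/r_u^5+q)^{m}$ with $q$ the per-attempt spreading-failure probability; since $\log(1/(\poly(r)/r_u^5+q))=\Theta(\log r)$ while $m=\Theta(\log n/\log r)$, this is $n^{-\Omega(1)}$ with the constants tuned, proving (b).

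\textbf{Expected main obstacle.} Part (a) should be a mechanical invariant check and the combinatorial reduction short. The crux is the probabilistic argument for the signature mechanism: simultaneously (1) guaranteeing freshness of some resample against a potentially large message population (handled by the no-duplicate reduction), (2) winning the race between the $\Theta(|G|)$ fresh messages spreading across $G$ and those same messages being overwritten (handled by making the refresh period a large multiple of the epidemic time), and (3) amplifying a per-attempt success probability that degrades with $r$ into a high-probability bound using only $\Theta(\log n/\log r)$ attempts — all while keeping the whole argument inside the $\Oh(n^2\log n/r)$ interaction budget, which forces the internal-interaction count of $G$, the epidemic time in $G$, and the number of $a$'s activations to be balanced against one another.
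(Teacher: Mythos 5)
Your part~(a) is fine and is essentially the paper's own argument: the paper phrases it as five conserved observations (ranks never change, messages are neither created nor destroyed, uniqueness of each index $(i,j)$, contents only changed by the governing rank, and lockstep updates of $\messages$ and $\observations$) and derives a contradiction for each of the three $\top$-producing clauses, which is exactly your invariant (i)--(ii) check.

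For part~(b) there is a genuine gap: you never separate out the case where the colliding group contains many more agents than ranks, and your signature-race argument breaks precisely there. The set $G$ of agents whose rank lies in the colliding group can have size up to $n$, while the group only has $\Theta(r)$ ranks, signatures in $[r_u^5]$, and a refresh period of $c\log r_u$ of $a$'s in-group activations. Between two resamples of $a$ only $\Theta(|G|\log r)$ in-group interactions elapse, whereas spreading the fresh messages across $G$ needs $\Omega(|G|\log|G|)$ of them, so your claim that the refresh period is ``a large constant times the in-group epidemic time'' is false whenever $\log|G|\gg\log r$ (e.g.\ $r=\log^2 n$, $|G|=\Theta(n)$); in that regime you also cannot make a resample fresh with respect to the observations of \emph{all} rank-$i$ agents, of which there may be $\omega(r^3)$. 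The paper avoids this entirely by a case split (\cref{lemma: Reset quickly for at least root n duplicates.}): if the group holds at least $m$ agents with non-unique ranks (in particular whenever $\eta>2m$), two same-rank agents simply meet within $\Oh(m\log m)$ in-group interactions, and the signature argument is only run when $\eta\le 2m$, where the refresh period, the spread time, and the interaction budget are all $\Theta(r\log r)$ and the race calculus closes. A second, related soft spot: the fresh contents are indivisible tokens that are split by $\LoadBalance$, not copied, so \cref{lemma:epidemic} does not apply to their dissemination; the paper instead couples the token dynamics with the load-balancing process of \cite{berenbrink2019tight} (\cref{lemma: tight and simple}), which is why it needs $a$ to stamp at least $4m$ messages (\cref{lemma:Tags are unique} plus the ``no zeros'' conclusion guarantees the duplicate $b$ itself receives one). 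Your outline otherwise matches the paper --- unique-signature adoption, spread, detection at the duplicate, restriction to in-group interactions via Chernoff, and amplification over $\Theta(\log n/\log r)$ attempts --- but as written it does not prove robust completeness for arbitrary (adversarially unbalanced) rank multisets, which is the whole point of that clause.
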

\subsection{Proof of \texorpdfstring{\cref{prop: error detection}}{Main Lemma for DetectCollision}}

We will show that the error-detection module correctly generates a $\top$ within $O(\nicefrac{n^2}{r}\log{n})$ interactions w.h.p.\ if there exists a duplicate rank and will never do so if the ranking is correct and all agents are initialised from $\qNaughtDetectCollision$.
In fact we will prove a slightly stronger version necessary for the lower memory regime as we require that the module is able to function even when there are more agents than expected in a given group.
To make this clear we will adjust notation, we have that an instance of the module is designed assuming there are $m$ agents in the group, however we assume there are in fact $\eta$ for $m<\eta$.
Evidently, we cannot have a correct ranking if $m<\eta$ and so in these cases we are only interested in generating a $\top$ character.

We begin by making the following observations:
\begin{enumerate}[noitemsep]
    \item The rank of an agent cannot change
    \item Messages are never created or destroyed
    \item If the protocol is begun from a configuration with a correct ranking and all collision detection states $\qNaughtDetectCollision$ there is only ever one message with index $(i,j)$.
    \item Unless it is updated by an agent of rank $i$ the content of a message with index $(i,j)$ does not change
    \item If the content of message $(i,j)$ is updated the contents of the corresponding observation in its governor's memory is updated to match and the inverse is also true.
\end{enumerate}
\begin{lemma}
    \label{lemma: error-detection no fps}
    If the error-detection module generates a $\top$ it did not begin from a correct ranking with all collision detection states set to $\qNaughtDetectCollision$.
\end{lemma}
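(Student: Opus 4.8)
The plan is to establish the contrapositive: from any configuration $\vec{c}'$ in which the ranks form a permutation of $[n]$ and every agent is in state $\qNaughtDetectCollision$, no configuration containing an agent in a state $(\cdot,\top)$ is ever reachable by $\ErrorDetection$. I would do this by exhibiting an invariant of the dynamics that is preserved by every interaction and that, on its own, rules out the creation of a $\top$.

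Concretely, I would call a configuration \emph{consistent} if: (i) no agent is in a state $(\cdot,\top)$; (ii) the ranks form a permutation of $[n]$, so for each $i\in[n]$ there is a unique agent $u_i$ of rank $i$; (iii) for every index $(i,j)$, at most one agent in the whole population holds a non-$\bot$ message indexed by $(i,j)$; and (iv) whenever some agent holds the message $(i,j)$ with content $k$, then $k = u_i.\observations[j]$. First I would check the base case, that $\vec{c}'$ is consistent: (ii) holds by hypothesis, (iii) holds because the initial blocks of message IDs are pairwise disjoint among the agents of a common group, so each index is held by exactly one agent, and (iv) holds because every initial message content and every initial $\observations$ entry equals $1$. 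I would also record the standing state-space restriction $u.\messages[(u.\rank,j)] = u.\observations[j]$, which persists for free since the transition function maps valid states to valid states.

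The bulk of the argument is the inductive step: one interaction of $u$ and $v$ takes a consistent configuration to a consistent one and, in particular, emits no $\top$. If $\smap(u.\rank)\neq\smap(v.\rank)$ nothing changes, so assume a common group and walk through the body of $\ErrorDetection$ (Protocol~\ref{pr:error}) in order. The explicit test cannot fire, since $u.\rank = v.\rank$ contradicts~(ii) and two agents both holding message $(i,j)$ contradicts~(iii), so the configuration entering $\CheckMessageConsistency$ is still consistent. Then $\CheckMessageConsistency(u,v)$ (Protocol~\ref{pr:consistency}) raises no $\top$: if $v$ holds message $(u.\rank,j)$ then by~(iii) it is its unique holder, so by~(iv) its content equals $u_{u.\rank}.\observations[j]$, which is $u.\observations[j]$ by~(ii); the symmetric call is identical. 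It remains to see that $\UpdateMessages$ (Protocol~\ref{pr:updatemsg}) and $\LoadBalance$ (Protocol~\ref{pr:loadbalance}) preserve (ii)--(iv): $\UpdateMessages(u,v)$ rewrites only messages with first coordinate $u.\rank$, together with the matching $\observations$ entries of $u$, always to a common value in $[r_u^5]$, and by~(iii) it never touches a message held by both $u$ and $v$, so~(iii) survives and~(iv) is re-established exactly at the affected indices while all other indices and $\observations$ entries are untouched (the symmetric call concerns only rank $v.\rank$); and $\LoadBalance$ merely redistributes which of $u,v$ holds each message of their common group, changing no content and no $\observations$ entry. Hence consistency is an invariant, so every reachable configuration satisfies~(i), which is exactly the claim.

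I expect the main obstacle to be pinning down the invariant so that it is simultaneously true at $\vec{c}'$, strong enough that clauses~(ii)--(iv) instantly defeat \emph{both} ways a $\top$ can arise (the explicit collision test and the consistency check), and closed under $\UpdateMessages$. The last point is the delicate one, because $\UpdateMessages$ modifies $v$'s messages and $u$'s $\observations$ within the same step; it is precisely clause~(iii), that no index is held twice, which keeps the message/$\observations$ coupling of clause~(iv) from being broken, and one has to be careful about its interaction with the standing state-space restriction. Everything else is straightforward bookkeeping over the pseudocode.
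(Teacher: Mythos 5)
Your proof is correct and follows essentially the same route as the paper: your consistency invariant (ii)--(iv) is exactly the paper's list of observations (ranks never change, messages are never created/destroyed or duplicated, and message contents stay synchronized with the governing agent's $\observations$), and ruling out the explicit collision test and $\CheckMessageConsistency$ from these facts is the same case analysis the paper performs. You merely make the induction over interactions and the preservation under $\UpdateMessages$/$\LoadBalance$ more explicit than the paper does.
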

\begin{proof}
    Assume we began from a correct execution, then there are three ways that the error-detection module can generate a $\top$. The first is that there are two agents with the same rank, which immediately violates either the assumption or the first observation. Second, there could be two messages with the same ID. However, this violates the second and third observations. Finally, the contents of a message doesn't match a corresponding observation. By the fourth and fifth observations, this requires that there are two agents with the same rank, which violates our assumption. Thus, we obtain a contradiction.
\end{proof}
It now remains to show that from any configuration with duplicate ranks a $\top$ is generated sufficiently quickly. 

We will first restrict ourselves to considering the case where there is only one rank with multiple agents. In this case, we can immediately obtain our desired $\Oh(n\log{n})$ interactions in the case where there are sufficiently many duplicates.
\begin{lemma}
    If there are at least $m$ agents with non-unique ranks, the protocol generates a $\top$ within $\Oh(m\log{m})$ interactions w.h.p.
    \label{lemma: Reset quickly for at least root n duplicates.}
\end{lemma}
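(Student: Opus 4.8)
The plan is to ignore the message machinery entirely and rely only on the most elementary trigger in $\ErrorDetection$: line~3 of Protocol~\ref{pr:error} sets the local state to $\top$ the instant two agents with $u.\rank = v.\rank$ interact, and since equal ranks lie in a common group the group guard in line~1 is automatically passed. Thus it suffices to bound the time until \emph{any} two equal-rank agents are scheduled together. By the first observation (ranks never change), the multiset of ranks held by the agents of the group is invariant over time, so the set of ``colliding pairs'' (unordered pairs of equal-rank agents) is fixed; I will show it is large enough that a uniformly random interaction of the group hits one with probability $\Omega(1/m)$ per step, whence a geometric tail gives detection within $\Oh(m\log m)$ interactions. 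This argument never assumes the duplicates share a single rank, so it handles the general statement directly.

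For the counting, let $S$ be the set of non-uniquely-ranked agents, $s \coloneqq |S| \ge m$, and partition $S$ by rank into $\ell$ classes of sizes $k_1,\dots,k_\ell \ge 2$ with $\sum_j k_j = s$. Because all ranks lie in a group of only $m$ ranks we have $\ell \le m$, and since each class has at least two agents also $\ell \le s/2$. The number of colliding pairs is $P = \sum_{j} \binom{k_j}{2}$, and by convexity of $x \mapsto \binom{x}{2}$ this is minimized when the $k_j$ are equal, so
\[
P \;\ge\; \ell\binom{s/\ell}{2} \;=\; \frac{s(s-\ell)}{2\ell} \;\ge\; \frac{s^2}{4\ell} \;\ge\; \frac{s^2}{4m},
\]
using $s-\ell \ge s/2$ and then $\ell \le m$. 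For the denominator, the uniquely-ranked agents number at most $m \le s$ (at most one per remaining rank), so the total group size satisfies $\eta \le 2s$ and $\binom{\eta}{2} \le 2s^2$. Hence a single uniformly random interaction of the group is a colliding pair with probability
\[
p \;=\; \frac{P}{\binom{\eta}{2}} \;\ge\; \frac{s^2/(4m)}{2s^2} \;=\; \frac{1}{8m}.
\]

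Finally, since distinct interactions select independent uniform pairs, the number of interactions until the first colliding pair is scheduled is stochastically dominated by a geometric variable with success probability $p \ge 1/(8m)$. Thus $\Prob{\text{no }\top\text{ within }T\text{ interactions}} \le (1-\tfrac{1}{8m})^{T} \le e^{-T/(8m)}$, and taking $T = \Theta(m\log m)$ with a sufficiently large constant makes this at most $m^{-\Omega(1)}$, i.e.\ w.h.p.; in the intended regime $m = \Omega(\sqrt{n})$ this is $o(n^{-1})$. Any other route to $\top$ (a duplicate circulating message, or a message--observation mismatch) can only make detection faster, so the direct-collision bound is the one we report.

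The main obstacle is making the per-interaction probability $\Omega(1/m)$ hold \emph{uniformly}, independently of how the $s \ge m$ duplicated agents are distributed among ranks and of the total group size $\eta$: spreading the duplicates across many distinct ranks shrinks $P$, whereas concentrating them on few ranks inflates $\eta$, and the two estimates $P \ge s^2/(4m)$ and $\eta \le 2s$ are precisely what balance these opposing effects so that the ratio $P/\binom{\eta}{2}$ stays $\Theta(1/m)$ in every case. A secondary point to handle carefully is the probabilistic guarantee itself: the geometric tail yields failure probability $m^{-\Omega(1)}$, which matches the paper's $1-\Oh(n^{-1})$ convention exactly in the $m \gtrsim \sqrt{n}$ regime in which the lemma is invoked.
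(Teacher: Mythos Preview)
Your proof is correct and follows the same strategy as the paper's: lower-bound the per-interaction probability that two equal-rank agents are scheduled together, then use a geometric tail. The paper's version is terser and is written under the restriction (stated just before the lemma) to a single duplicated rank, simply asserting a $1/(2m)$ per-step probability; your convexity count handles an arbitrary distribution of the $\ge m$ duplicates across ranks and yields a clean $1/(8m)$, which makes your argument self-contained and slightly more general. One small framing note: the paper does not rely on $m=\Omega(\sqrt{n})$ to get a w.h.p.\ guarantee in $n$; instead it obtains failure probability $o(m^{-2})$ here and later amplifies by repeating over $\Theta(\log n/\log r)$ consecutive windows.
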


\begin{proof}
     Within the setting of this lemma, the total number of agents with non-unique ranks is $\max(m,\eta-m)$. A $\top$ will be generated the first time two agents of the same rank meet. Considering the worst case where we must wait for this to occur, this event occurs with probability at least $\nicefrac{1}{2m}$ whenever the scheduler generates an interaction. Let $\NoResetBy{t}$ be the event that a $\top$ has not been generated by interaction $t$. With at least $m$ non-unique agents,
    \[\Prob{\NoResetBy{6m\log{m}}}\leq \left(1-\frac{1}{2m}\right)^{6m\log{m}}\leq m^{-3}=o(m^{-2}).\qedhere\]
\end{proof}

Thus for the case where there are more than $m$ agents with non-unique ranks, or $\eta>2m$ we have our claim. However, for the case where there are fewer than $m$ agents with non-unique ranks we must rely on the other methods to generate a $\top$. Primarily this will mean a message being modified by multiple agents and causing a conflict between the contents of the message and an agent's stored observation. We will begin with a technical lemma where we define $\Interactions{u,t_0,x}$ to be a random variable recording the number of interactions agent $u$ is involved of the first $x$ interactions following time $t_0$, and assume $2m<\eta$.
\begin{lemma}
    For constant $a>16$ and all $t\geq 0$, $\Prob{\frac{a}{2}\log{m}\leq \Interactions{u,t,am\log{m}}\leq 4a\log{m}}\geq 1-o(m^{-2})$.
    \label{lemma: technical interaction bound}
\end{lemma}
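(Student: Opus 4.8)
The statement is a routine concentration bound, so the ``plan'' is essentially the proof. First I would note that the distribution of $\Interactions{u,t,am\log m}$ does not depend on $t$: since the scheduler draws a fresh uniformly random pair at each step independently of the past, conditioned on an interaction falling inside $u$'s group it is a uniformly random pair of the $\eta$ agents of that group, so $u$ takes part in it with probability $\tfrac{2(\eta-1)}{\eta(\eta-1)}=\tfrac2\eta$, independently across the $am\log m$ within-group interactions being counted. Hence $\Interactions{u,t,am\log m}\sim\Bin\bigl(am\log m,\tfrac2\eta\bigr)$. Recalling that in the remaining case under consideration $m<\eta\le 2m$ (the cases $\eta>2m$, resp.\ at least $m$ agents with non-unique ranks, having already been dispatched), its mean satisfies
\[
\mu \;:=\; \Exp{\Interactions{u,t,am\log m}} \;=\; \frac{2am\log m}{\eta} \;\in\; \bigl[\,a\log m,\; 2a\log m\,\bigr].
\]

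Next I would apply the multiplicative Chernoff bound to each tail. For the lower tail, $\tfrac a2\log m\le\tfrac\mu2$ since $\mu\ge a\log m$, so
\[
\Prob{\Interactions{u,t,am\log m} < \tfrac a2\log m} \;\le\; \Prob{\Interactions{u,t,am\log m} < \tfrac\mu2} \;\le\; e^{-\mu/8} \;\le\; m^{-a/8} \;=\; o(m^{-2}),
\]
where the last step uses $a>16$. For the upper tail, $4a\log m\ge 2\mu$ since $\eta\ge m$, so
\[
\Prob{\Interactions{u,t,am\log m} > 4a\log m} \;\le\; \Prob{\Interactions{u,t,am\log m} > 2\mu} \;\le\; e^{-\mu/3} \;\le\; m^{-a/3} \;=\; o(m^{-2}).
\]
A union bound over the two tails, valid for every $t$ by the stationarity observed above, then gives $\Prob{\tfrac a2\log m\le\Interactions{u,t,am\log m}\le 4a\log m}\ge 1-o(m^{-2})$.

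The only ``hard part'' here is bookkeeping rather than a genuine obstacle: one must make sure $\eta$ is controlled so that $\mu=\Theta(\log m)$ (this is exactly why the cases $\eta>2m$ and ``many duplicates'' are peeled off first, leaving $m<\eta\le 2m$), and that the constants line up with the hypothesis $a>16$ so that both tail probabilities are $o(m^{-2})$, as needed for the later union bounds over agents and over $\Oh(\nicefrac{n^2}{r}\log n)$ interactions.
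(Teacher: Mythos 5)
Your proof is correct and follows essentially the same route as the paper's: identify $\Interactions{u,t,am\log m}\sim\Bin(am\log m,\nicefrac{2}{\eta})$, use $m<\eta\le 2m$ to pin $\mu$ between $a\log m$ and $2a\log m$, apply the weak multiplicative Chernoff bound to each tail (with exponents $\mu/8$ and $\mu/3$), and union bound; the paper phrases the monotonicity step as ``taking $\eta=2m$'' resp.\ ``$\eta=m$'', which is exactly your comparison of the thresholds to $\mu/2$ and $2\mu$. The only discrepancy is with the paper's text (not your argument): the sentence introducing the lemma says ``assume $2m<\eta$,'' which is evidently a typo for the regime $m<\eta\le 2m$ that both you and the paper's proof actually use.
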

\begin{proof}
    Since, $\Interactions{u,t,t+am\log{m}}\sim \Bin(am\log{m},\frac{2}{\eta})$, we have $\Exp{\Interactions{u,t,am\log{m}}}=\frac{2am\log{m}}{\eta}$. For the lower bound we apply the weak multiplicative form of the Chernoff bound for sum of i.i.d.\ Bernoulli variables and take $\eta=2m$.
    \[\Prob{\Interactions{u,t,am\log{m}}\leq \frac{1}{2} \Exp{\Interactions{u,t,am\log{m}}}}
        \leq \exp\bc{-\bc{\frac{1}{2}}^3\Exp{\Interactions{u,t,am\log{m}}}}
        = \exp\bc{-\frac{a\log{m}}{8}}
        =m^{-\frac{a}{8}}.\]
    For the upper bound, we apply the corresponding bound, but instead take $\eta=m$
    \[\Prob{\Interactions{u,t,am\log{m}}\geq 2\Exp{\Interactions{u,t,am\log{m}}}} 
        \leq \exp\bc{-\frac{\Exp{\Interactions{u,t,am\log{m}}}}{3}}
        =m^{\frac{-2a}{3}}.\]
    Thus for $a>16$ the claim holds.
\end{proof}
\begin{lemma}
    From any configuration of agents in the collision detection module such that agent $u$ has rank $i$ and there are at most $m-1$ other agents with rank $i$. Within $cm\log{m}$ interactions, $u$ will adopt a unique $\tagVal $ among all agents of rank $i$, messages governed by rank $i$ and observations of those messages, retain this signature for $\frac{c}{8}m\log{m}$ interactions and no other agent of rank $i$ will adopt this signature within $\frac{c}{8}m\log{m}$ interactions w.h.p.\ for $c>32$.
    \label{lemma:Tags are unique}
\end{lemma}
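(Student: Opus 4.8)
The plan is to work entirely inside the sub-population formed by the $\eta$ agents of $u$'s group $\smap(u.\rank)$, and to first reduce to the regime $m < \eta \le 2m$: the lower bound $m < \eta$ is a standing assumption of the setting, and if $\eta > 2m$ then, since the group spans at most $m$ distinct ranks, more than $m$ agents share a rank with some other agent, so \cref{lemma: Reset quickly for at least root n duplicates.} already produces a $\top$ within $\Oh(m\log m)$ interactions w.h.p., which is stronger than what the lemma claims. Within this regime I would set up the following bookkeeping. Recall that an agent refreshes $\tagVal$ --- drawing a fresh value (almost) uniformly from $[r_u^5]=[m^5]$ --- exactly when its $\counter$ reaches its maximum $\Theta(\log m)$, and that $\counter$ advances by one on every within-group interaction the agent takes part in. Let $S$ be the (time-varying) set consisting of the $\tagVal$ of every rank-$i$ agent other than $u$, the content of every message governed by rank $i$, and every entry of the $\observations$ array of every rank-$i$ agent; since there are at most $m$ agents of rank $i$, at most $\Oh(m^2)$ message IDs per rank spread over at most $2m$ holders, and each $\observations$ array has $\Oh(m^2)$ entries, $|S| = \Oh(m^3)$ at all times (I would also note, for later use, that a rank-$i$ message's content can only ever be changed to the current $\tagVal$ of a rank-$i$ agent).

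With this in place I would establish the three assertions in order. First (fast adoption of a \emph{good} signature): apply a Chernoff bound to the number of within-group interactions $u$ has in a window of $cm\log m$ interactions --- a Binomial with mean $\Theta(\log m)$, since $\eta=\Theta(m)$ --- to conclude that w.h.p.\ $u$ completes at least one full $\counter$-cycle, hence refreshes $\tagVal$, within that window; call the first such refresh time $t^\star$ and the drawn value $X$. Since $X$ is (almost) uniform on $[m^5]$ and $|S| = \Oh(m^3)$ at time $t^\star$, we get $\Prob{X \in S} = \Oh(m^{-2})$, so w.h.p.\ $X$ differs from every other rank-$i$ signature, every rank-$i$ message content, and every rank-$i$ observation at time $t^\star$. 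Second (retention): the upper-tail Chernoff bound (using $\eta \ge m$) shows that in the $\tfrac c8 m\log m$ interactions after $t^\star$, $u$ takes strictly fewer interactions than a full $\counter$-cycle, so it does not refresh and keeps $X$. Third (no other rank-$i$ agent adopts $X$): a window of $\tfrac c8 m\log m$ interactions contributes at most $\tfrac c4 m\log m$ $\counter$-increments in total (two per interaction), hence at most $\tfrac m4$ refreshes by rank-$i$ agents other than $u$, each hitting the particular value $X$ with probability $\le 2/m^5$; a union bound gives that w.h.p.\ nobody else draws $X$, and since a rank-$i$ agent's $\tagVal$ changes only by refreshing and none equalled $X$ at $t^\star$, no other rank-$i$ agent ever holds $X$ during the window. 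A final union bound over the $\Oh(1)$ failure events completes the argument.

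The step I expect to be the main obstacle is calibrating the constants so that the adoption and retention estimates are mutually consistent: adoption needs the $cm\log m$-window long enough for $u$ to complete a $\counter$-cycle, while retention needs the $\tfrac c8 m\log m$-window short enough that it cannot --- a genuine tension, since the ratio of window lengths is only $8$. This is precisely why the reduction pins $\eta$ to the narrow range $(m,2m]$: with $\eta=\Theta(m)$ known to within a factor $2$, $u$'s interaction count over a window of length $W$ concentrates around $2W/\eta$, and for $c>32$ (with the $\counter$-period constant in $\UpdateMessages$ chosen appropriately relative to $c$) the resulting intervals straddle the refresh period in the right direction for a suitable small multiplicative error $\delta$. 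I would note that the crude factor-$\tfrac12$ and factor-$4$ bounds of \cref{lemma: technical interaction bound} are slightly too lossy to close this gap on their own, so I would instead invoke a Chernoff bound directly with a moderate $\delta$. The remaining ingredients --- the bound $|S| = \Oh(m^3)$ against a space of size $m^5$, the union bound over $\Oh(m)$ refreshes, and the observation on how message contents evolve --- are routine.
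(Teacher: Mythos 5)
Your proposal is correct and follows essentially the same route as the paper's proof: the same three failure events (failing to refresh to a value avoiding the $\Oh(m^3)$ occupied signatures, message contents and observations in $[m^5]$ within $cm\log m$ interactions; losing the signature within $\frac{c}{8}m\log m$ interactions; another rank-$i$ agent drawing the same value), each handled by Chernoff bounds on within-group interaction counts in the $m<\eta\le 2m$ regime plus union bounds. Your substitution of a direct Chernoff bound with a moderate $\delta$ for \cref{lemma: technical interaction bound}, together with calibrating the $\counter$-period constant against $c$, is a fair (indeed somewhat more careful) rendering of the paper's constant bookkeeping, and your aggregate count of refreshes by other rank-$i$ agents is a harmless variant of the paper's per-agent bound.
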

\begin{proof}
    We have three failure cases: the agent does not adopt a unique signature within $cm\log{m}$ interactions, the agent loses this signature before $\frac{c}{4}m\log{m}$ interactions or another agent adopts the signature within that time. 
    The first means either the agent does not update its signature, or it does not pick a unique one. 
    The former implies it has activated less than $c\log{m}$ times which has probability $o(m^{-2})$ by \cref{lemma: technical interaction bound} and the latter occurs with probability $o(n^{-1})$ as a signature is chosen uniformly at random from a space of size $m^{5}$ and there are $4m^3+m=\Oh(m^{3})$ possible collisions. 
    For the agent to lose its signature before $\frac{c}{8}m\log{m}$ interactions implies that the agent interacted at least $c\log{m}$ times during that window, however by \cref{lemma: technical interaction bound} we have that this occurs with probability $o(m^{-2})$. 
    Finally, for another agent of rank $i$ to adopt the same label requires that agent to refresh its signature and pick the right label. 
    The probability that any given agent $v$ does this is upper bounded by the probability that it picks correctly the first time it updates or it gets to update twice within the period. 
    Picking the correct signature occurs with probability $m^{-5}$ and updating twice within that period requires $v$ to interact at least $c\log{m}$ times which by the previous lemma occurs with probability at most $o(m^{-2})$. 
\newcommand{\Behaves}[1]{\ensuremath{\mathsf{Behaves}_{#1}}}
    Let $\Behaves{u}$ be the event that all of the events in the lemma statement hold, then by a union bound
    \[\Prob{\neg \Behaves{u}}
       \leq  o(m^{-2})+o(m^{-1})+o(m^{-2})+m(o(m^{-2})+m^{-5})
       =  o(m^{-1}).\qedhere\]
\end{proof}

The above result implies that within $\Oh(m\log{m})$ interactions we modify a large number of messages such that they could trigger the generation of a $\top$. 
We now must show that they will actually be seen by other agents of the rank that governs them and so cause a $\top$ to be generated. 
To do this we use load balancing to disperse the modified messages and inherit a nice result from the load balancing literature. We begin by considering an idealized version of the process.

Take an idealized version of the module with the following additional properties:
\begin{itemize}
    \item the contents of messages and observations do not change, and 
    \item $\top$ characters are never generated.
\end{itemize}
Let $X=(X_t)_{t\geq0}$ be a Markov chain over $\mathbb{N}^\eta$ such that the $i$th entry of $X_t$ is equal to the number of messages governed by $i$ with contents $k$ at agent $l$ after the $t$th interaction. 
Furthermore, let $Y=(Y_t)_{t\geq0}$ be a Markov chain over $\mathbb{N}^\eta$ be the Markov chain such that the $l$th entry of $Y_t$ is equal to the number of tokens at agent $l$ in the load balancing algorithm used in Tight and Simple Load Balancing \cite{berenbrink2019tight} at time $t$.

\begin{lemma}
    if $\eta\leq 2m$, there exists $t=\Oh(m\log{m})$ such that $X_t$ contains no zeros w.h.p.\ if we begin with exactly $4m$ messages governed by $i$ and with content $k$.
    \label{lemma: tight and simple}
\end{lemma}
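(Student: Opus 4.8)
The plan is to reduce the idealized per-content message dynamics to the indivisible-token load-balancing process analyzed in \cite{berenbrink2019tight} and then quote their mixing bound. Concretely, fix the group of $\eta$ agents, the governing rank $i$, and the content $k$; in the idealized module the contents of messages never change and no $\top$ is ever produced, so the only subroutine that affects the quantities tracked by $X$ is $\LoadBalance$. I would first observe that $\LoadBalance$ treats each $(\text{rank},\text{content})$ pair completely independently and, for the fixed pair $(i,k)$, performs exactly the following move on an interacting pair $(u,v)$: it pools the class-$(i,k)$ messages held by $u$ and $v$ and redistributes them so that the two agents hold $\lceil (a+b)/2\rceil$ and $\lfloor (a+b)/2\rfloor$ of them, where $a,b$ were their previous counts; messages are never created or destroyed. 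Hence $X$, viewed on the times at which two group members interact, is precisely the discrete averaging (``balancing'') chain on $\eta$ bins from \cite{berenbrink2019tight} carrying $4m$ tokens, and by coupling the two chains on a common sequence of random pairs and a common initial configuration they agree path-by-path.

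Next I would invoke the main load-balancing result of \cite{berenbrink2019tight}: started from any configuration of $N$ tokens on $\eta$ bins with initial discrepancy $K$, after $\Oh(\eta(\log \eta + \log K))$ steps of the chain the load of every bin differs from the average $N/\eta$ by only a small additive amount, with probability $1-\Oh(\eta^{-c})$ for any desired constant $c$. In our instance $N = 4m$, $\eta \le 2m$, and trivially $K \le 4m$, so the bound reads $\Oh(\eta\log \eta) = \Oh(m\log m)$ steps, and since $\eta \ge m$ the failure probability is $\Oh(m^{-c})$, i.e.\ w.h.p.\ in $m$. It then remains to turn ``balanced'' into ``zero-free'': the average load is $N/\eta = 4m/\eta \ge 2$, and the quantitative balance guarantee forces every bin to hold at least $\lfloor N/\eta\rfloor \ge 1$ tokens (this is the point where the precise form of the bound, together with the token surplus, is used), so $X_t$ has no zero. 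Taking $t$ to be the corresponding number of group-internal interactions proves the claim.

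The main obstacle, and the only place real care is needed, is the first step: checking that the idealized dynamics collapse \emph{exactly} to the process of \cite{berenbrink2019tight}, rather than merely resembling it. This amounts to verifying that in the idealized world (frozen contents, no $\top$) none of $\CheckMessageConsistency$, $\UpdateMessages$, or the explicit error checks in lines~3--4 of $\ErrorDetection$ can change the per-$(i,k)$ counts; that $\LoadBalance$ genuinely decouples the classes and realizes the exact rounding rule of that paper's model; and that skipping cross-group interactions is harmless because such interactions leave $X$ unchanged, so $X$ reparametrized to count only group-internal interactions loses nothing. The second, purely quantitative, subtlety is to confirm that the balance bound imported from \cite{berenbrink2019tight} is strong enough that a surplus of tokens over bins by a factor of (at least) two forces every bin to be non-empty --- which is precisely why the hypotheses ``exactly $4m$ messages'' and ``$\eta \le 2m$'' appear with these constants.
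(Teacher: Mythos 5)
Your proposal is correct and follows essentially the same route as the paper: couple the idealized per-$(i,k)$ message counts with the token load-balancing chain of \cite{berenbrink2019tight} under a common (suitably ordered) choice of interacting pairs, import its balancing guarantee after $\Oh(m\log m)$ interactions, and use the surplus of $4m$ tokens over $\eta\le 2m$ agents to exclude empty agents. The only cosmetic differences are that the paper quotes the imported theorem as putting every load in $\{1,2,3\}$ and amplifies the success probability by repeating over constantly many $\Oh(m\log m)$ intervals, whereas your ``at least $\lfloor N/\eta\rfloor$ per agent'' phrasing is slightly stronger than a discrepancy bound actually yields --- but only $\min \ge 1$ is needed, which does follow.
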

\begin{proof}
    We begin by showing a coupling between $X$ and $Y$ with initial configurations such that $X_0=Y_0$ (i.e. the number of tokens at each agent in $Y$ match the number of messages governed by $i$ with content $k$ at each agent in $X$). 
    The two processes are then determined entirely by the pairs the scheduler chooses to interact. 
    We construct the coupling as follows:
    If the scheduler for $X$ picks the pair $(u,v)$ and our idealized protocol assigns more messages governed by $i$ and with content $k$ to $u$, then the scheduler for $Y$ picks $(u,v)$. Otherwise, if our idealized protocol assigns more messages governed by $i$ and with content $k$ to $v$, the scheduler for $Y$ picks $(v,u)$. 
    The former case leads to $u$ having $\lceil\frac{(\#u+\#v)}{2}\rceil$ of the messages and $v$ having $\lfloor\frac{(\#u+\#v)}{2}\rfloor$ in both $X$ and $Y$, while the latter gives the inverse (where $\#w$ is the number of messages at $w$ governed by $i$ and with content $k$.) 
    Furthermore, as for all ordered pairs of agents the probability of being chosen by the scheduler for $X$ is $\frac{1}{\eta(\eta-1)}$ and at each point there is a bijection between the ordered pairs in the scheduler of $X$ and the scheduler for $Y$, we must have that the probability of any given ordered pair being chosen by the scheduler of $Y$ is also $\frac{1}{\eta(\eta-1)}$. 
    Thus, the marginal distributions are correct and at all times $X_t=Y_t$.

    We then apply theorem 1 of Tight and Simple Load Balancing \cite{berenbrink2019tight}, to obtain the fact that from an initial discrepancy of at most $2\eta$ all agents must have either $1$, $2$ or $3$ tokens in $Y$ after $\Oh(m\log{m})$ interactions with probability $1-m^{-\Omega(1)}$. 
    By repeated application of this bound over a constant number of $\Oh(m\log{m})$ intervals, we can then amplify the probability of this occurring to be at least $1-m^{-2}$. This gives our result.
\end{proof}
It is clear to see that adding additional messages governed by $i$ and with content $k$ does not cause the previous lemma to fail. However, in order for this result to apply to our non-idealized process, we need to deal with $\top$ being generated and messages governed by $i$ and with content $k$ being removed. In the latter case this requires an agent with rank $i$ to modify it and this implies either the original agent with signature $k$ has changed its signature or the message arrived at a duplicate which would have generated a $\top$.
\begin{lemma}
    There exists a $c>0$, such that from any initial configuration where all agents are in the error correction module, if there are multiple agents with the same rank then a $\top$ is generated within $\Oh(m\log{m})$ interactions with probability $1-o(m^{-1})$.
    \label{lemma:error-correction theorem}
\end{lemma}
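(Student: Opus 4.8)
The plan is to reduce to the single colliding rank handled by the machinery already set up and then chain the available lemmas, tracking a ``good window'' throughout. First I would dispose of the easy regimes. If $\eta > 2m$, then since the group $\smap(i)$ spans only $m$ distinct ranks, at least $\eta-m>m$ agents carry a duplicated rank; and if at least $m$ agents carry a duplicated rank (whatever $\eta$), \cref{lemma: Reset quickly for at least root n duplicates.} already produces a $\top$ within $\Oh(m\log m)$ interactions with probability $1-o(m^{-2})$. So it remains to treat the case $\eta\le 2m$ with fewer than $m$ agents carrying a duplicated rank. Fix a rank $i$ held by $\ell$ agents, $2\le \ell\le m-1$. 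Collisions on other ranks can only help, since rank-$i$ agents react only to rank-$i$ messages, so it suffices to force a $\top$ out of the rank-$i$ sub-system.

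Next I would invoke \cref{lemma:Tags are unique} to obtain, within $cm\log m$ interactions and with probability $1-o(m^{-1})$, an agent $u$ of rank $i$ and a window $W$ consisting of the next $\tfrac{c}{8}m\log m$ interactions during which $u$'s signature equals a value $k$ that is unique among all rank-$i$ agents, all rank-$i$ messages, and all $\observations$ of rank-$i$ messages; $u$ keeps this signature for all of $W$; and no other rank-$i$ agent adopts $k$ in $W$. At the interaction where $u$ adopts $k$, the very same $\UpdateMessages$ call re-stamps every rank-$i$ message $u$ holds --- $\Theta(m)$ of them, by the message-count invariant of \cref{fig:state-space-detect} --- with content $k$ and sets all of $u$'s $\observations$ to $k$. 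From here I argue in the ``no-$\top$-yet'' world: consider the first interaction (possibly never) at which $\top$ is generated, condition on it not having occurred, and show it must occur within $\Oh(m\log m)$; this both discharges the conditioning and proves the bound. Note that in $W$ no other rank-$i$ agent ever has any $\observations$ entry equal to $k$, since $\observations$ are written only by their owner to its own signature, and $u$ does not refresh its signature inside $W$ (by \cref{lemma: technical interaction bound}, $u$ participates in fewer than $c\log m$ interactions over $W$).

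The heart of the argument is spreading these content-$k$ rank-$i$ messages. I would couple their multiset to the token-balancing process of \cite{berenbrink2019tight} exactly as in \cref{lemma: tight and simple}; the only thing to verify is that the idealized hypotheses (contents of these messages frozen, no $\top$) hold inside $W$ in the no-$\top$-yet world. Indeed, such a message changes content only via an $\UpdateMessages$ call whose governor has rank $i$: the governor $u$ leaves it at $k$, and for any other rank-$i$ governor $w$ we have $w.\observations\neq k$ everywhere, so the $\CheckMessageConsistency$ call preceding that $\UpdateMessages$ would already have raised $\top$; and messages are never destroyed. Hence \cref{lemma: tight and simple} applies, and within a further $\Oh(m\log m)$ group-interactions every group-agent holds at least one rank-$i$, content-$k$ message (this persists for the rest of $W$, since the total count $\Theta(m)\ge\eta$ so balancing cannot reintroduce a zero). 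Finally, because $\ell\ge 2$ there is a rank-$i$ agent $w'\neq u$, and by \cref{lemma: technical interaction bound} $w'$ participates in $\Omega(\log m)$ of the remaining group-interactions of $W$; on any such interaction its partner holds a rank-$i$, content-$k$ message while $w'.\observations\neq k$, so $\CheckMessageConsistency(w',\cdot)$ raises $\top$. A union bound over the failure events (dominated by the $o(m^{-1})$ from \cref{lemma:Tags are unique}) and the additive interaction budget $cm\log m+\Oh(m\log m)=\Oh(m\log m)$ give the claim. I expect the main obstacle to be the bookkeeping around $W$: showing that ``no $\top$ yet'' genuinely forces the idealized load-balancing hypotheses, and that $W$ is long enough to absorb the $\Theta(m\log m)$ of \cref{lemma:Tags are unique}, the balancing time, and $w'$'s wait --- which is exactly why the constants in those lemmas carry slack.
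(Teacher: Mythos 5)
Your proposal is correct and follows essentially the same route as the paper's proof: dispose of the many-duplicates case via \cref{lemma: Reset quickly for at least root n duplicates.}, then fix a duplicated rank $i$, obtain a unique persistent signature via \cref{lemma:Tags are unique}, spread the re-stamped messages via the coupling to \cite{berenbrink2019tight} in \cref{lemma: tight and simple}, and conclude that a duplicate of rank $i$ encounters a conflicting message and raises $\top$. Your treatment of the ``no-$\top$-yet'' conditioning and of why the idealized load-balancing hypotheses hold is in fact somewhat more explicit than the paper's own write-up, but it is the same argument.
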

\begin{proof}
    While there may be multiple ranks with multiple agents, we will instead pick one such rank $i$ arbitrarily. Assume there are  duplicate agents with rank $i$. If there are at least $m$ agents with non-unique ranks we obtain the result immediately from  \autoref{lemma: Reset quickly for at least root n duplicates.}. Otherwise, we present the following sequence of events and show that it occurs or a $\top$ is generated  w.h.p.\ within $\Oh(m\log{m})$ interactions: an agent with rank $i$ picks a unique signature $k$ and modifies at least $4m$ messages to have contents $k$, then one of the duplicates receives a message governed by $i$ with contents $k$ and generates a $\top$.
    For this not to occur one of the following must hold, the event described in  \autoref{lemma:Tags are unique} does not occur for any of the agents of rank $i$ or the load balancing fails. These events both occur with probability at most $o(m^{-2})$ and so we have that the above sequence of events occurs with probability at least $1-o(m^{-2})$ by a union bound. Thus, w.h.p., a $\top$ will be generated and the claim holds.
\end{proof}

If $r=\Theta(n)$, we have obtained correctness and the desired run time. However, we still need to show correctness in the lower memory regime.
\begin{proof}[Proof of \cref{prop: error detection}]
    We will first show that $\ErrorDetection$ is sound.
    Clearly, here exists a subset in some partition of $[n]$ containing a duplicate rank if and only if there is a duplicate for some rank in $[n]$.
    Thus, given that $\ErrorDetection$ is sound when applied to the whole population (\cref{lemma: error-detection no fps}), it must be sound when applied to one of the subsets.

    For robust completeness, let $F=\{S_1,...,S_{\nicefrac{n}{r}}\}$ be a partition of $[n]$ such that each $|S_i|=r$ and denote by $A(S_i)$ the set of agents with a rank from $S_i$. 
    If there is a duplicate rank, there must exist $i \in [\nicefrac{n}{r}]$ such that $|A(S_i)|\geq r$ and $A(S_i)$ contains a duplicate rank, from here on we will denote one such $S_i$ by just $S$.
    Applying \cref{lemma:error-correction theorem}, we must have that the error-detection module associated with $S$ generates a $\top$ within $\Oh(r\log{r})$ interactions between agents of $A(S)$ regardless of initialization w.h.p. in $r$.
    By considering $\nicefrac{\log{n}}{\log{r}}$ such consecutive intervals, we can amplify this to a $1-o(n^{-1})$ probability to generate a $\top$ within $r\log{n}$ interactions between agents of $A(S)$.

    Now consider that, the probability that any interaction of the whole population occurs between two agents of $S$ has probability $\frac{|A(S)|^2-|A(S)|}{n(n-1)}<\nicefrac{4r^2}{n^2}$. 
    Since, each pair is selected independently, the number of interactions between pairs in $A(S)$ within $t$ interactions of the overall population is distributed binomially. 
    Thus, by applying a Chernoff bound, we have that there exists constant $c_{slowdown}>0$ such that for $t$ sufficiently large there have been at least $\frac{4r^2}{n^2}t$ interactions between agents in $A(S)$ within $c_{slowdown}t$ interactions of the overall population w.h.p. in $n$.
    Combining results, we have that if there exists a duplicated rank then a $\top$ is generated within $\Oh(\nicefrac{n^2}{r}\log{n})$ interactions w.h.p. 
    This gives the claim.
\end{proof}

\section{Deferred proofs from \texorpdfstring{\cref{sec:analysis}}{our Analysis}}
In this section, we provide the previously deferred proofs for the results contained in \autoref{sec:analysis}.
As in that section, we split the results into stability/safety, correctness after a reset and recovery.

\subsection{Stability: Proof of \texorpdfstring{\cref{lem:safe_set_is_safe}}{Lemma 6.1}}\label{apx:lem:safe_set_is_safe}
    Let $C \in \safe$.
    We consider an interaction between two agents in $C$, and let $C'$ be the resulting configuration.
    We show that for each possible combinations of agents' $\generation$s,
        $C' \in \safe$.

    Since all agents are verifiers, the only way for a non-verifier to be created or for the ranking to change is for $\top$ to be generated at an agent with a non-zero probation timer.
    If the two interacting agents are both in generation $i$, they must both have a $\probationTimer$ of $0$ and so they cannot violate this.
    If one agent is in generation $i$ and one agent is in generation $i+1$, the agent in generation $i$ advances and adopts $\qNaughtDetectCollision$ as its $\ErrorDetection$ state.
    This corresponds to an interaction of $\tilde{C}_{i+}$ and so by assumption it is reachable from $\tilde{C}_0$ by $\ErrorDetection$.
    Thus, it cannot generate a $\top$ by the \soundness{} of collision detection \autoref{prop: error detection}.
    Similarly, if both agents are in generation $i+1$, this is also corresponds to an interaction of $\tilde{C}_{i+}$ and so a $\top$ cannot be generated.
    Therefore, $C'$ also has all agents being verifiers and the same correct ranking.

    As we have established that only agents in generation $i$ can generate a $\top$ and the only generations present are $i$ and $i+1$, it follows that only agents in generation $i$ can change generation.
    Furthermore, they can only advance to generation $i+1$.
    Thus, all agents remain in generations $i$ and $i+1$.
    Additionally, as the only way for an agent's probation timer to increase is for it to change generation, no agent in generation $i$ can obtain a non-zero probation timer.
    The act of doing so would necessitate them leaving generation $i$.
    Therefore, $C'$ also has all agents in either generation $i$ or $i+1$, and all agents in generation $i$ must have probation counter of $0$.

    Finally, we consider the reachability of $\tilde{C}'_{i+}$ from $\tilde{C}_0$ by \ErrorDetection.
    If one or more of the agents are in generation $i+1$, as previously established, the resulting interaction is exactly one that could occur from $\tilde{C}_{i+}$ and so by the transitivity of reachability $\tilde{C'}_{i+}$ must also be reachable from $\tilde{C}_{0}$.
    Otherwise, both agents must be from generation $i$.
    Either they both remain in generation $i$ or both advance to generation $i+1$, resetting their collision detection states to $\qNaughtDetectCollision$ and so $\tilde{C}'_{i+}=\tilde{C}_{i+}$.
    Thus, $\tilde{C}'_{i+}$ is reachable from $\tilde{C}_{0}$ by \ErrorDetection.

    Therefore, if $C$ is safe $C'$ is also safe.

\subsection{Correctness after a full reset: Proof of \texorpdfstring{\cref{lem:correctness_from_reset}}{Lemma 6.2}}
\label{apx:correctness after reset}
In this subsection, we prove \cref{lem:correctness_from_reset}, showing that the protocol stabilizes to a configuration from $\safe$ quickly following a reset w.h.p.
To achieve this,
    in addition to some lemmas proven in preceding appendices,
    we require the following lemma, which handles the correct transition from a correctly computed ranking into $\safe$.

\begin{lemma}
\label{thm:stable_verify_correctness}
    Let $t$ be a time such that all agents in the population have $\role=\producing$, have $\countdown > 0$, where $(u.\qRanking.\rank)_{u \in [n]}$ is a correct ranking, and where the configuration $(u.\qRanking)_{u \in [n]}$ is a silent configuration for $\Ranking$.
    Then $\ElectLeader$ reaches a configuration in $\safe$ within $\Oh(\nicefrac{n^2}{r} \log n)$ interactions.
\end{lemma}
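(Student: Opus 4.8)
The plan is to show that within $\Oh(\nicefrac{n^2}{r}\log n)$ interactions every agent becomes a verifier, carrying over the (fixed, correct) ranking, and that throughout this process — and hence forever after — $\ErrorDetection$ never emits a $\top$, so that the resulting all‑verifier configuration lies in $\safe$.

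\emph{All agents become verifiers.} Since $(u.\qRanking)_{u\in[n]}$ is silent for $\Ranking$, the $\qRanking$ component — in particular the rank it computes — never changes again, and as that ranking is correct these ranks form a permutation of $[n]$. As long as every agent is still a ranker, the only state change effected by $\ElectLeader$ is that each ranker--ranker interaction decrements two $\countdown$ fields; hence if after $t$ interactions no agent has ever become a verifier, then every agent has been in fewer than $C_{max}$ interactions, so $2t < n\,C_{max}$. Thus within $\tfrac12 n\,C_{max} = \Oh(\nicefrac{n^2}{r}\log n)$ interactions the first agent's $\countdown$ reaches $0$ and it turns into a verifier, copying its now‑fixed, correct $\qRanking.\rank$ into $\rank$ and setting $\qStableVerify \gets \qNaughtStableVerify$. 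From then on ``being a verifier'' spreads as a one‑way epidemic, since in Protocol~\ref{pr:wrapper} any ranker meeting a verifier immediately becomes a verifier; by \cref{lemma:epidemic}, within a further $\cepi n\log n$ interactions all agents are verifiers, w.h.p. As $r<n/2$ gives $\nicefrac{n^2}{r}\log n \ge 2n\log n$, this is $\Oh(\nicefrac{n^2}{r}\log n)$ interactions in total, w.h.p. Every agent that becomes a verifier receives a distinct value of the permutation as its $\rank$, so once all are verifiers $(u.\rank)_{u\in[n]}$ is a correct ranking, establishing property~(a) of \cref{lem:safe_set_is_safe}.

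\emph{No $\top$ is ever produced and generations stay frozen.} Let $g_0$ be the generation assigned by $\qNaughtStableVerify$. I would show, by induction over the interactions occurring after the first verifier appears, that (i) every verifier is in generation $g_0$ and (ii) no agent is ever in state $\top$. The point that makes the \soundness{} half of \cref{prop: error detection} applicable despite agents entering the collision‑detection component at different times is this: a verifier initialised to $\qNaughtStableVerify$ that has so far interacted only with rankers is still in state $\qNaughtStableVerify$. Hence, at every moment, the $n$‑agent $\ErrorDetection$‑configuration in which each already‑verifying agent keeps its current $(\rank,\ErrorDetection\text{-state})$ and each still‑ranking agent is placed at its eventual rank with state $\qNaughtDetectCollision$ is reachable from $\tilde{C}_0$ by $\ErrorDetection$: just schedule exactly the $\ErrorDetection$‑interactions the real execution performs among verifiers, while the still‑ranking agents sit idle and retain $\qNaughtDetectCollision$ (legitimate, as $\ErrorDetection$ forces no interactions). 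In particular, the interaction in which a ranker turns into a verifier and then runs $\ErrorDetection$ against the verifier it met is precisely the first step taken by a hitherto‑idle $\qNaughtDetectCollision$ agent in the $\tilde{C}_0$‑execution. Since $\top$ is unreachable from $\tilde{C}_0$ by \soundness{}, (ii) follows; and since a $\top$ is the sole trigger of a soft reset (which would increment $\generation$) or of a call to \TriggerReset, and every generation stays equal to $g_0$ (so the epidemic generation‑advance rule of Protocol~\ref{pr:framework} never fires), (i) follows as well.

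\emph{Conclusion and main obstacle.} At the time $T = \Oh(\nicefrac{n^2}{r}\log n)$ at which all agents are verifiers, all of them are in generation $g_0$, the ranking is correct, and — by the same mimicking argument, now with every agent joined — the genuine $\ErrorDetection$‑configuration $\tilde{C}$ at time $T$ is reachable from $\tilde{C}_0$. Choosing $i \equiv g_0 - 1 \pmod 6$ makes every $\generation$ field equal to $i+1 \pmod 6$, leaves the set of agents with $\generation = i$ empty (so the probation‑timer clause is vacuous and $\tilde{C}_{i+} = \tilde{C}$), and gives that $\tilde{C}_{i+}$ is reachable from $\tilde{C}_0$; thus the configuration at time $T$ lies in $\safe$, as claimed. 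I expect the delicate step to be the reachability transfer of the previous paragraph — carefully matching the staggered appearance of verifiers against an idling‑agents execution started from $\tilde{C}_0$, so that \cref{prop: error detection}'s \soundness{} guarantee can be invoked; the remainder is a pigeonhole bound, an epidemic estimate, and routine inspection of Protocols~\ref{pr:wrapper} and~\ref{pr:framework}.
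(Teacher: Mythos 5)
Your proposal is correct and follows essentially the same route as the paper's proof: a counting/pigeonhole argument on the $\countdown$ fields produces the first verifier within $\Oh(\nicefrac{n^2}{r}\log n)$ interactions, the verifier role then spreads as an epidemic, and the verifier-only interactions are matched against an $\ErrorDetection$ execution started from the all-$\qNaughtDetectCollision$ configuration on the correct ranking so that \soundness{} (\cref{prop: error detection}) rules out any $\top$, reset, or generation change, placing the final all-verifier configuration in $\safe$. Your explicit verification of condition~(b) of \cref{lem:safe_set_is_safe} via the choice $i \equiv g_0 - 1 \pmod 6$ is a slightly more careful bookkeeping step than the paper spells out, but it is the same argument.
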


\begin{proof}
    From time $t$, as long as all $\countdown$ fields are larger than $0$,
        the only thing that happens is that agents decrement the $\countdown$ on every interaction.
    As the maximum value of $\countdown$ is $C_{max} = \Oh(\nicefrac{n}{r} \log n)$, by \cref{lem: full population technical},
        within $\Oh(\nicefrac{n^2}{r} \log n)$ interactions, some agent's $\countdown$ value reaches $0$.
    At that time, that agent becomes a verifier.
    This spreads through the population as an epidemic (possibly with spontaneous infections as other agent's independently have their $\countdown$ hit $0$).
    Thus, after at most $\cepi n\log{n}$ further interactions of the overall population, every agent either is a verifier with the same rank as at the beginning of the epidemic, or a reset was triggered, with probability $1-o(n^{-2})$.
    However, as we will demonstrate, the reset is not triggered w.h.p., such that protocol enters a safe configuration w.h.p.

    Upon first becoming a verifier, all agents have $\generation=0$ and $\qVerifying=\qNaughtDetectCollision$.
    For a reset to occur, an agent must attempt to change its $\generation$.
    As all agents begin in generation $0$, this means that a $\top$ must be generated by $\ErrorDetection$ when two verifiers of the same generation interact.
    When a ranker meets a verifier, the ranker becomes a verifier with $\qVerifying$ initialized to $\qNaughtDetectCollision$.
    As no agent changes its $\qRanking$ field before becoming a verifier, the sequence of interactions between exclusively verifiers corresponds exactly to a sequence that could occur from the configuration of $\ErrorDetection$ obtained by initializing it's state to $\qNaughtDetectCollision$ on the ranking at the time of the first agent becoming a verifier.
    By \cref{prop: error detection}, no such sequence of interactions can generate the state $\top$, and hence no reset can be triggered.

    Once all agents enter the verifier state, the configuration is in $\safe$ and thus safe.
    Therefore, with probability $1-o(n^{-2})$ within $\Oh(\effn)+\cepi n\log{n}=\Oh(\effn)$ interactions of an awakening configuration the population reaches a safe configuration.
\end{proof}
With that established we present the following proof.
\begin{proof}[Proof of \cref{lem:correctness_from_reset}]
From a partially triggered configuration (i.e., one where a reset has just been called), by \autoref{lemma:propagatereset} we obtain a fully-dormant configuration within $\Oh(n\log{n})$ interactions w.h.p.
From a fully dormant configuration, either some agent has left $\ranking$ or by \autoref{lem:fast-and-silent-ranking}, within $\crank\nicefrac{n^2}{r}\log{n}$ interactions there exists a configuration where all agents have a unique rank and our ranking sub-protocol has become silent w.h.p.
From a configuration where all agents are dormant and $\resetting$, or $\ranking$ the only way for the first agent to leave $\ranking$ is for its $\countdown$ field to hit $0$.
However, for it to have run out before $\crank\nicefrac{n^2}{r}\log{n}$ interactions have taken place, would require some agent to interact $C_{\text{max}}\nicefrac{n}{r}\log{n}$ times within that same period.
If we take $C_{\text{max}}>>\crank$, by application of \autoref{lem: full population technical}, this does not occur w.h.p.
From this configuration where all agents are in $\ranking$, have unique ranks and the ranking sub-protocol has become silent, all agents reach a configuration in $\safe$ within $\Oh(\nicefrac{n^2}{r}\log{n})$ interactions by \autoref{thm:stable_verify_correctness}.
\end{proof}

\subsection{Recovery: Deferred lemmas for \texorpdfstring{\cref{lem:recovery}}{Lemma 6.3}}\label{apx:lem:recovery}
In this subsection we present the proof of \cref{lem:recovery}, showing that from an arbitrary configuration the protocol reaches the safe set quickly or a reset occurs w.h.p.
Recall the hierarchy of configuration sets we defined in the text of the proof of \cref{lem:recovery}:
$\cC \eqqcolon \cC_0 \supset \cC_1 \supset \cC_2 \supset \cC_3 \supset \cC_4 \supset \cC_5$ is a hierarchy of configuration sets containing exactly those configurations where the following properties are true:
\begin{itemize}[noitemsep]
    \item $\cC_1$: All agents are rankers or verifiers (i.e., no agent is a resetter).
    \item $\cC_2$: All agents are verifiers.
    \item $\cC_3$: Additionally, all $\generation$ values are equal.
    \item $\cC_4$: Additionally, all agents have $\probationTimer = 0$.
    \item $\cC_5$: Additionally, all $\rank$ fields are all distinct, i.e., the ranking is correct.
\end{itemize}

\begin{lemma}\label{lem:recovery:resetting}
    Starting from a configuration in $\cC \setminus \cC_1$, $\ElectLeader$ reaches a configuration with no agents with role $\resetting$ within $\Oh(n\log{n})$ interactions w.h.p.
\end{lemma}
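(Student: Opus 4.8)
The plan is to derive the statement almost entirely from the black-box guarantees of $\PropagateReset$ collected in \cref{lemma:propagatereset}, together with the structural fact that in $\ElectLeader$ a reset can only be \emph{triggered} from inside $\StableVerify$, i.e.\ by an interaction between two verifiers (inspecting the pseudocode, neither $\PropagateReset$, nor $\Ranking$, nor the $\ranking \to \verifying$ role change ever calls \TriggerReset). The delicate point is that we begin from an \emph{arbitrary} configuration, so a reset may already be partly propagated while a handful of surviving verifiers keep re-triggering it; the third item of \cref{lemma:propagatereset} already yields ``a configuration with no resetters within $\Oh(n\log n)$ interactions w.h.p.\ \emph{unless another reset is called}'', so essentially all the work lies in showing that the ``unless'' clause cannot keep firing.

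Concretely, I would proceed as follows. Since the starting configuration lies in $\cC \setminus \cC_1$ there is a resetting agent, so by the third item of \cref{lemma:propagatereset}, within $\Oh(n\log n)$ interactions w.h.p.\ either all agents are computing (and we are done) or some agent calls \TriggerReset\ at an interaction $t_1 = \Oh(n\log n)$. In the latter case, the configuration at $t_1$ is one in which an agent has just called \TriggerReset, so the first item of \cref{lemma:propagatereset} produces a fully dormant configuration within a further $\Oh(n\log n)$ interactions w.h.p. The crux is that such a fully dormant configuration cannot spawn a new reset for long enough: in it every agent is a dormant resetter, so there are no verifiers; agents wake up only via \textsc{Reset}, becoming rankers with $\countdown = C_{max}$; rankers never call \TriggerReset, and with no verifier present a ranker can turn into a verifier only once its $\countdown$ reaches $0$. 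Since $\countdown$ drops by at most one per interaction of its owner, \cref{lem: full population technical} and the choice $C_{max} = \Theta(\nicefrac n r \log n)$ with a sufficiently large constant (recall $\nicefrac n r > 2$, so $C_{max} = \Omega(\log n)$ with a freely adjustable constant) ensure that no $\countdown$ reaches $0$ within the next $\Oh(n\log n)$ interactions w.h.p. Hence the third item of \cref{lemma:propagatereset}, applied from the fully dormant configuration, finishes without the ``unless'' clause firing inside its $\Oh(n\log n)$-interaction window, so we reach a configuration with no resetters w.h.p. A union bound over the $\Oh(1)$ many $\Oh(n^{-1})$ failure events completes the argument.

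I expect the main obstacle to be exactly this feedback loop --- a reset in progress keeps some verifiers alive, which re-trigger the reset, and so on. I close it by the observation above: every fresh reset first funnels the whole population through a fully dormant configuration (any triggers fired while a reset is still propagating only add resetters and so cannot delay this), and out of a fully dormant configuration no verifier, hence no new trigger, can re-emerge before the $\Oh(n\log n)$ deadline, because the $\countdown$ timer has length $\Theta(\nicefrac n r \log n) = \Omega(\log n)$. A minor point to flag in the write-up is that the items of \cref{lemma:propagatereset}, stated for configurations arising from genuine triggers, are applied here to configurations that may additionally contain leftover resetters, rankers and verifiers in arbitrary states; this is harmless because every reset count is bounded by $R_{max}$ and the underlying analysis is monotone in that quantity. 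As a cross-check, one can also argue by hand: treat the resetters with positive $\resetCount$ (plus any created by fresh triggers) as sources of a one-way epidemic that makes every agent resetting within $\Oh(n\log n)$ interactions by \cref{lem:triggered-to-dormant} and \cref{lemma:epidemic}, after which no verifier survives and the reset counts simply run down --- essentially reproving what \cref{lemma:propagatereset} already packages.
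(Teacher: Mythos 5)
Your proposal is correct and follows essentially the same route as the paper's proof: invoke the black-box guarantees of $\PropagateReset$ (\cref{lemma:propagatereset}), funnel any newly triggered reset through a fully dormant configuration, and then rule out fresh triggers by observing that only a verifier can call \TriggerReset{} and no ranker's $\countdown = \Theta(\nicefrac{n}{r}\log n)$ can reach $0$ within $\Oh(n\log n)$ interactions w.h.p.\ by \cref{lem: full population technical}. Your explicit handling of the ``feedback loop'' and of applying the reset lemmas from arbitrary leftover states is a slightly more careful write-up of the same argument.
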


\begin{proof}
    It follows from \cref{lemma:propagatereset} that unless a reset is triggered, no agent will be $\resetting$ within $\Oh(n\log{n})$ interactions w.h.p.\
    If a reset is triggered within this time, w.h.p. within $\Oh(n\log{n})$ further interactions there will be a fully dormant configuration.
    From a fully dormant configuration, w.h.p. all agents wake up within a further $\Oh(n\log{n})$ interactions or a reset occurs.
    However, for a reset to occur before all agents awaken requires an agent's $\countdown$ field to hit $0$.
    This in turn requires some agent to have taken part in at least $C_{\text{max}} \cdot \nicefrac{n}{r}\log{n}$ interactions within this period, but by taking $C_{\text{max}}$ sufficiently large, we find this does not occur w.h.p. by application of \autoref{lem: full population technical}.
    Thus, w.h.p. within $\Oh(n\log{n})$ interactions of a reset there exists a configuration containing only $\ranking$ agents.
    Therefore, the statement holds.
\end{proof}

\begin{lemma}\label{lem:recovery:producing}
    Starting from a configuration in $\cC_2 \setminus \cC_1$,
        $\ElectLeader$ triggers a reset or reaches a configuration in $\cC_2$ within $\Oh(\nicefrac{n^2}{r}\log{n})$ interactions w.h.p.
\end{lemma}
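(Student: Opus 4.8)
Taken literally, the hypothesis class $\cC_2 \setminus \cC_1$ is empty: the hierarchy is defined so that $\cC_0 \supset \cC_1 \supset \cC_2 \supset \cdots$, with $\cC_2$ obtained from $\cC_1$ by imposing the additional requirement that no agent is a ranker. Hence $\cC_2 \subseteq \cC_1$, so $\cC_2 \setminus \cC_1 = \emptyset$, and the statement is \emph{vacuously true} — there is no starting configuration to analyze. This is evidently a typographical slip for $\cC_1 \setminus \cC_2$, which is exactly the $i=1$ instance promised in the proof of \cref{lem:recovery} (``from any configuration in $\cC_i \setminus \cC_{i+1}$''), so I describe how I would prove that intended statement.

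A configuration in $\cC_1 \setminus \cC_2$ has no resetters and at least one ranker (the rest being verifiers). The plan is to show that, along any trajectory that does not trigger a reset, every ranker becomes a verifier within $\Oh(\nicefrac{n^2}{r}\log n)$ interactions, so that the population reaches $\cC_2$. The key structural observation is monotonicity of roles: by Protocol~\ref{pr:wrapper}, rankers can only become verifiers (when their $\countdown$ reaches $0$, or upon meeting a verifier), verifiers stay verifiers, and the only way to create a resetter is to call $\TriggerReset$, i.e.\ to trigger a reset. Thus, absent a reset, the configuration remains in $\cC_1$ and the set of verifiers only grows, so reaching $\cC_2$ is equivalent to eliminating the last ranker.

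I would split into two regimes according to whether a verifier is present. If at least one verifier exists, the $\verifying$ role spreads by a one-way epidemic (the clause $j.\role = \verifying$ in Protocol~\ref{pr:wrapper} forces any interacting ranker to convert), so by \cref{lemma:epidemic} all rankers become verifiers within $\cepi n\log n = \Oh(n\log n)$ interactions w.h.p., placing us in $\cC_2$ (unless a reset fired in the meantime). If no verifier is present, every interaction is between two rankers and hence decrements both participants' $\countdown$ fields; since each $\countdown$ starts at most at $C_{max} = \Theta(\nicefrac{n}{r}\log n)$, the first agent to accumulate $C_{max}$ interactions has its $\countdown$ hit $0$ and becomes a verifier. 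By \cref{lem: full population technical}, within $\Oh(\nicefrac{n^2}{r}\log n)$ interactions every agent has been involved in at least $C_{max}$ interactions w.h.p., so such a verifier appears within this bound, after which we are back in the first regime and the epidemic finishes in a further $\Oh(n\log n)$ interactions. Summing the two phases yields the claimed $\Oh(\nicefrac{n^2}{r}\log n)$ bound.

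The main obstacle is the no-verifier regime, where the only route to the first verifier is $\countdown$ expiry; the crux is to certify that some $\countdown$ reaches $0$ quickly despite an adversarial initial configuration (in which all $\countdown$ fields could equal $C_{max}$ and the underlying ranking could be arbitrary). I would handle this with the interaction-count bound of \cref{lem: full population technical}, invoked with the right constant relative to the one hidden in $C_{max}$, using the fact that while all agents are rankers their interaction count and $\countdown$ decrements coincide exactly, so the counting argument is clean. Care is only needed to confirm that triggering a reset is the sole way the argument can be derailed, which is precisely the role-monotonicity observation above.
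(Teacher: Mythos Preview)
Your diagnosis of the typo ($\cC_2 \setminus \cC_1$ should be $\cC_1 \setminus \cC_2$) is correct, and your argument is essentially the paper's: both rest on role monotonicity in $\cC_1$ and on the fact that a ranker's $\countdown$ is exhausted after $C_{max} = \Theta(\nicefrac{n}{r}\log n)$ own interactions, which \cref{lem: full population technical} guarantees within $\Oh(\nicefrac{n^2}{r}\log n)$ global interactions w.h.p. The only difference is cosmetic: the paper does not split into the ``verifier present / absent'' regimes, observing directly that each interaction of a ranker (in $\cC_1$) either decrements its $\countdown$ or converts it to a verifier, so your epidemic phase is already subsumed.
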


\begin{proof}
    By assumption, the starting configuration contains no resetters, so only rankers and verifiers.
    Thus, for any agent to become a resetter, a reset needs to be triggered.
    As a ranker may become a verifier, but not vice versa (barring a reset),
        the number of verifiers in monotonically increasing until a reset is triggered.
    After at most $\Oh(\effnovern)$ interactions as a ranker, an agent becomes a verifier (because then its $\countdown$ field has reached $0$, if it is not turned into a verifier before then).
    By \cref{lem: full population technical},
        w.h.p., all agents interact $\Omega(\effnovern)$ times within $\Oh(\effnovern)$ interactions.
    Thus, within $\Oh(\effn)$ interactions, w.h.p., either a reset has occurred or all agents are verifiers.
\end{proof}

\renewcommand{\echo}{\textsf{EchoChamber}}

\begin{lemma}\label{lem:recovery:generations}
    Starting from a configuration in $\mathcal{C}_2\setminus \mathcal{C}_3$,
        the protocol either triggers a reset or reaches a configuration in $\cC_3$ within $\Oh(n \log n)$ interactions w.h.p.
\end{lemma}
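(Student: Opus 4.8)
The plan is to fix a window of $W=\Oh(n\log n)$ interactions; if a reset is triggered within $W$ we are done, so we may assume no reset occurs and must show the population reaches $\cC_3$ within $W$. Since no verifier can become a non-verifier absent a reset, ``reaching $\cC_3$'' just means making all \generation\ fields equal. Call it an \emph{increment} when an agent executes one of the two soft resets of Protocol~\ref{pr:framework} (lines~5--8 or~10--12): each increment advances the agent's \generation\ by one modulo $6$ and sets its \probationTimer\ to $P_{max}=\cprobation\cdot\tfrac nr\log n$, and requires the \probationTimer\ to be $0$ at that moment. The first step is to observe that each agent increments \emph{at most once} during $W$: by \cref{lem: full population technical} every agent participates in only $\Oh(\log n)$ interactions in $W$ w.h.p., and since $r\le n/2$ gives $P_{max}=\Omega(\log n)$, choosing $\cprobation$ large enough makes $P_{max}$ exceed each agent's interaction count in $W$, so after its first increment an agent's \probationTimer\ stays positive --- hence its \generation\ is frozen --- for the rest of $W$. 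Consequently, if $g$ is the topmost \generation\ present in some configuration, no agent ever reaches \generation\ $g+2$ (any agent at $g+1$ got there by incrementing and therefore has positive \probationTimer). So the set of \generation s present never wraps around $\Z_6$ during $W$: it stays in a proper cyclic interval, which makes ``adjacent'' and ``at cyclic distance $\ge 2$'' unambiguous, stable notions and gives a well-defined topmost \generation\ that advances by at most one over $W$.

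We dispose separately of the only starting configuration not contained in a proper cyclic interval, the one in which all six \generation s are present. There, every pair of agents whose \generation s are at cyclic distance $\ge 2$ interacts by falling through to line~13 of Protocol~\ref{pr:framework} and triggering a full reset; since on $\Z_6$ every generation has three others at cyclic distance $\ge 2$, with all six generations present there are $\Omega(n)$ such pairs of agents even for the most lopsided distribution, and a short argument following a suitable minority-generation agent then forces a reset within $\Oh(n\log n)$ interactions w.h.p., contradicting our assumption.

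In all other cases the topmost \generation\ $g^\star$ is well-defined from the start, and we run a one-way epidemic. Mark every agent in \generation\ $g^\star$ or $g^\star+1$; this set is nonempty, and since \generation s only increase and never exceed $g^\star+1$, marked agents stay marked. When a marked agent meets an unmarked agent, one of two things happens: either the unmarked agent is in \generation\ $g^\star-1$ with \probationTimer$\,=0$ and the marked one is in \generation\ $g^\star$, in which case the unmarked agent adopts $g^\star$ and joins the marked set (lines~10--12); or the two agents' \generation s are at cyclic distance $\ge 2$ --- or the candidate adopter has positive \probationTimer --- so the interaction falls through to line~13 and triggers a reset. A one-way epidemic reaches every agent within $\Oh(n\log n)$ interactions w.h.p.\ (\cref{lemma:epidemic}; the occasional ``$\top$''-triggered spontaneous increment only helps), so within $W$ either a reset is triggered (contradiction) or every agent becomes marked, i.e.\ lands in \generation\ $g^\star$ or $g^\star+1$. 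If all share $g^\star$ we are in $\cC_3$; otherwise one further application of the same epidemic argument --- now with the marked set being the \generation-$(g^\star+1)$ agents --- within another $\Oh(n\log n)$ interactions either triggers a reset or drives every agent to \generation\ $g^\star+1$, landing in $\cC_3$.

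The main obstacle I anticipate is the bookkeeping modulo $6$: keeping ``the topmost \generation'' well-defined, ruling out wrap-around, and correctly sorting every cross-\generation\ interaction of Protocol~\ref{pr:framework} into ``advances the epidemic'' versus ``triggers a reset''. The probation-timer counting of the first step is precisely what controls this; beyond it the proof is a transition-by-transition inspection of Protocol~\ref{pr:framework} in the spirit of the safety analysis in \cref{apx:lem:safe_set_is_safe}, together with the routine Chernoff and epidemic estimates already used above. The most delicate point is the ``spread-out'' regime in which several \generation s coexist but the distance-$\ge 2$ ones carry only $o(n)$ agents, where the resetting-pair count must be combined with the epidemic argument by carefully following a minority-generation agent.
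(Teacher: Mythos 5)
Your core mechanism is sound and close in spirit to the paper's: the observation that each soft reset recharges $\probationTimer$ to $P_{max}=\Omega(\log n)$, so that w.h.p.\ (via \cref{lem: full population technical}) no agent changes its $\generation$ twice inside an $\Oh(n\log n)$ window unless a full reset intervenes, plays exactly the role of the paper's Observation~\ref{apx: lemma observations}; and your marked-set one-way epidemic replaces the paper's max-epidemic coupling (Claims~\ref{claim: Converge to two generations} and~\ref{claim: Converge to one generation}) in a way that, for the configurations it applies to, is arguably cleaner, since every marked--unmarked interaction provably either converts the lagging agent or falls through to line~13 of Protocol~\ref{pr:framework}.

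The genuine gap is your initial case analysis: the dichotomy ``all six generations present'' versus ``a well-defined topmost generation $g^\star$ exists from the start'' is not exhaustive. A configuration in $\cC_2\setminus\cC_3$ may have its occupied generations split into several arcs of $\Z_6$ with no unique longest gap --- e.g.\ occupied sets $\{0,3\}$, $\{0,2,4\}$, or $\{0,1,3,4\}$ --- so ``the topmost generation'' is simply undefined there, and your epidemic cannot be set up as described; yet these configurations are not covered by your separate all-six argument, which moreover is only sketched (``following a suitable minority-generation agent'') and is weakest precisely when the far-away generations carry $o(n)$ agents, the regime you yourself flag as unresolved. This is exactly the situation the paper's Claim~\ref{claim: 3 generations} disposes of without any notion of a topmost generation: pigeonhole a plurality generation with $\geq n/6$ agents, note that as long as some agent sits at cyclic distance $\geq 2$ from it there are $\geq n/6$ reset-triggering pairs, and conclude reset-or-collapse to three consecutive generations within $\Oh(n\log n)$ interactions w.h.p.\ --- an argument that is robust to the occupancy pattern changing arbitrarily over time. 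Your proof could be repaired either by importing that pigeonhole step, or by weakening your requirement from ``the topmost generation'' to ``any occupied generation $g^\star$ whose successor $g^\star+1$ is unoccupied'' (which exists unless all six generations are occupied, and suffices for your marking argument because any agent later found at $g^\star+1$ must have incremented inside the window and hence carries a positive $\probationTimer$); but as written the multi-arc case is a hole, not a technicality.
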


\begin{proof}
    Recall that in a configuration in $\cC_2 \setminus \cC_3$ all agents are verifiers, but there are agents with different $\generation$ fields;
        we need to show that the $\generation$s equalize or that a reset is triggered quickly w.h.p.
    To that end, we show the following claims in turn, which taken together imply the lemma.
    \begin{enumerate}[noitemsep]
        \item From a configuration in $\mathcal{C}_2\setminus \mathcal{C}_3$, within $\Oh(n \log n)$ interactions w.h.p., either the population enters a configuration from $C_2$ where all agents belong to at most three generations, all of which are consecutive, or a reset is triggered. \label{claim: 3 generations}
        \item From a configuration in $\mathcal{C}_2$ where all agents are in at most three generations $\{i-2, i-1, i\}$, within $\Oh(n \log n)$ interactions w.h.p., either the population enters a configuration where all agents are in generation $i$ or $i+1$ and all agents in $i+1$ have a probation timer of at least $8\cepi \log{n}$, or a reset is triggered. \label{claim: Converge to two generations}
        \item From a configuration in $C_2$ where all agents are in at most two consecutive generations $i$ and $i+1$, and all agents in generation $i+1$ have a probation timer of at least $8\cepi\log{n}$, the population enters a configuration where all agents are in the same generation or a reset is triggered within $\Oh(n\log{n})$ interactions w.h.p. \label{claim: Converge to one generation}
    \end{enumerate}    
    
    For claim $\ref{claim: 3 generations}$, first notice that all $n$ agents belong to one of the $6$ generations.
    Hence, there is a generation $i$ to which $\geq n/6$ agents belong.
    If there is an agent belonging to a generation differing from $i$ by at least $2$,
        there are at least $n/6$ pairs of agents such that an interaction between them will trigger a reset.
    For the claim not to hold, the protocol must remain in such configurations for at least $cn\log{n}$ interactions for $c>6$. Since each interaction pair is sampled uniformly and independently at random, 
   the event that no pair of agents from non-adjacent generations interact for $cn\log{n}$ interactions can be upper bounded by
    \begin{equation}
     \bc{1-\frac{1}{3(n-1)}}^{cn\log{n}}
        \leq \bc{1-\frac{1}{3n}}^{cn\log{n}}
        \leq n^{-\nicefrac{c}{3}}
        =o(n^{-2})
    \end{equation}

    \noindent
    For claim \ref{claim: Converge to two generations}, we require the following two observations.
    \begin{observation}
    \label{apx: lemma observations}
        W.h.p.\ either each agent changes their generation only once within $\nicefrac{n}{8}\cdot P_{max}$ interactions or a reset is triggered.
    That is because an agent can change its generation only via a soft reset,
        which also resets the agent's probation counter to $P_{max}$,
        so that it will not reach $0$ in that time from w.h.p. (by \cref{lem: full population technical}).
    And while an agent is on probation, it will not perform soft, but full resets.
    \end{observation}
    \begin{observation}
        An agent in generation $m$ can only move to generation $m+1$.
    \end{observation}
    Thus from the configuration in $\mathcal{C}_2$ with all agents in a generation from $\{i-2,i-1,i\}$ within $\nicefrac{n}{24}\cdot P_{max}$ interactions either a reset occurs or all agents belong to a generation from $\{i-2, i-1, i, i+1\}$ w.h.p.
    When two agents interact, either the agents both join the maximum of their two generations, the agents spontaneously advance a generation, or a reset occurs.
    We observe that this is similar to the behaviour of a max epidemic and we shall construct a coupling to make this more explicit.
    
    Let $t$ be the first time the agents enter a configuration from the statement of Claim \ref{claim: Converge to two generations} and let $G_t=(g_{j})_{j \in [n]}$ be the generation numbers of the agents in the configuration at that time. We now define two processes both starting in  configuration $G_0$.
    \begin{itemize}[noitemsep]
        \item For $t' > t$, $G_{t'}=(g_{j,t'})_{j\in[n]}$ are the generation numbers of all agents in the configuration of the protocol at time $t'$.
        \item Further, take $\hat{G}_{t'}=(\hat{g}_{j,t'})_{j \in [n]}$ to be the generation numbers of all agents in the configuration at time $t' > t$ calculated as follows:
            after time $t$, when two agents interact, they just adopt the maximum generation\footnote{Because at this point we are in configurations where all occupied generations lie within a set of three consecutive generations, this is well-defined.} (i.e., they perform a max epidemic).
    \end{itemize}
    The processes are coupled via identity coupling (the same pair interacts in both processes).
    After each interaction, for each of the two processes the following holds:
    \begin{itemize}[noitemsep]
        \item either a reset has been triggered, 
        \item an agent has advanced to generation $i+2$, or 
   \item  all agents in the original process have a generation at least as large $(\bmod 6)$ as they do in the max-epidemic version.
   \end{itemize}
   We wish to show that after $\nicefrac{n}{24}\cdot P_{max}$ interactions, either a reset has occurred for $G_{t'}$, a reset has occured in both processes,  or all agents are in generation $i$ or $i+1$.

   By our observations (see \ref{apx: lemma observations}) w.h.p. the following holds: if an agent has advanced to generation $i+2$ in $G_{t'}$, the process will have performed a reset  within that time.
   Furthermore, since $\nicefrac{n}{24}\cdot P_{max} \gg \cepi n\log{n}$ the max-epidemic will have succeeded, meaning w.h.p. all agents of $\hat{G}$ will be in generation $i$. To summarize, 
   w.h.p. either a reset has occurred or all agents of $G_{t'}$ are in generation $i$ or $i+1$.
    Furthermore, any agent in generation $i+1$ must have advanced within the last $\nicefrac{n}{24}P_{max}$ interactions. 
    Therefore, any agent in generation $i+1$ has either taken part in $P_{max}-8\cepi\log{n}$ interactions within the $\nicefrac{n}{24}P_{max}$ interactions or has a $\probationTimer$ of at least $8\cepi \log{n}$.
    This gives the claim.
    By application of \cref{lem: full population technical} we find that this has occurred for any agent has probability $o(n^{-2})$.
    Thus,  Claim \ref{claim: Converge to two generations} holds.

    It remains to show Claim 3. If all agents are in generation $i$ or all agents are in generation $i+1$ claim \ref{claim: Converge to one generation} is immediate.
    If not, then a near identical argument as the one for Claim \ref{claim: Converge to two generations} holds.
    The highest generation number will w.h.p. spread by an epidemic which will conclude before the probation counter of any agent in generation $i+1$ runs out, permitting it to advance to generation $i+2$ without triggering a reset.
    Thus, w.h.p.\ either $i+1$ remains the highest generation and all agents belong to it or a reset is called within $\cepi n\log{n}$ interactions.
    Thus, Claim \ref{claim: Converge to one generation} holds, which finishes the proof of the lemma.
\end{proof}

\begin{lemma}\label{lem:recovery:probation}
    Starting from a configuration in $\mathcal{C}_3\setminus \mathcal{C}_4$,
        the protocol either triggers a reset or reaches a configuration in $\cC_4$ within $\Oh(\nicefrac{n^2}{r} \log n)$ interactions w.h.p.
\end{lemma}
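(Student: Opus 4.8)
The plan is to run a race between the probation timers reaching $0$ and a generation change occurring. I would first record one structural fact about $\ErrorDetection$: inspecting Protocols~\ref{pr:error}, \ref{pr:consistency}, \ref{pr:updatemsg} and~\ref{pr:loadbalance}, whenever a $\top$ is produced in an interaction it is produced for \emph{both} participants. Hence, by lines~5--8 of $\StableVerify$ (Protocol~\ref{pr:framework}), any $\top$ produced while at least one participant has $\probationTimer>0$ triggers a full reset. Consequently a \emph{clean} soft reset --- a generation advance not accompanied by a full reset --- can occur only when two agents that both have $\probationTimer=0$ interact (or, via line~11, a $\probationTimer=0$ agent meets a successor-generation agent).

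Starting at time $t_0$ from a configuration in $\cC_3\setminus\cC_4$ --- all agents verifiers sharing a common generation $g$, some with positive probation timers --- let $t^\star$ be the first time at which (i)~a full reset is triggered, (ii)~a clean soft reset occurs, or (iii)~every $\probationTimer$ equals $0$. Before $t^\star$ no generation changes and, by the structural fact, no $\top$ is produced, so every agent merely decrements its $\probationTimer\le P_{max}=\cprobation\cdot\tfrac nr\log n$; by \cref{lem: full population technical} (with $\cprobation$ a sufficiently large constant) option~(iii) would otherwise be reached within $\Oh(\tfrac{n^2}{r}\log n)$ interactions, so $t^\star-t_0=\Oh(\tfrac{n^2}{r}\log n)$ w.h.p. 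If~(i) holds at $t^\star$ we are done; if~(iii) holds then, since no generation changed before $t^\star$, the configuration at $t^\star$ has all agents verifiers in generation $g$ with $\probationTimer=0$, i.e.\ it lies in $\cC_4$.

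Case~(ii) is the crux. At $t^\star$ two agents advance to generation $g+1$, each with $\qDetectCollision=\qNaughtDetectCollision$ and $\probationTimer=P_{max}$. I would then show that within a further $\cepi n\log n$ interactions, w.h.p., either a full reset is triggered or all agents lie in generation $g+1$: membership of generation $g+1$ spreads via line~11 as a one-way epidemic (completing in $\cepi n\log n$ interactions by \cref{lemma:epidemic}), while a generation-$g$ agent with positive $\probationTimer$ that meets a generation-$(g+1)$ agent triggers a full reset via line~13, and any $\top$ among generation-$(g+1)$ agents also triggers a full reset since for $\cprobation$ large those timers cannot drain in $\Oh(n\log n)$ interactions. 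When the epidemic completes, every agent is in generation $g+1$ with a still-positive $\probationTimer$ and entered generation $g+1$ with $\ErrorDetection$-state $\qNaughtDetectCollision$. Now split on the ranking. If it is correct, the $\ErrorDetection$ interactions among the generation-$(g+1)$ agents constitute a valid execution of $\ErrorDetection$ from the all-$\qNaughtDetectCollision$ configuration on a correct ranking (an agent joining later behaves exactly as a $\qNaughtDetectCollision$ agent that has not yet interacted), so by soundness (\cref{prop: error detection}, per group as in its proof) no further $\top$ ever appears, the probation timers count down undisturbed, and $\cC_4$ is reached within $\Oh(\tfrac{n^2}{r}\log n)$ further interactions by \cref{lem: full population technical}. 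If the ranking is incorrect, robust completeness (\cref{prop: error detection}) forces a $\top$ within $\Oh(\tfrac{n^2}{r}\log n)$ interactions; taking $\cprobation$ large relative to the robust-completeness constant, \cref{lem: full population technical} keeps the decrement of the earliest-joined timer below $P_{max}$, so this $\top$ is produced while both participants still have positive $\probationTimer$, hence triggers a full reset. A union bound over the $\Oh(1)$ high-probability events, plus addition of the $\Oh(\tfrac{n^2}{r}\log n)$ time bounds, yields the lemma.

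The main obstacle I expect is exactly case~(ii): ruling out a long chain of successive soft resets that would blow up the running time. The resolution is that the very first soft reset forces the generation advance to propagate over the \emph{entire} population as an epidemic, which simultaneously (a)~triggers a full reset if any agent with a live probation timer is caught in the old generation and (b)~re-synchronizes every $\ErrorDetection$-state to $\qNaughtDetectCollision$; afterwards \cref{prop: error detection} leaves exactly one of the two target outcomes, and the constant $\cprobation$ in $P_{max}$ is chosen large enough that a freshly set probation timer cannot expire within the $\Oh(\tfrac{n^2}{r}\log n)$ interactions the whole argument spans.
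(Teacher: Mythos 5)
Your proposal is correct and follows essentially the same route as the paper's proof: a race between the probation timers draining (giving $\cC_4$) and a $\top$ being generated, with any early $\top$ causing either an immediate full reset or a single soft-reset wave that spreads the new generation as an epidemic, after which fresh timers plus soundness (correct ranking) or robust completeness (incorrect ranking) of $\ErrorDetection$ force either $\cC_4$ or a full reset. Your additional observations (the both-participants-$\top$ fact organizing the case split, and the $\Oh(\nicefrac{n^2}{r}\log n)$ rather than $\Oh(\nicefrac{n}{r}\log n)$ count of global interactions needed to drain the timers) only sharpen the same argument.
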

\begin{proof}Recall that in configurations in $\cC_3 \setminus \cC_4$, all agents are verifiers and belong to the same generation.
    Hence, until a reset occurs, each agents decrements its $\probationTimer$ in every interaction, if possible.
    So by \cref{lem: full population technical}, all probation timers will reach $0$ or a reset is triggered within $\Oh(\effnovern)$ interactions.
    If there was no reset and the output is correct, the configuration is now safe.
    However, if an agent generates $\top$ before this occurs,
        it must either trigger a reset immediately or advance a generation. 
    News of this new generation spreads as a broadcast and so within $\Oh(n\log{n})$ interactions either all agents are in the new generation or a reset is triggered with probability $1-o(n^{-2})$.
    Upon entering the new generation each agent resets their probation timer to its maximum value.
    If the output is correct, the configuration is now safe.
    If the output is not correct, w.h.p.\ the verifier must produce $\top$ within $\Oh(\effn)$ interactions of the overall population by \cref{prop: error detection}.
    However, by \cref{lem: full population technical} and choice of $\cprobation$, with probability $1-o(n^{-2})$ no agents' $\probationTimer$ will have reached $0$ at that time, so a reset would then be triggered as all agents have probation timers greater than $0$.
    Thus, the claim holds.
\end{proof}

\begin{lemma}\label{lem:recovery:incorrect_output}
    Starting from a configuration in $\mathcal{C}_4\setminus \mathcal{C}_5$,
        the protocol triggers a reset within $\Oh(\nicefrac{n^2}{r} \log n)$ interactions w.h.p.
\end{lemma}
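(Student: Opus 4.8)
The plan is to follow the $\generation$ fields through one soft reset and apply the robust completeness of $\ErrorDetection$ twice. Fix a configuration in $\cC_4 \setminus \cC_5$: all agents are verifiers, share a common generation $i$, have $\probationTimer = 0$, and their ranking contains a duplicate rank. Ranks and roles change only via a full reset, while $\probationTimer$ and $\generation$ change only via a soft or full reset (Protocol~\ref{pr:framework}); hence, as long as no reset of either kind has occurred, all agents stay verifiers in the common generation $i$ with $\probationTimer = 0$, so every interaction invokes $\ErrorDetection$ (lines 3--4) and the $\vState$ fields evolve exactly as $\ErrorDetection$ would in isolation. So if no reset occurred within the first $T := \Oh(\nicefrac{n^2}{r}\log n)$ interactions, then since the ranking remains incorrect throughout, robust completeness (\cref{prop: error detection}(b)) would force some agent to reach $\top$ within $T$ interactions --- but with $\probationTimer = 0$ this triggers a \emph{soft} reset (the first branch of lines 5--8), a contradiction. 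Thus w.h.p.\ a reset occurs within $T$ interactions; a full reset finishes the proof, so assume the first reset is a soft reset, after which one agent sits in generation $i+1$ with $\probationTimer = P_{max}$ while all others remain in generation $i$ with $\probationTimer = 0$.

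Second, I would argue that within $\Oh(n\log n)$ further interactions, w.h.p., either a full reset occurs or every agent is in generation $i+1$. Each agent still in generation $i$ still has $\probationTimer = 0$, so meeting a generation-$(i+1)$ agent makes it adopt generation $i+1$ (lines 10--12), and producing $\top$ in a same-generation interaction makes it soft-reset into generation $i+1$ as well; either way generation $i+1$ spreads as a one-way epidemic with possible spontaneous infections, so by \cref{lemma:epidemic} all agents reach a generation $\geq i+1$ within $\cepi n\log n$ interactions w.h.p. No agent overshoots to generation $i+2$: a generation-$(i+1)$ agent advances only by producing $\top$ with $\probationTimer = 0$, but by \cref{lem: full population technical} every agent interacts $\Oh(\log n)$ times in this window, so its probation timer stays at least $P_{max} - \Oh(\log n) > 0$ once $\cprobation$ is a large enough constant (recall $P_{max} = \cprobation \cdot \tfrac{n}{r}\log n \geq 2\cprobation\log n$ since $r < n/2$). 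Consequently, if a generation-$(i+1)$ agent ever produces $\top$ in this window it has positive probation timer and calls $\TriggerReset$ (the second branch of lines 5--8), a full reset; and since the only generations present are $i$ and $i+1$, interactions between them always resolve through lines 10--12, so line 13 is never reached. Hence w.h.p.\ we either obtain a full reset (done) or reach a configuration where all agents are in generation $i+1$, each with $\probationTimer \geq P_{max} - \Oh(\log n)$.

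Finally, from such a configuration I would apply robust completeness once more. All agents share generation $i+1$, so $\ErrorDetection$ again runs on every interaction, and the ranking is still incorrect since no full reset has occurred. If no reset occurred within the next $T = \Oh(\nicefrac{n^2}{r}\log n)$ interactions, robust completeness (\cref{prop: error detection}(b)) again forces some agent to reach $\top$; but throughout this window \cref{lem: full population technical} bounds each agent's number of interactions by $\Oh(\tfrac{n}{r}\log n)$, so its probation timer never drops below $P_{max} - \Oh(\log n) - \Oh(\tfrac{n}{r}\log n) > 0$ for $\cprobation$ large enough --- so no soft reset can intervene, and the first $\top$ in this phase is produced by an agent with positive probation timer and triggers a \emph{full} reset (the second branch of lines 5--8). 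Summing the three phases, a full reset is triggered within $\Oh(\nicefrac{n^2}{r}\log n) + \Oh(n\log n) + \Oh(\nicefrac{n^2}{r}\log n) = \Oh(\nicefrac{n^2}{r}\log n)$ interactions w.h.p., by a union bound over the constantly many high-probability events above. The main obstacle is exactly this probation-timer bookkeeping across the soft reset: $\cprobation$ must be chosen large relative to the constants hidden in \cref{lemma:epidemic} and in the $\Oh(\nicefrac{n^2}{r}\log n)$ detection bound, so that timers provably remain positive through both the epidemic phase and the second detection phase --- which is precisely what guarantees that the \emph{second} collision the system detects is met with a hard rather than another soft reset.
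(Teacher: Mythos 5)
Your proof is correct and follows essentially the same route as the paper: a first application of robust completeness (\cref{prop: error detection}) forces a $\top$ that, with zero probation timers, yields a soft reset; the new generation then spreads epidemically in $\Oh(n\log n)$ interactions; and a second application of robust completeness produces another $\top$ while, by \cref{lem: full population technical} and a sufficiently large choice of $\cprobation$, all probation timers are still positive, forcing a full reset. Your version merely spells out the probation-timer bookkeeping and the impossibility of overshooting to generation $i+2$, which the paper's proof leaves implicit.
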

\begin{proof}
    By \cref{prop: error detection} a $\top$ must be generated within $\Oh(\effn)$ interactions w.h.p.
    This will cause the generating agent to advance to the next generation, which spreads via broadcast.
    Thus, after $O(n\log{n})$ interactions, w.h.p., all agents will be in the new generation or a reset has been triggered.
    If all agents enter the new generation, then by \cref{prop: error detection}, a $\top$ is again generated within $\Oh(\effn)$ interactions.
    However, by application of \Cref{lem: full population technical} and choice of $\cprobation$, w.h.p., all agents will have a non-zero $\probationTimer$ until this happens, so that a reset is triggered.
\end{proof}

\newpage
\printbibliography

\end{document}